\documentclass[10pt,aps,prx,superscriptaddress,nofootinbib,twocolumn]{revtex4-2}

\usepackage{amsmath}
\usepackage[english]{babel}
\usepackage[T1]{fontenc}
\usepackage[colorlinks,citecolor=blue,linkcolor=blue]{hyperref}
\usepackage[utf8]{inputenc}
\usepackage{bbm}
\usepackage[dvipsnames]{xcolor}
\usepackage{comment}
\usepackage{minitoc}
\usepackage{ulem}
\usepackage{tabularx}

\usepackage{amsfonts}
\usepackage{amssymb}
\usepackage{amsthm}
\usepackage{braket}
\usepackage{dsfont}
\usepackage[dvipsnames]{xcolor}
\definecolor{nice}{RGB}{230,0,230}
\definecolor{darkgreen}{RGB}{50,190,50}
\usepackage{tikz}

\theoremstyle{plain}
\newtheorem{theorem}{Theorem}
\newtheorem{lemma}{Lemma}
\newtheorem{thmcorollary}{Corollary}[theorem]
\newtheorem{lemcorollary}{Corollary}[lemma]

\theoremstyle{definition}
\newtheorem{definition}{Definition}

\theoremstyle{remark}
\newtheorem{remark}{Remark}


\DeclareMathOperator{\Tr}{Tr}





\def\ketbra#1#2{\vert{#1}\rangle\!\langle{#2}\vert}

\begin{document}

\title{Optimal unitary trajectories under commuting target and cost observables; applications to cooling}

\author{Ralph Silva}
\affiliation{Institute for Theoretical Physics, ETH Zurich, Wolfgang-Pauli-Str. 27, 8093 Z\"urich, Switzerland}
\author{Pharnam Bakhshinezhad}
\affiliation{Vienna Center for Quantum Science and Technology, Atominstitut, TU Wien, 1020 Vienna, Austria}
\author{Fabien Clivaz}
\email{fabien.clivaz@technikum-wien.at}
\affiliation{Vienna Center for Quantum Science and Technology, Atominstitut, TU Wien, 1020 Vienna, Austria}
\affiliation{Fachhochschule Technikum Wien, H\"ochst\"adtplatz 6, 1200 Wien}

\begin{abstract}

The preparation of quantum states, especially cooling, is a fundamental technology for nanoscale devices. The past decade has seen important results related to both the limits of state transformation and the limits to their efficiency --- the quantum versions of the third and second law of thermodynamics. The limiting cases always involve an infinite resource cost, typically machine complexity or time. Realistic state preparation takes into account both a finite size of the machine and constraints on the operations we can perform. In this work, we determine in full generality the optimal operation for a predominant quantum paradigm: state transformation under a single unitary operation upon a finite system, in the case where the observables corresponding to the target (such as ground state probability) and cost (such as dissipation) commute. We then extend this result to the case of having a third, commuting, globally conserved quantity (such as total energy). The results are demonstrated with the paradigmatic example of ground state cooling, for both arbitrary and energy-preserving unitary operations.

\end{abstract}

\maketitle 

\section*{Introduction}

The control of quantum states, particularly the preparation of pure/cold states is of indisputable practical interest. Pure states are necessary to perform accurate measurements \cite{Guryanova_2020}. They are the backbone of quantum computation \cite{nielsen2010quantum}, and play a key role in precise time-keeping \cite{Xuereb_2023,SchwarzhansLockErkerFriisHuber2021} and entanglement generation \cite{Bohr-NJP-2015,Diotallevi-NJP-2024}. In addition, understanding the limitations to cooling gives us insight into thermodynamic laws that govern quantum systems \cite{Masanes_2017_Temperature}.

The first limitation to cooling is \textit{attainability}, which concerns determining the states that can be reached within a given framework. Significant progress has been made toward a more concrete understanding of Nernst’s unattainability principle \cite{Nernst-1906}—also known as the third law—by exploring the coldest achievable temperatures under reasonable assumptions \cite{Allahverdyan-2011, Clivaz-2019, Alhambra-2019, Boykin-2002, Park-2016, Rodriguez-Briones-2017, Rodriguez-Briones-2016, Schulman-1999,Taranto_2020ExponentialImprovement}. While calculating the exact attainable temperature is often overly ambitious, progress is typically illustrated through lower bounds on temperature \cite{Masanes-2017, Wilming-2017, Scharlau-2018,Raeisi-2015}, which serve to refine the third law, and upper bounds, which are realized through specific protocols \cite{Boykin-2002,Park-2016,Raeisi-2015,Rodriguez-Briones-2017,Schulman-1999,soldati2024}. In some instances, a protocol achieves the lower bound, demonstrating its tightness and providing a tangible method for reaching the lowest possible temperature \cite{Rodriguez-Briones-2016, Clivaz-2019, Alhambra-2019}.

Complementary to attainability is the concept of \textit{efficiency}. Cooling requires interactions with an auxiliary system, and inevitably results in heat dissipation into the environment. This issue is critical: even with advancements enabling the control of larger systems -- such as those composed of many qubits -- ineffective heat management can introduce significant noise, severely degrading the performance of the intended quantum task.

Taking the energy cost into account allows for a sharper understanding of the second law of thermodynamics \cite{Brandao-2015, Reeb-2014, Mueller-2018,Cwiklinski-2015}, which itself provides a lower bound on the energy consumption for cooling a quantum system. Protocols have been derived that saturate said lower bound \cite{Landauer_1961}, however all of these suffer from a significant drawback: they assume access to infinite resources of some kind, either an infinitely sized auxiliary system, or an infinite time allowing for an unbounded number of operations \cite{Reeb-2014, Skrzypczyk-2014, Taranto-2021, Clivaz-2019bis}.

In realistic scenarios, such idealized resources are unavailable, prompting growing interest in approaches constrained by finite resources. Recent studies have addressed this by limiting dimensional control and exploring various operations to establish upper bounds, such as the Reeb-Wolf bound or methods based on thermodynamic length \cite{taranto_2024_FiniteResources, lipkabartosik2024MInDissipationinInteracting, Rolandi_2023}. While these provide optimal protocols, they apply in the regime of sufficiently large time steps\footnote{`time' here denotes the number of operations upon the system} or numbers of qudits.

Thus a crucial gap in this direction is to provide optimally efficient protocols for finite-size machines under a single or small number of time steps. In this work we take an important step in filling this gap for the case of a single-shot unitary operation. While motivated by cooling we in fact prove 
the following, more general result. Take a finite system, a target observable for which we have a desired final expectation value, and a cost observable whose desired value we wish to minimise, and which commutes with the target observable. In the case of cooling, the target and cost are respectively the projector upon the ground state subspace of the system (target) and the Hamiltonian of system and machine (cost). We construct the unitary that minimises the cost value given the target value. Moreover, we apply this process to every possible value of the target, producing a trajectory of states and unitaries that spans the entire set of attainable target values. We refer to this as the optimal trajectory.

In many cases, such protocols are subject to additional constraints imposed by global conservation laws. For example, in the context of cooling, the total energy operator is a common conserved quantity, requiring any applied unitary operation to commute with it. After having established our main result, we proceed to extend said scenario to account for these conservation laws, provided the conserved quantity also commutes with both the target and cost observables. We finally solve the above generalised scenario.

\begin{table}
\centering
	\begin{tabular}{ |c||c|c| } 
		\hline
		&\multicolumn{2}{|c|}{Result type}\\
		\hline
		& No-go & Constructive\\
		\hline \hline
		Asymptotically optimal & \cite{Brandao-2015, Mueller-2018, Cwiklinski-2015} & \cite{Reeb-2014, Skrzypczyk-2014, Taranto-2021, Clivaz-2019bis}\\
		\hline
		Optimal for every dimension & \cite{Reeb-2014} & Our contribution\\
		\hline
	\end{tabular}
\caption{A selection of literature upon the efficiency of cooling quantum systems: some containing no-go theorems bounding the efficiency, others with constructive protocols, albeit only optimal in an asymptotic sense. We provide the first constructive protocol optimal for every dimension.}
\label{tablegap}
\end{table}

The remainder of the paper is organised as follows. In Section I we set our notation as well as precisely define our setting. In Section II we expose our main result in intuitive terms. In Section III we provide a proof for the protocol of our main result to be optimal in terms of energy expenditure. In Section IV we generalize our scenario to when an extra commuting conserved quantity is taken into account before concluding in Section V.

\section{The setting and working example} \label{sec:settingjointU}

\subsection{Main problem: optimal unitary transformations}

In this paper we look at a quantum system and two commuting observables that act on that system. We are investigate how to unitarily manipulate said quantum system such that the resulting state achieves a desired (average) value when measuring the first observable while at the same time minimising the (average) value when measuring the second. Answering this question in particular answers the following thermodynamic special case: consider a quantum system that you want to cool with the help of another quantum system. Both systems may interact arbitrarily. what is the minimal energy cost to cooling said system to a given temperature? And what interactions achieve this?

We now set the stage and notation of the problem. To this end, let $R$ be a finite-dimensional quantum system with associated Hilbert space $\mathcal{H}_R$ that is initially in state $\rho \in \mathcal{L}(\mathcal{H}_R)$, and let $\mathcal{O}_A$ and $\mathcal{O}_E$ be a pair of commuting (Hermitian) observables on $\mathcal{H}_R$, labelled `target' and `cost' observables respectively. The expectation values of these observables are real-valued functions on the space of density matrices that we denote by $\mathcal{A}[\sigma] = \Tr \left[ \mathcal{O}_A \sigma \right]$ and $\mathcal{E}[\sigma] = \Tr \left[ \mathcal{O}_E \sigma \right]$. We respectively label them `target' and `cost' functions. 

As $[\mathcal{O}_A,\mathcal{O}_E]=0$, there exists a basis $(\ket{0}, \dots, \ket{d-1})$ of $\mathcal{H}_R$, where $d$ is the dimension of $R$, that is a common eigenbasis of $\mathcal{O}_A$ and $\mathcal{O}_E$, and which we call the \textit{preferred basis}. It follows that both the target and cost functions are linear with respect to the diagonal elements of any state $\sigma$ in this basis. More specifically, we may write the target and cost functions acting on a general state $\sigma$ of our system as
\begin{align}
    \mathcal{A}[\sigma] &= \Tr \big[ \sum_n a_n \ketbra{n}{n} \,\sigma \big] = \sum_n a_n \,p_n = {\bf a} \cdot {\bf p}\\
    \mathcal{E}[\sigma] &= \Tr \big[ \sum_n E_n \ketbra{n}{n}\, \sigma \big] =\sum_n E_n\, p_n = {\bf E} \cdot {\bf p},
\end{align}
where $p_n=\Tr\left[\ketbra{n}{n} \,\sigma\right]$ are referred to as the \textit{populations} of $\sigma$ and ${\bf p}=(p_0,\dots, p_{d-1})$ as the population vector. We have also introduced ${\bf a}=(a_0,\dots, a_{d-1})$, and ${\bf E}=(E_0,\dots, E_{d-1})$, the vectors of eigenvalues of $\mathcal{O}_A$ and $\mathcal{O}_E$, which we will dub the target and cost vectors respectively. Note that we will at times denote the target and cost functions by $\mathcal{A}({\bf p})$ resp. $\mathcal{E}({\bf p})$ when we will want to emphasize their dependency on ${\bf p}$. With this, our question can be formulated as follows: for an arbitrary given value of $\alpha$, what is
\begin{align} \label{equ:main_question_pop}
     \min_U \; \mathcal{E}[\sigma], \quad \text{s.t. } \mathcal{A}[\sigma]=\alpha,
\end{align}
where $\sigma=U \rho \, U^{\dagger}$ and $U$ is a unitary operation? 

\subsection{Optimal trajectories}

Rather than minimising the cost for a single value $\alpha$ of the target, one can do so for the set of all possible values of the target. The answer gives us a function of $\alpha$ which we label the \textit{minimal cost function} 
\begin{equation} \label{equ:MinimalCostFunction}
      \omega_{\text{opt}}(\alpha):=  \min_{\substack{U \\ \text{s.t. } \mathcal{A}[\sigma]= \alpha}} \mathcal{E}[\sigma].
\end{equation}
To gain a better understanding of the situation, it is essential to identify the states that correspond to the minimum values of the cost function. This can be achieved by examining the arguments, specifically the unitaries, that yield these minimal values. Doing so may however seem like a daunting task: the space of unitaries is a large and non trivial one. Luckily enough, thanks to the Schur-Horn theorem \cite{Marshall-2011}, we can get rid of the unitary dependency in Eq.~\ref{equ:main_question_pop} and reformulate our problem solely in terms of doubly stochastic matrices. This can then furthermore be simplified to a problem depending solely on the populations, see Figure~\ref{fig:differentlevels} for a depiction of the above reformulations. We will mostly work in the population vector reformulation. From the population vector $\bf p$ we can then reconstruct the doubly stochastic matrices, unitaries and states $\{ \sigma(\alpha)\}_{\alpha}$ that achieve the minimal cost. Note however that due to potential degeneracies in the cost and target vectors ${\bf a }$ and ${\bf E} $ there might be multiple states $\sigma$ that achieve the minimal cost for a given $\alpha$.
And so, while the minimal cost function is by definition unique, there may be many choices of state trajectories associated to it. In this work we will make a specific choice of such a state trajectory, associating for every achievable $\alpha$ a unique state $\sigma(\alpha)$. That trajectory will be constructed explicitly in Section~\ref{sec:result} and can be understood as our main result. We will refer to it as \textit{the optimal trajectory}. However, our method of constructing the optimal trajectory gives us a solid understanding of how to construct the other state trajectories associated to the minimal cost function. 

\begin{figure}[t]
\centering
\includegraphics[width=0.9\columnwidth]{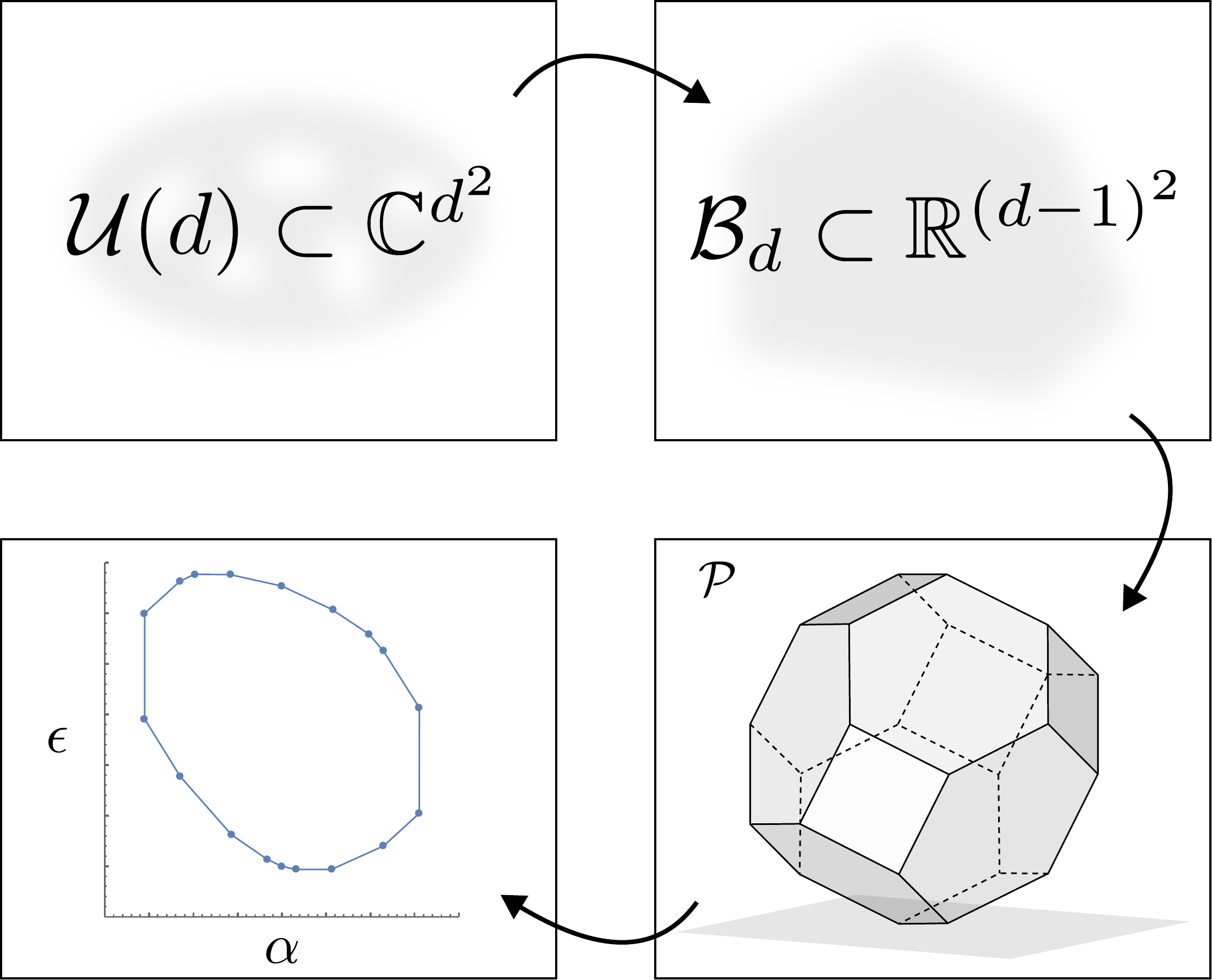}
\caption{A display of the optimisation spaces of the four reformulations of our problem. \textbf{Top left:} The space of all $d$-dimensional unitaries $\mathcal{U}(d)$. This relates to the original version of our problem of Eq.~\ref{equ:main_question_pop}. Note that $\mathcal{U}(d)$ is not simply connected which in intuitively means that $\mathcal{U}(d)$ has (topological) holes and hints at the difficulty of optimising over it. \textbf{Top right:} Thanks to the Schur-Horn and Birkhoff-von Neumann theorems we can reduce our problem to the Birkhoff polytope of doubly-stochastic transformations of dimension $d$. This is a $(d-1)^2$-dimensional subset of the vector space of $d \times d$ real matrices. \textbf{Bottom right:} We can further reduce the problem by considering how the eigenvalues of $\rho$ transform under the Birkhoff polytope, giving us the population polytope $\mathcal{P}$. We here display the population polytope of the four-dimensional vector $\textbf{p}=\frac{1}{10} (1,2,3,4)$, it has dimension 3. For a general $d$, $\mathcal{P}$ has dimension $d-1$ unless $p_0 = \dots = p_{d-1}$. \textbf{Bottom left:} The induced polytope of target vs cost, depicted here for our working example, see also Fig.~\ref{fig:polytopeworkingexample}. Each simplification is a many to one mapping (black arrows). We construct our solution primarily in the population polytope, demonstrating the optimality in the induced polytope and constructing corresponding trajectories in $\mathcal{B}_d$ and $\mathcal{U}(d)$.
}
\label{fig:differentlevels}
\end{figure}

\section{Main result}\label{sec:result}

In this section we state our main result, namely the construction of the optimal trajectory, the proof of which is detailed in the section~\ref{sec:proofmain}. Section~\ref{sec:GeneralisedScenario} will be dedicated to looking at a generalisation of our scenario for when an additional commuting conserved quantity enters the picture. 

\textbf{Ordering the basis states w.r.t. target and cost}. The result is simplest described when the preferred basis states $\{\ket{i}\}_i$ are ordered so that
\begin{enumerate}
    \item the coefficients of the target function $\mathcal{A}[\sigma] = \sum_i a_i\, p_i$ are in increasing order
    \begin{align} \label{eq:coefford}
        a_0 \leq a_1 \leq ... \leq a_{d-1},
    \end{align}
    \item for every set of degenerate (equal) target coefficients, i.e. $a_i = ... = a_{i+m}$, the coefficients of the cost function are in increasing order,
    \begin{align}
        E_i \leq E_{i+1} \leq ... \leq E_{i+m}.
    \end{align}
\end{enumerate}
In other words, the basis is ordered primarily w.r.t. increasing values of the target, and secondarily w.r.t. increasing values of the cost.

\subsection{The optimal trajectory}

\textbf{Minimal point of the optimal trajectory.} A final state $\sigma$ that is at the minimal point of the trajectory --- i.e. $\mathcal{A}[\sigma] = \alpha_{min}, \mathcal{E}[\sigma] = \omega_{\text{opt}}(\alpha_{min})$ --- is diagonal w.r.t. the preferred basis, with the populations arranged in decreasing order w.r.t. the ordering chosen above. As the eigenvalues $\{\lambda_n\}_n$ of the state and the diagonal elements are the same for diagonal states, this choice ensures that the largest eigenvalue contributes to the smallest target value $a_0$, the second largest to the smallest and so on. Furthermore, the secondary ordering w.r.t. the cost function ensures that between states of equal target value, the largest eigenvalue is associated with the state of minimal cost, and so on.

The state $\sigma$ at the minimal point is unique if and only if one of the following statements holds.
\begin{enumerate}
    \item The $a_i$'s are all non-degenerate, i.e. $a_i \neq a_j$.
    \item  Point 1. fails and among degenerate blocks of $a_i$'s the $E_i$'s are non-degenerate, i.e.
    \begin{align}
        a_i = a_j \quad\Longrightarrow\quad E_i \neq E_j.
    \end{align}
    
    \item Point 1. and point 2. fail and the arrangement of eigenvalues as described above leads to degenerate $\lambda_i$'s among blocks degenerate w.r.t. both $a_i$'s and $E_i$'s, i.e.
    \begin{align}
        \left( a_i = a_j \quad\text{\small AND}\quad E_i = E_j \right) \quad\Longrightarrow\quad \lambda_i = \lambda_j.
    \end{align}
\end{enumerate}

\textbf{Single step in the optimal trajectory.} Starting from the minimal point, the optimal trajectory is composed of a sequence of two-level swaps, each of which is selected as follows.
\begin{enumerate}
    \item List all of the two-level swaps of \textit{adjacently-valued} populations that would increase the target value, i.e. of a pair of states $\{\ket{k},\ket{l}\}$ with populations $p_k < p_l$ s.t. the following two conditions are satisfied:
    \begin{itemize}
        \item there is no state $\ket{j}$ with $p_k < p_j < p_l$,
        \item $a_k > a_l$.
    \end{itemize}
    \item Among these pick the pair (or any among equivalent pairs) for which the following quantity is minimal:
    \begin{align}\label{equ:mingradient}
       \frac{E_k - E_{l}}{a_k - a_l}.
    \end{align}
    \item Swap the selected pair, then repeat.
\end{enumerate}

A few things are worth noting upon. Firstly, \eqref{equ:mingradient} represents the gradient of cost vs target for the two-level swap in question, picking the minimum actually corresponds to minimising the gradient of the cost versus target among all the attainable points that increase $\alpha$. In other words the minimal value of Eq.~\ref{equ:mingradient} is the gradient of the minimal cost function $\omega_{\text{opt}}(\alpha)$.

However, a significant simplification of our proof is that not all unitaries, not even all two-level swaps, need be considered, only those between adjacently-valued populations; this follows from what we prove in Sec. \ref{sec:proofmain} --- that such swaps precisely generate the edges of the polytope of reachable population vectors.

\textbf{Maximal point of the optimal trajectory.} The above algorithm ends when there is no swap that increases the value of the target $\alpha$. This corresponds to when $\alpha = \alpha_{\max}$. One can characterize the states that achieve the maximal point of an optimal trajectory in an analogously as those that achieve the minimal point of the optimal trajectory. For the maximal point the eigenvalues are now ordered primarily increasingly w.r.t. increasing target values. Within subspaces with equal target value they are, as for the minimal point, ordered decreasingly w.r.t. the cost.

\textbf{Form of the optimal trajectory and minimal cost function}. As we just saw, for given target and cost observables $\mathcal{O}_A$, resp. $\mathcal{O}_E$, as well as a fixed initial state $\rho$, the optimal trajectory can be constructed via the above selected fixed sequence of 2-level swaps. By appropriate parametrisation of said 2-level swaps one can therefore continuously change the target value $\alpha$ from its initial value $\alpha_{\text{in}}=\mathcal{A}[\rho] \in [\alpha_{\min}, \alpha_{\max}]$ to its desired final value $\alpha_{\text{fin}}$. In doing so one needs only implement one 2-level swap partially at a time, implementing the next 2-level swap only once the current 2-level swap is fully implemented. While implementing a given 2-level swap, the gradient of the minimal cost function $\omega_{\text{opt}}$ is a constant \eqref{equ:mingradient}. The entire function is thus continuous and piecewise linear, as well as convex as we prove later.

Furthermore, at the level of doubly stochastic matrices, the above partial 2-level swap corresponds to a T-transform while at the level of unitaries it translates to a partial rotation in the corresponding two-dimensional subspace.

\textbf{If the initial state is non-optimal.} Depending on the initial state of the system it is possible that the initial point $\{\alpha_0,\omega_0\}$ is not part of the optimal trajectory. This just means that there are states $\rho'$ with the same target value but a lower cost value. In this case a natural choice of protocol from the initial state is to first apply the unitary that takes us to the point on the optimal trajectory that has the initial $\alpha$ value, and then move along the optimal trajectory as desired. There is a freedom in choosing the initial dynamic that depends upon the properties one would like the trajectory from the initial state to the optimal trajectory to have.

\subsection{Working example }
\subsubsection{Ground state cooling}
A paramount example of the above scenario is found in the thermodynamic task of cooling a quantum system $S$ via a joint unitary operation with a second --- usual thermal --- system $M$ that we refer to as machine for obvious reasons. $R$ is here the non-interacting composition of the system $S$ and machine $M$ i.e. the Hilbert space $\mathcal{H}_R = \mathcal{H}_S \otimes \mathcal{H}_M$ and its Hamiltonian is $H_{SM} = H_S \otimes \mathds{1}_M + \mathds{1}_S \otimes H_M$. The initial state of the composite system a product state $\rho_S \otimes \tau_M$, where $\tau_M$ is a thermal state w.r.t. the environment temperature $\beta$.\footnote{A `free' temperature, i.e. one that any system can thermalise to at no cost. We employ the inverse temperature $\beta = (kT)^{-1}$.} A typical target function is the ground state population of $S$, corresponding to the target observable being the projector $\Pi^{(0)}_S$ upon the ground state subspace of $S$, i.e. $\mathcal{O}_A = \Pi^{(0)}_S \otimes \mathds{1}_M$, and the cost function is the average energy of the joint state, corresponding to the observable being the total energy, i.e., $ \mathcal{O}_E=H_{SM}$.\footnote{The cost is in fact the \textit{change} in average energy, but this is independent of the unitary transformation, and can thus be subtracted as a constant at the end.} The target and cost observables commute, and the preferred eigenbasis can be taken to be the product energy eigenbasis. Here the target is simply the final value $\alpha \in [0,1]$ of the population of ground state subspace of $S$ after the unitary transformation. If the ground state is non-degenerate, then this is also referred to as \textit{erasure}, the concentration of population into a single pure state.

We will use this task to illustrate our results and methods throughout the paper. For concreteness, we consider the case where $S$ is a non-degenerate qubit initially in the maximally mixed state, and $M$ is a four-dimensional machine with distinct energy eigenvalues. The initial state of what we will refer to as our working example can thus be expressed as
\begin{align}
    \rho_S \otimes \tau_M &= \frac{\mathds{1}_S}{2} \otimes \sum_{m=0}^3 \tau_m \ket{m}\!\bra{m}_M, \\
    &= \sum_{i=0}^1 \sum_{m=0}^3 q_m \ket{im}_{SM}\!\bra{im} \label{eq:workingexampleinitial}
\end{align}
where the set of initial populations of the machine are $\{\tau_0,\tau_1,\tau_2,\tau_3\}$, arranged in strict decreasing order, i.e., $\tau_i > \tau_{i+1} $. To simplify the notation used later in the paper, we denote $\tau_m/2$ as $q_m$.

The target value of the initial state is $\alpha_{\text{in}} = 1/2$, our goal will be to find $\alpha_{min},\,\alpha_{max}$ and construct the optimal trajectory for $\alpha \in \left[ \alpha_{min}, \alpha_{max} \right]$.

\subsubsection{Optimal trajectory of the working example}

We apply our result to the working example of ground state cooling of a qubit, with initial state as in \eqref{eq:workingexampleinitial}. We introduce here a tabular visualisation of the state (Table \ref{tab:exampleinitial}) that is useful for thermodynamic tasks with joint states: we arrange the initial populations in a table, each row/column corresponding to an eigenstate of the system/machine.
\begin{figure}[t]
    \centering
    \includegraphics[width=0.5\linewidth]{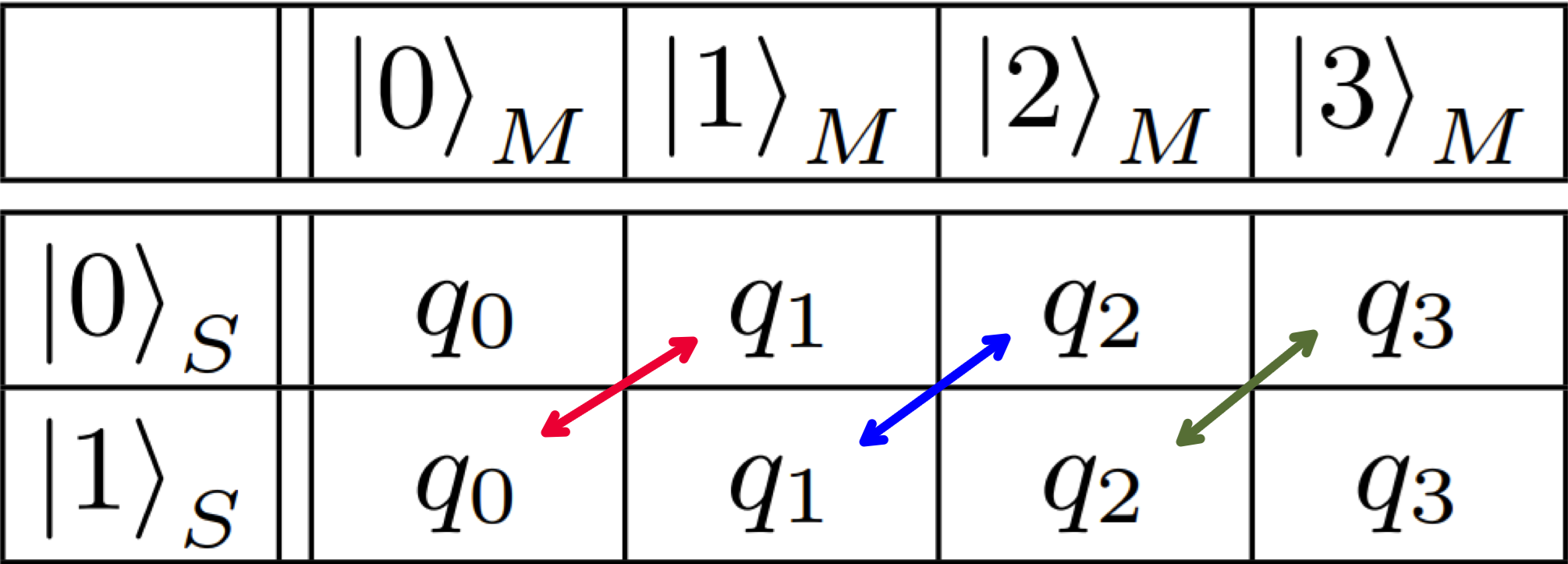}
    \caption{Visualisation of the initial state of system and machine as in \eqref{eq:workingexampleinitial}, with the swaps of adjacently-valued populations that increase the ground state population marked by colored arrows. The initial ground state population of the system is $2\,(\sum_i q_i) =1 $, with $q_i > q_{i+1} $. Also $q_i=\frac{\tau_i}{2}$ for $i=0,\dots,3$, where $\tau_i$ is the $i^{\text{th}}$ population of the machine.}
\label{tab:exampleinitial}
\end{figure}

The initial state does lie on the optimal trajectory, as will be seen in section \ref{sec:proofmain}, so rather than construct the entire trajectory we do so from the initial state to the maximal point. The part from the minimum to the initial state is symmetric.

If we label the energy eigenstates of the joint Hamiltonian by $\ket{i}_S \otimes \ket{j}_M$, then the coefficients of the cost and target functions are
\begin{align}
    a_{i,j} &= \begin{cases}
        1   & \text{if $i=0$}, \\
        0   & \text{if $i=1$},
    \end{cases} \\
    E_{i,j} &= E_i^{(S)} + E_j^{(M)}.
\end{align}

As stated there are two properties that an optimal swap must possess: that it increases the value of the target, and that it involves adjacent-valued populations. For our example this implies that 1) the swap must be between the ground and excited subspaces of the system, i.e. between the first and second row of the table, and 2) the swap must be between $q_i$ and $q_{i+1}$, where $i\in \{0,1,2\}$, with the former being in the excited (second row) and latter in the ground (first row) subspace.

Upon examining our initial state, we identify three swaps that satisfy the above conditions. For clarity, we denote the system's energy gap as $E_S$ and the machine's energy gaps as $E_{mn} = E_n - E_m$. The swaps are as follows:
\begin{enumerate}
    \item $\ket{01} \leftrightarrow \ket{10}$, with $\Delta \alpha =  q_0 - q_1$ and cost gradient $\frac{\Delta \epsilon}{\Delta \alpha}= E_{01}-E_S$,
    \item $\ket{02} \leftrightarrow \ket{11}$, with $\Delta \alpha = q_1 - q_2$ and cost gradient $\frac{\Delta \epsilon}{\Delta \alpha}=E_{12}-E_S$,
    \item $\ket{03} \leftrightarrow \ket{12}$, with $\Delta \alpha = q_2 - q_3$ and cost gradient $\frac{\Delta \epsilon}{\Delta \alpha}=E_{23}-E_S$.
\end{enumerate}

Here, $\Delta \alpha = \mathcal{A}(\sigma_{\text{after}}) - \mathcal{A} (\sigma_{\text{before}})$ and $\Delta \epsilon = \mathcal{E}(\sigma_{\text{after}}) - \mathcal{E} (\sigma_{\text{before}})$, where $\sigma_{\text{before}}$ and $\sigma_{\text{after}}$ refer to the states before and after the swap, respectively. To connect this with Eq.~\ref{equ:mingradient}, observe that if a swap were to occur between levels $\ket{k}$ and $\ket{l}$ — where we use the single index notation $\ket{n}$ for $\ket{i_n j_n}$, with $n=4 i_n + j_n$ — then
\begin{equation}
    \frac{\Delta \epsilon}{\Delta \alpha}= \frac{E_k-E_l}{a_k-a_l}.
\end{equation}
Thus, we seek the swap that minimizes $\frac{\Delta \epsilon}{\Delta \alpha}$. Since $E_S$ appears in each gradient, the optimal swap is determined solely by the smallest machine energy gap among $\{E_{01},E_{12},E_{23}\}$.

The smallest energy gap among those listed will depend on the choice of machine $M$. Once we perform the initial optimal swap—whichever one that may be—we can repeat the process. In this example, the next step will involve selecting from the two swaps not chosen initially. After performing the optimal swap among these two, the third step will be the remaining swap of the three, concluding with a final swap, $\ket{03} \leftrightarrow \ket{10}$, to achieve maximal cooling.

The entire graph of possible trajectories and associated conditions on the energy gaps of the machine is depicted in Fig. \ref{fig:erasuretrajectory}.

\onecolumngrid
\newpage

\begin{figure}[h]
    \centering
    \includegraphics[width=\linewidth]{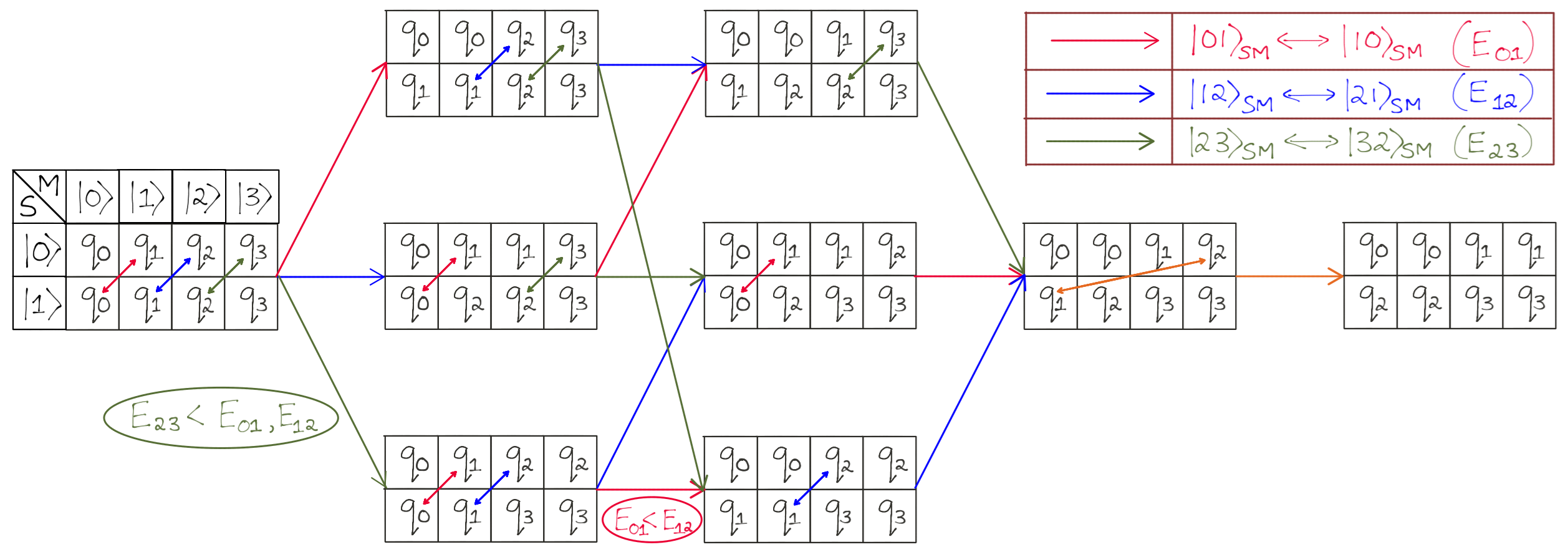}
    \caption{Possible optimal trajectories for the erasure of a maximally mixed qubit using a 4-dimensional machine with energy gaps and populations defined by $E_{mn} = E_n - E_m$. Each table represents a joint state of the system and machine, formatted as in Figure~\ref{tab:exampleinitial} but with simplified notation (row and column labels omitted). Swaps of adjacent population values that enhance the target state are indicated by colored arrows, with each color signifying a specific energy gap: red, blue, and green correspond to $E_{01}, E_{12}$, and $E_{23}$ as the smallest energy gap among candidates, respectively. For example, if $E_{23}<E_{01}<E_{12}$, the optimal path follows the bottom trajectory, starting with the green arrow for $E_{23}$, the smallest in $\{E_{23},E_{12},E_{01}\}$. The next step selects the red arrow for $E_{01}$, the smallest in $\{E_{01},E_{12}\}$, leaving the blue arrow, corresponding to $E_{12}$, as the final selection. To achieve maximal cooling, the last swap, indicated by an orange arrow, is always $\ket{03} \leftrightarrow \ket{10}$, regardless of the energy gap ordering.}
    \label{fig:erasuretrajectory}
\end{figure}

\twocolumngrid

\section{Proof of main result} \label{sec:proofmain}

In this section, we prove our main result: the construction of the optimal trajectory. We structure the proof into subsections that highlight the key conceptual and technical steps. In Sec.~\ref{proof:populationpolytope}, we demonstrate that the set of population vectors ${\bf p}$ attainable through unitary operations forms a polytope $\mathcal{P}$, with vertices corresponding to permutations of the eigenvalue vector of the system's initial state. Next, in Sec.~\ref{proof:edgesgeneric} and \ref{proof:edgesadjacent}, we show that the edges of this polytope consist of two-level swaps between adjacent eigenvalues. In Sec.~\ref{proof:inducedpolytope}, we focus on the values of the target $\mathcal{A}$ and cost $\mathcal{E}$ of the unitarily transformed state, denoted by $(\alpha, \epsilon) \in \mathds{R}^2$. We find that the set of attainable points $(\alpha, \epsilon)$ forms a two-dimensional polytope $\mathcal{Q}_2$, which can be viewed as a projection of $\mathcal{P}$ into two dimensions. The optimal trajectory is then identified with the lower boundary of $\mathcal{Q}_2$.\footnote{For each value of $\alpha$, there are two corresponding values of $\epsilon$ on the boundary of $\mathcal{Q}_2$, with points on the optimal trajectory delivering the smaller $\epsilon$ value.} In Sec.~\ref{proof:extrematrajectory}, we show that the minimal and maximal points of the optimal trajectory can be selected as vertices of $\mathcal{P}$.

In Sec.~\ref{proof:gradientedge}, we reach the core of the proof. We demonstrate that one can progress along the optimal trajectory from vertex to vertex via an edge by selecting, among adjacent-valued swaps, the one that minimizes the gradient of cost versus target. Finally, in Sec.~\ref{proof:conclusion}, we synthesize all previous results to construct the complete optimal trajectory within the population polytope, lifting it to the space of unitary operations and density matrices in Sec.~\ref{proof:liftedtrajectory}.

\subsection{The population polytope}\label{proof:populationpolytope}

We start by examining the set of population vectors that can be achieved through unitary transformations. Although the population vector $\bf p$ depends non-linearly on the unitary transformation—the variable we aim to minimize over—we find that the achievable set of populations forms a well-defined polytope, which we refer to as the \textit{population polytope} $\mathcal{P}$. By working within this polytope, we can avoid the topological complexity of the full unitary space $\mathcal{U}(d)$, focusing instead on the more structured, geometrically accessible space of attainable population vectors. This simplification is possible because both the target and cost functions are diagonal in a shared basis, or equivalently, because the target and cost observables commute

To connect our problem to polytope theory~\cite{Brondsted-1983, Ziegler-1995, Gruenbaum-2003} we utilize the Schur and Horn theorems from majorisation theory\cite{Marshall-2011}.
Consider the vector of eigenvalues $ \boldsymbol{\lambda} = \{\lambda_k\}_k$ of the initial system state $\rho$. Since unitary operation preserves eigenvalues, $\boldsymbol{\lambda}$ remains the vector of eigenvalues for the final state $U \rho U^{\dagger}$. According to Schur's theorem~\cite[Chapter~9.B]{Marshall-2011}, the vector of eigenvalues of any density matrix always majorises its vector of diagonal elements. This applies to our unitarily transformed state, such that
\begin{align}
    \boldsymbol{\lambda} \succ {\bf p},
\end{align}
where ${\bf p}$ denotes the vector of populations of $U \rho U^{\dagger}$. 

Moreover, Horn's theorem~\cite[Chapter~9.B]{Marshall-2011} states that for a given vector of eigenvalues $\boldsymbol{\lambda}$, any vector ${\bf v}$ that is majorized by $\boldsymbol{\lambda}$ can be achieved by some unitary transformation. Therefore, the set of unitarily attainable populations ${\bf p}$ is precisely defined as
\begin{equation}
    \{ {\bf p} \mid {\bf p} \prec \boldsymbol{ \lambda}\}.
\end{equation}
Now, ${\bf p} \prec \boldsymbol{\lambda}$ is equivalent to the statement that ${\bf p} = D \boldsymbol{\lambda}$, for some doubly stochastic matrix $D$. Therefore, the set of all attainable vectors $\bf{p}$ is exactly the set of all doubly stochastic transformations $D$ of $\boldsymbol{\lambda}$.

According to the Birkhoff-von Neumann theorem~\cite[Chapter~2.A]{Marshall-2011}, the set of doubly stochastic matrices is the convex hull of permutations. In particular, given the set of all $d$-dimensional permutations $\{P_n\}_n$, every doubly stochastic transformation $D$ can be expressed as
\begin{align} \label{eq:doublystochastic}
    D &= \sum_{n=0}^{d!-1} r_n P_n,
\end{align}
where ${\bf r}=(r_0,\dots,r_{d!-1})$ is a vector of probabilities ($r_n \geq 0$, $\sum_n r_n = 1$). Consequently, the set of attainable population vectors ${\bf p}$ can also be represented as the convex hull of the permutations of $\boldsymbol{\lambda}$:
\begin{align} \label{eq:pconvexhull}
    {\bf p} = D \boldsymbol{\lambda} = \sum_{n=0}^{d!-1} 
    r_n \; P_n \boldsymbol{\lambda}.
\end{align}

Thus, the set of all possible final population vectors forms a (bounded, finite and convex) polytope\footnote{We consider a polytope to be the convex hull of finitely many points. This coincides with what is sometimes denoted by bounded, finite and convex polytopes in the literature.} $\mathcal{P}$. Furthermore, the vertices of this polytope are precisely the permutations of the eigenvalue vector $\boldsymbol{\lambda}$. For further details, see Lemma~\ref{lem:permutationisvertex} in Appendix~\ref{app:grouptheorypolytope}. We refer to $\mathcal{P}$ as the population polytope of $\boldsymbol{\lambda}$. As an example, Fig. \ref{fig:population_polytope}  illustrates the population polytope for the four-dimensional eigenvalue vector $\boldsymbol{\lambda}=\frac{1}{10} (1,2,3,4)$.
\begin{figure}[h]
\centering
\includegraphics[width=0.9\columnwidth]{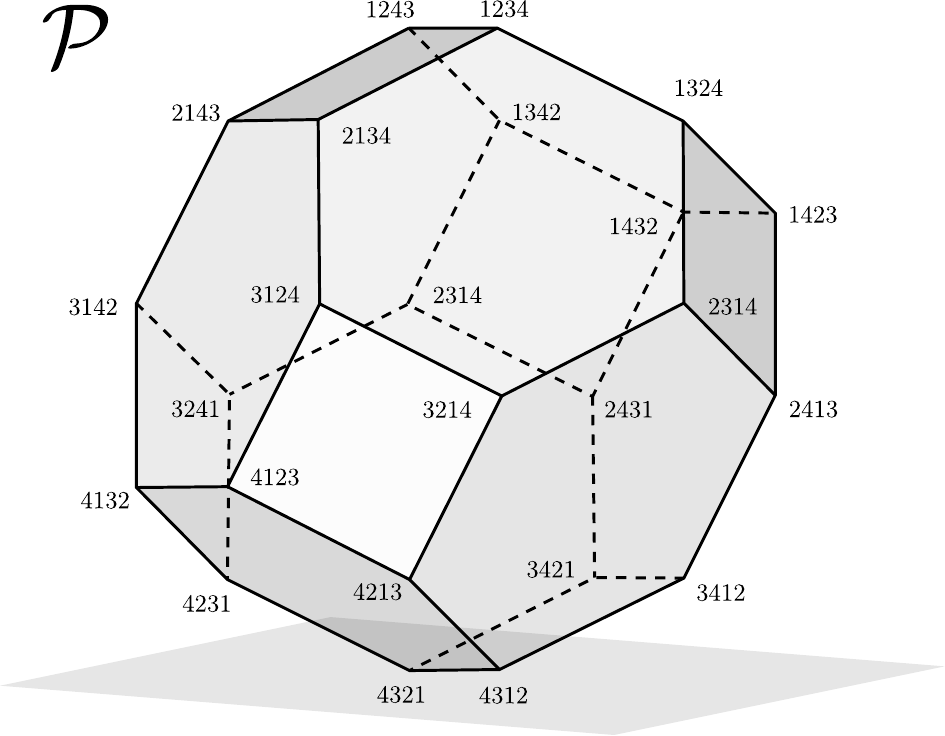}
\caption{The three dimensional population polytope for $\boldsymbol{\lambda}=\frac{1}{10} (1,2,3,4)$. Each vertex corresponds to a permutation of the elements of $\boldsymbol{\lambda}$.
}
\label{fig:population_polytope}
\end{figure}

\subsection{The edges of a generic polytope}\label{proof:edgesgeneric}

In this section we discuss a few properties of a polytope that we require for our result, these apply to polytopes in general and not only to the population polytope. To this end we begin by denoting the polytope as $\mathcal{Q}$ and we let $V$ be a vertex of $\mathcal{Q}$.

\begin{definition}[Vertex vector at $V$]
A {\bf vertex vector at $V$} is a vector of the form $V_i-V$, where $V_i$ and $V$ are distinct vertices of $\mathcal{Q}$.
\end{definition}
Thus, a vertex vector at $V$ represents a vector from $V$ to another vertex of $\mathcal{Q}$. An {\bf edge} (of $\mathcal{Q}$) is a one dimensional face of $\mathcal{Q}$ - see e.g.~\cite[Chapter~2.1]{Ziegler-1995}. An { \bf edge at $V$} is an edge that contains $V$. We say that a vertex vector (at $V$) $V_i-V$  generates an edge $\mathcal{D}$ (at $V$) if, for any ${\bf p} \in \mathcal{D}$,
\begin{equation}\label{equ:VertexVectorGeneratingEdge}
    {\bf p} = r (V_i-V)+V
\end{equation}
for some $r \in [0,1]$. Every edge at $V$ is generated by a (unique) vertex vector, specifically $V_i - V$, where $V_i$ and $V$ are the endpoints of the edge. For more details, see Lemma~\ref{lemma:generatingedge} in Appendix~\ref{subsection:generatingedge}.
We refer to this vertex vector as the \textbf{vertex vector of the edge}. Note, however, that not every vertex vector generates an edge. Instead, the vertex vectors of edges at $V$ generate the remaining vertex vectors (at $V$) in the following sense (Proof in Appendix \ref{app:edgesminimalset}).
\begin{lemma}\label{lemma:edgesminimal}
For any point ${\bf p} \in \mathcal{Q}$ and vertex $V$, the vector ${\bf p} - V$ lies within $\mathcal{K}_V$, the cone generated by the vertex vectors of edges at $V$. Specifically,
\begin{align}
    {\bf p} - V &= \sum_i r_i \; \left(Z_i - V \right),
\end{align}
where $r_i \geq 0$ and $\{Z_i\}_i$ is the set of vertices of $\mathcal{Q}$ such that $\{Z_i-V\}_i$ are the vertex vectors of edges at $V$. 
\end{lemma}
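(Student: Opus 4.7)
The plan is to reduce the lemma to a standard fact about the tangent cone of $\mathcal{Q}$ at the vertex $V$. Translate so that $V=0$. Since $\mathcal{Q}$ is a polytope, it equals the convex hull of its finite vertex set $\{V_j\}_j$, so any ${\bf p}\in\mathcal{Q}$ may be written as ${\bf p}=\sum_j \mu_j V_j$ with $\mu_j\ge 0$ and $\sum_j\mu_j=1$. Subtracting $V=\sum_j\mu_j V$ yields
\begin{equation*}
    {\bf p}-V=\sum_j \mu_j\,(V_j-V),
\end{equation*}
so ${\bf p}-V$ already lies in the conic hull of \emph{all} vertex vectors at $V$. The content of the lemma is therefore the stronger claim that this full conic hull coincides with $\mathcal{K}_V$, i.e.\ that every vertex vector $V_i-V$ is itself a non-negative combination of edge vertex vectors at $V$.

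My main strategy for this stronger claim is via the tangent cone $T_V\mathcal{Q}:=\{t({\bf q}-V):t\ge 0,\,{\bf q}\in\mathcal{Q}\}$. The steps are: (i) $T_V\mathcal{Q}$ is polyhedral, since it is cut out by the homogenisations of the facet inequalities of $\mathcal{Q}$ that are active at $V$; (ii) $T_V\mathcal{Q}$ is pointed, because $V$ is a $0$-dimensional face of $\mathcal{Q}$, which forces $\{0\}$ to be a $0$-dimensional face of $T_V\mathcal{Q}$; (iii) by Minkowski's theorem, a pointed polyhedral cone equals the non-negative hull of its extreme rays; (iv) the extreme rays of $T_V\mathcal{Q}$ are in bijection with the edges of $\mathcal{Q}$ at $V$, the bijection being $[V,Z_i]\mapsto\{t(Z_i-V):t\ge 0\}$. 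Combining (iii) and (iv) gives $T_V\mathcal{Q}=\mathcal{K}_V$, and since every vertex vector $V_i-V$ belongs to $T_V\mathcal{Q}$ (take $t=1$, ${\bf q}=V_i$), the lemma follows from the Step~1 decomposition.

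The main obstacle is step (iv), the bijection between extreme rays of $T_V\mathcal{Q}$ and edges of $\mathcal{Q}$ at $V$. Both directions hinge on the observation that the faces of $T_V\mathcal{Q}$ are precisely the homogenisations of the faces of $\mathcal{Q}$ containing $V$, obtained by requiring exactly the same subset of facet-defining inequalities to be active; under this correspondence, a $1$-dimensional face of $T_V\mathcal{Q}$ (i.e.\ an extreme ray) corresponds to a $1$-dimensional face of $\mathcal{Q}$ containing $V$ (i.e.\ an edge at $V$), and conversely. A back-up, more elementary route---if the tangent-cone machinery is felt to be overkill---is to instead induct on the dimension of the smallest face $F$ of $\mathcal{Q}$ containing both $V$ and $V_i$: the base case $\dim F=1$ is immediate, and in the inductive step one peels off a carefully chosen edge-neighbour $W$ of $V$ in $F$ and follows the half-line from $V_i$ in direction $-(W-V)$ until it exits $F$ through a lower-dimensional face still containing $V$, then applies the inductive hypothesis. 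Either route delivers $\mathcal{C}_V=\mathcal{K}_V$, which combined with Step~1 proves the lemma.
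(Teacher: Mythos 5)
Your proposal is correct, and its first step (writing ${\bf p}-V$ as a non-negative combination of \emph{all} vertex vectors at $V$, then reducing to the edge ones) is exactly how the paper's proof in Appendix~\ref{app:edgesminimalset} begins; the difference lies in how the reduction is carried out. The paper proves the key sub-statement (its Lemma~\ref{lemma:edgereducing}) by an explicitly ``dehomogenized'' version of your argument: it slices $\mathcal{Q}$ near $V$ with a hyperplane to obtain a vertex figure $\mathcal{P}'$, uses Proposition~2.4 of \cite{Ziegler-1995} to identify the vertices of $\mathcal{P}'$ with the edges of $\mathcal{Q}$ at $V$, writes the intersection point of a non-edge vertex vector with the slice as a convex combination of those slice vertices, and rescales. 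You instead work directly with the tangent cone $T_V\mathcal{Q}$, invoking its polyhedrality (via the active facet inequalities, hence the H-representation of $\mathcal{Q}$), pointedness, Minkowski's theorem that a pointed polyhedral cone is generated by its extreme rays, and the face correspondence $F\mapsto\mathrm{cone}(F-V)$ to identify extreme rays with edges at $V$. These are two presentations of the same local geometry --- the vertex figure is precisely a convex base of your cone, a fact the paper itself exploits later (Appendix~\ref{app:extremerays}, where Minkowski-type results from \cite{Barvinok-2002} \emph{are} used to prove Lemma~\ref{lemma:extremerays}) --- so your route buys a cleaner conceptual statement ($T_V\mathcal{Q}=\mathcal{K}_V$) at the price of heavier standard machinery, while the paper's route stays with elementary convex-combination algebra inside a bounded cross-section. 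One caution: your elementary back-up induction is not yet a proof as stated, since the face through which the half-line from $V_i$ in direction $-(W-V)$ exits $F$ need not contain $V$ unless the ``careful choice'' of $W$ is actually specified; but the lemma does not rest on it, as your main tangent-cone route is complete.
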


\begin{figure}[h]
\centering
\includegraphics[width=0.8\columnwidth]{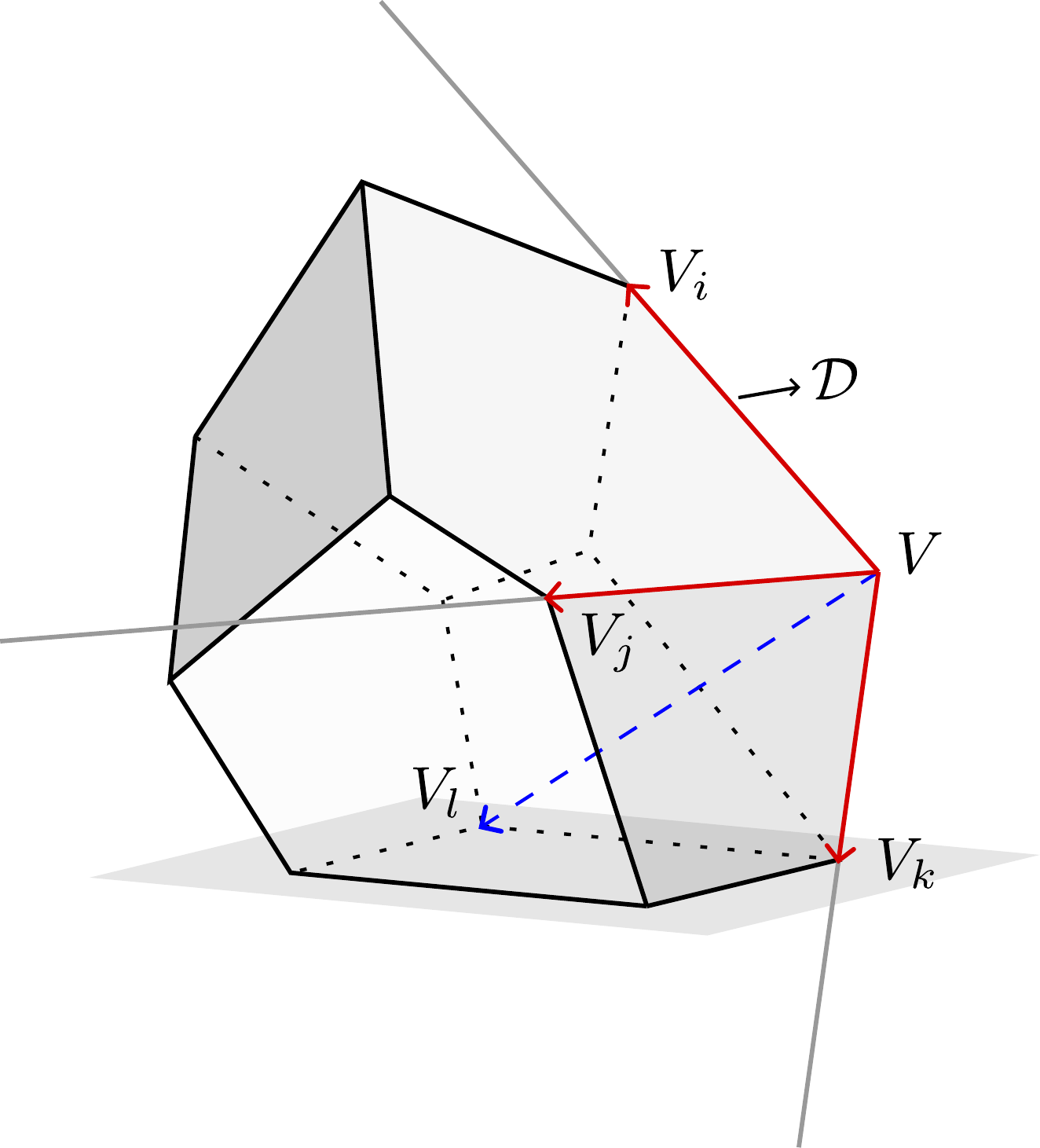}
\caption{A sketch of an arbitrary polytope $\mathcal{Q}$ with a highlighted vertex $V$. The vector $V_i-V$ is a vertex vector at $V$ and generates a specific edge at $V$, defined as $\mathcal{D}=\{ {\bf p} \in \mathcal{Q} \mid {\bf p} = r( V_i -V) + V, r\in [0,1]\}$. In contrast, $V_l - V$ (displayed in dashed blue) is a vertex vector at $V$ that does not generate an edge at $V$. There are 3 vertex vectors at $V$ (displayed in red) that generate edges: $V_i-V$, $V_j-V$ and $V_k-V$. By Lemma~\ref{lemma:edgesminimal}, any point ${\bf p} \in \mathcal{Q}$ satifies ${\bf p} - V \in \mathcal{K}_V$, which is the cone generated by $V_i-V$, $V_j-V$ and $V_k-V$. This can be visualized by extending the edges at $V$ (depicted in light gray) and observing that the whole polytope lies within the cone (centered at $V$) obtained from these extended edges.}\label{fig:polytopedrawing}
\end{figure}

Each vertex vector of an edge at $V$ lies on what is known as an extreme ray of $\mathcal{K}_V$. This means, in particular, that all the vertex vector of edges at $V$ are required to generate $\mathcal{K}_V$. More formally, we have the following result (proof in Appendix~\ref{app:extremerays}).
\begin{lemma} \label{lemma:extremerays}
A vertex vector of an edge at $V$ cannot be expressed as a positive linear combination of other vertex vectors at $V$.
\end{lemma}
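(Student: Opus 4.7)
\textbf{Proof plan for Lemma~\ref{lemma:extremerays}.} My plan is to use a supporting-hyperplane argument tailored to the edge through $V$. Let $W$ be the vertex such that the vertex vector of the edge at $V$ in question is $W - V$. Since this edge is by definition a one-dimensional face of $\mathcal{Q}$, the standard face-exposure property of polytopes (see e.g.\ Ziegler~\cite{Ziegler-1995}, Chapter~2) gives a linear functional $f$ and a constant $c$ such that $f(x) \leq c$ for all $x \in \mathcal{Q}$, with equality precisely on the edge $[V,W]$. In particular $f(V) = f(W) = c$ and, crucially, $f(V_i) < c$ strictly for every vertex $V_i$ of $\mathcal{Q}$ other than $V$ and $W$.

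Next I would argue by contradiction: suppose
\begin{equation*}
    W - V \;=\; \sum_{i} s_i\,(V_i - V),
\end{equation*}
with $s_i \geq 0$ and each $V_i$ a vertex of $\mathcal{Q}$ distinct from both $V$ and $W$ (so that each $V_i - V$ is a ``different'' vertex vector at $V$). Applying $f$ to both sides and using linearity, the left-hand side gives $f(W) - f(V) = 0$, while each term on the right satisfies $f(V_i - V) = f(V_i) - c < 0$. Since the $s_i$ are non-negative and each summand is strictly negative unless $s_i = 0$, the only way the sum can vanish is for every $s_i$ to equal zero. But then $W - V = 0$, contradicting the fact that $W$ and $V$ are distinct vertices.

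The only genuinely non-routine ingredient here is the existence of the supporting functional $f$ strictly separating the edge $[V,W]$ from all other vertices. This follows from the standard characterisation of faces of a polytope as exposed faces, which can either be cited from~\cite{Ziegler-1995} or, if the authors prefer a self-contained treatment, derived quickly from Lemma~\ref{lemma:generatingedge}: the edge, being one-dimensional, lies on the boundary of the convex hull of all other vertices pulled away from it, so one obtains $f$ as the normal to a hyperplane witnessing this separation. I expect this supporting-hyperplane step to be the only point where care is needed; everything else reduces to the one-line sign argument above.
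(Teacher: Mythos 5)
Your proof is correct, but it takes a genuinely different route from the paper. You exploit the fact that an edge of a polytope is an exposed face (which is indeed how faces are defined in \cite{Ziegler-1995}, the reference the paper itself uses to define edges): the exposing functional $f$ with $f\leq c$ on $\mathcal{Q}$ and equality exactly on $[V,W]$ kills the putative positive combination by a one-line sign argument, since $f(W-V)=0$ while each $f(V_i-V)<0$. The only point you should make explicit is why $f(V_i)<c$ strictly for every vertex $V_i\neq V,W$: a vertex of $\mathcal{Q}$ lying in the face $[V,W]$ must be an extreme point of $\mathcal{Q}$ in that segment, hence one of its endpoints (this is exactly the content of Lemma~\ref{lemma:generatingedge}), so no third vertex can sit on the hyperplane. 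The paper instead proves a stronger intermediate statement: after translating the polytope so that $V$ sits at the origin, it constructs a vertex figure $\mathcal{P}'$, shows $\mathcal{P}'$ is a convex base of the cone $\mathcal{K}_V$ generated by all vertex vectors at $V$, and invokes Lemma~8.4 of \cite{Barvinok-2002} to conclude that the vertex vectors of edges at $V$ are precisely the extreme rays of $\mathcal{K}_V$; Lemma~\ref{lemma:extremerays} is then extracted by an extremality contradiction. Your argument buys brevity and elementarity (no vertex figure, no cone machinery, no base construction), whereas the paper's construction buys the stronger geometric characterisation of $\mathcal{K}_V$'s extreme rays, which sits naturally alongside Lemma~\ref{lemma:edgesminimal} and reuses the vertex-figure technology already developed there. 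Either route validly establishes the lemma as stated.
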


An illustration of the vertex vectors and edges of a generic polytope is given in Fig. \ref{fig:polytopedrawing}.

\subsection{The edges of the population polytope}\label{proof:edgesadjacent}

Let us return to the specific case of a population polytope. Here, each vertex corresponds to a permutation of elements, meaning that each vertex vector is induced by a particular permutation. Specifically, we say that a permutation {\bf $P$ generates the edge $\mathcal{D}$ }(at $V$) if the vertex vector $P(V) -V$ generates $\mathcal{D}$. As we will show, the permutations that generate edges are, in fact, highly specific: they are the following type of 2-cycle\footnote{a k-cycle is a cyclic permutation of $k$ elements.}:
\begin{definition}[av-swap]
For a vector ${\bf p}=(p_0,\dots,p_{d-1}) \in \mathbb{R}^d$, we call two of its components $p_i$ and $p_j$ \textit{ adjacent-valued} if either $p_i < p_j$ or $p_j < p_i$ holds and
\begin{align}
    \nexists \;  p_k \quad s.t. \begin{cases}
                                p_i < p_k < p_j & \text{if $p_i < p_j$}, \\
                                p_j < p_k < p_i & \text{if $p_j < p_i$},
                            \end{cases}
\end{align}
that is, there is no other component of ${\bf p}$ between $p_i$ and $p_j$. We define a 2-cycle that swaps adjacent-valued elements an adjacent-valued swap, or \textit{av-swap} for short.
\end{definition}

Note that there are at most $d-1$ possible av-swaps for a vector in $\mathbb{R}^d$,  as opposed to $d(d-1)/2$ possible 2-cycles. We are now ready to state the main result of this section.

\begin{theorem}\label{thm:edgesadjacentvalued}
The edges of  $\mathcal{P}$ are generated by av-swaps.
\end{theorem}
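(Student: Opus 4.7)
I will show both that (i) every av-swap at a vertex $V$ generates an edge of $\mathcal{P}$, and that (ii) every edge at $V$ is generated by some av-swap, so av-swaps characterize edges exactly. The cleanest route is the standard characterization of faces of a polytope as sets of maximizers of a linear functional, combined with the rearrangement inequality, which identifies the maximizers of ${\bf c}\cdot{\bf p}$ over the permutations of $\boldsymbol{\lambda}$. Both directions then reduce to analyzing how the level sets of ${\bf c}$ interact with the (possibly repeated) values of $\boldsymbol{\lambda}$.

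For direction (i), given an av-swap $P=(a\,b)$ at $V$ (so $V_a>V_b$ with no entry of $V$ strictly between them), I construct a linear functional ${\bf c}$ whose sorted order matches that of $V$ with exactly one tie at positions $a$ and $b$. Concretely, list positions in decreasing $V$-value, placing $a$ at the last occurrence of the value $V_a$ and $b$ at the first occurrence of $V_b$ -- these can be made adjacent in the sort precisely because no intermediate value exists -- and assign strictly decreasing $c_k$ values along the sort, except setting $c_a=c_b$. By the rearrangement inequality the only extreme maximizers of ${\bf c}\cdot{\bf p}$ on $\mathcal{P}$ are then $V$ and $PV$; these are distinct since $V_a\neq V_b$, so the face of maximizers is the $1$-dimensional segment $[V,PV]$, i.e.\ an edge.

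For direction (ii), suppose $[V,V']$ is an edge and let ${\bf c}$ be a linear functional whose maximum on $\mathcal{P}$ is attained exactly on $[V,V']$. Partition positions into blocks $B_1,\dots,B_k$ by the level sets of ${\bf c}$ (ordered by decreasing $c$-value); the rearrangement inequality forces every maximizer ${\bf p}$ to assign to block $B_i$ the corresponding consecutive chunk of $\lambda$-values in sorted order, with arbitrary permutation within each block. Hence the face of maximizers decomposes as a Cartesian product of sub-permutohedra, one per block, whose total dimension is the sum over blocks containing at least two distinct $\lambda$-values of $(|B_i|-1)$. For this to equal $1$, exactly one block must have size $2$ with two distinct $\lambda$-values; those two values are then consecutive in the sorted list of $\boldsymbol{\lambda}$, hence adjacent-valued, and $V,V'$ differ precisely by swapping the two corresponding positions -- an av-swap.

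\textbf{Main obstacle.} The delicate step is direction (ii) in the presence of degeneracies in $\boldsymbol{\lambda}$. One must verify carefully that the face of maximizers decomposes as claimed and correctly compute each sub-permutohedron's dimension -- in particular noting that a block of size $\geq 3$ containing at least two distinct $\lambda$-values already contributes dimension $\geq 2$, which rules it out in the edge case. An alternative route closer to the framework of Lemmas~\ref{lemma:edgesminimal}--\ref{lemma:extremerays} is to decompose every non-av-swap vertex vector $PV-V$ explicitly as a positive combination of av-swap vertex vectors at $V$, e.g.\ by telescoping a transposition of non-adjacent values through an intermediate $\lambda$ and reducing more general permutations via bubble-sort-type arguments; this works but is combinatorially heavier.
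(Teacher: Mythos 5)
Your direction (ii) is essentially the paper's own proof (Appendix~\ref{app:barvinokextension}, the extension of Barvinok's Proposition 2.2 given in Theorem~\ref{thm:Barvinokextension}): faces are the maximizer sets of a linear functional ${\bf c}$, maximizing permutations distribute the sorted $\boldsymbol{\lambda}$-chunks into the level-set blocks of ${\bf c}$, the face is a direct product of sub-permutation polytopes of dimension $n'-k'$, and dimension $1$ forces a single non-degenerate block of size two whose two values must be adjacent-valued. The ``delicate step'' you flag --- the equality analysis in the presence of degenerate $\lambda$'s --- is exactly what the paper settles via Lemma~\ref{lem:disjointcycle} (every cycle factors into disjoint cycles with distinct entries, so a maximizing permutation can only permute within blocks of equal $c$-value), and your direction (i) corresponds to the paper's converse statement that every admissible pair of partitions $(\mathcal{S},\mathcal{A})$ yields a face; so the proposal is correct and follows the same route.
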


The complete proof can be found in Appendix \ref{app:barvinokextension}. It is a modified version of a proof by Barvinok (\cite{Barvinok-2002}, Chapter VI, Proposition 2.2). In Barvinok’s original proof, the structure of all faces of the population polytope is derived under the assumption that the population vector has all distinct elements. Here, we extend this proof to account for possible degeneracies in the population vector.

\subsection{The induced polytope of cost vs target \& the optimal cost function}\label{proof:inducedpolytope}

Next, we turn our attention to the attainable values of the target and cost functions. By varying over all unitarily attainable populations ${\bf p}$ we can collect the points $(\alpha,\epsilon) \in \mathbb{R}^2$, where $\alpha =\mathcal{A}({\bf p})$ and $\epsilon = \mathcal{E}({\bf p})$. As both the target $\mathcal{A}$ and cost $\mathcal{E}$ are linear functions of the population vector and the latter forms a polytope, it follows that the set of possible values of $(\alpha,\epsilon )$ also forms a polytope. The vertices of this polytope are a subset of the values of $(\alpha, \epsilon)$ obtained from vertices of the population polytope. Indeed, consider a point $(\alpha, \epsilon)$ as above. Then $(\alpha, \epsilon) = \left(\mathcal{A}({\bf p}), \mathcal{E}({\bf p}) \right)$ for some point ${\bf p}$ in the population polytope and as ${\bf p}$ may be expressed as $ {\bf p} = \sum_n q_n V_n$, where $V_n= P_n \lambda$ are the vertices of $\mathcal{P}$ and $(q_n)_n$ a probability distribution, by linearity of $\mathcal{A}$ and $\mathcal{E}$,
\begin{equation} \label{equ:alphapolytope}
\begin{aligned} 
   (\alpha, \epsilon)&=\left( \mathcal{A}({\bf p}), \mathcal{E}({\bf p}) \right)\\
   &= \left( \sum_n q_n \mathcal{A}(V_n), \sum_n q_n \mathcal{E}(V_n) \right)\\
   &= \sum_n q_n \left( \mathcal{A}(V_n), \mathcal{E}(V_n) \right).
\end{aligned}
\end{equation}
This shows that any point $(\alpha, \epsilon)$ lies in the convex hull of $\{(\mathcal{A}(V_n), \mathcal{E}(V_n)\}_n$, thus proving the above assertion.
We refer to the newly obtained polytope as the induced polytope $\mathcal{Q}_2$, which, as a subset of $\mathbb{R}^2$, is in fact a polygon. Fig. \ref{fig:polytopeworkingexample} illustrates the induced polytope for our working example of erasure.

\begin{figure}[h]
\centering
\includegraphics[width=0.9\columnwidth]{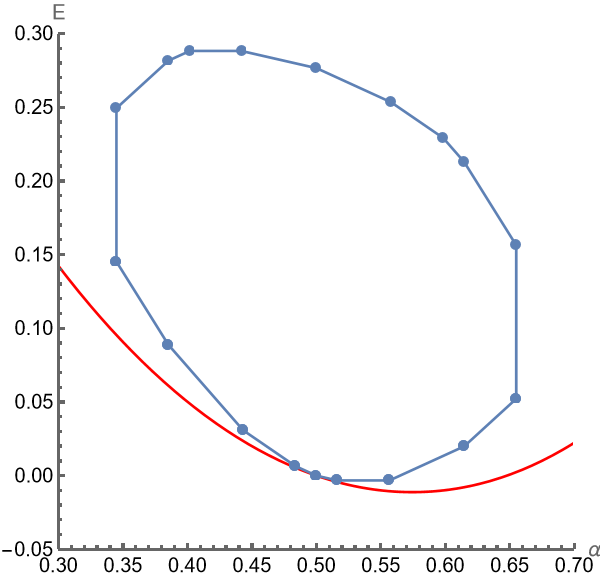}
\caption{
Induced polytope for our working example: a maximally mixed qubit coupled to a 4-dimensional machine with parameters: $\beta = 1$, $E_S = 0.3, E_1 = 0.1, E_2 = 0.4, E_3 = 1.1$. In red is the lower bound on work cost given by the free energy change of the system\cite{Reeb-2014,Taranto-2021}. 
}
\label{fig:polytopeworkingexample}
\end{figure}

By definition, the optimal cost function is the lower boundary of the induced polytope. A consequence of the convexity of the population polytope is that the optimal cost function $\omega_{\text{opt}}(\alpha)$ is also convex:
\begin{lemma}\label{lem:optconvex}
    $\omega_{\text{opt}}(\alpha)$ is convex.
\end{lemma}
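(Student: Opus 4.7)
The plan is to leverage the convexity of the population polytope $\mathcal{P}$ established in Sec.~\ref{proof:populationpolytope}, together with the linearity of the target and cost functions in the population vector. Convexity of $\omega_{\text{opt}}$ amounts to showing that for any $\alpha_1, \alpha_2$ in the attainable range and any $t \in [0,1]$,
\begin{equation}
\omega_{\text{opt}}\bigl(t\alpha_1 + (1-t)\alpha_2\bigr) \leq t\,\omega_{\text{opt}}(\alpha_1) + (1-t)\,\omega_{\text{opt}}(\alpha_2).
\end{equation}

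First, I would pick minimizers $\mathbf{p}_1, \mathbf{p}_2 \in \mathcal{P}$ achieving $\mathcal{A}(\mathbf{p}_i) = \alpha_i$ and $\mathcal{E}(\mathbf{p}_i) = \omega_{\text{opt}}(\alpha_i)$ for $i=1,2$; existence follows from the fact that $\mathcal{P}$ is compact and $\mathcal{E}$ is continuous. Next, I would form the convex combination $\mathbf{p}_t := t\mathbf{p}_1 + (1-t)\mathbf{p}_2$, which lies in $\mathcal{P}$ by its convexity. Using linearity of $\mathcal{A}$ and $\mathcal{E}$ in $\mathbf{p}$,
\begin{align}
\mathcal{A}(\mathbf{p}_t) &= t\alpha_1 + (1-t)\alpha_2,\\
\mathcal{E}(\mathbf{p}_t) &= t\,\omega_{\text{opt}}(\alpha_1) + (1-t)\,\omega_{\text{opt}}(\alpha_2).
\end{align}
Thus $\mathbf{p}_t$ is a feasible candidate for the minimization defining $\omega_{\text{opt}}$ at the intermediate target value $t\alpha_1 + (1-t)\alpha_2$, and the desired inequality follows from the definition of $\omega_{\text{opt}}$ as a minimum.

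There is essentially no main obstacle — the result reduces to the two structural facts already established in the preceding subsections, namely that the set of attainable populations forms a convex polytope (via Horn's and Birkhoff–von Neumann's theorems) and that both $\mathcal{A}$ and $\mathcal{E}$ are linear in $\mathbf{p}$ since the target and cost observables are simultaneously diagonalizable. One minor subtlety worth remarking on is that we use Horn's theorem to lift $\mathbf{p}_t \in \mathcal{P}$ back to an admissible unitary $U_t$ such that $\sigma_t = U_t \rho U_t^\dagger$ has populations $\mathbf{p}_t$; this justifies comparing against $\omega_{\text{opt}}$ as originally defined over unitaries in Eq.~\eqref{equ:MinimalCostFunction} rather than only over points of $\mathcal{P}$.
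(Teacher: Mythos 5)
Your proof is correct and is essentially the same argument as the paper's: both pick optimal population vectors $\mathbf{p}_1,\mathbf{p}_2$ for $\alpha_1,\alpha_2$, form their convex combination in $\mathcal{P}$, and use linearity of $\mathcal{A}$ and $\mathcal{E}$ to compare against $\omega_{\text{opt}}$ at the intermediate target value. The only difference is cosmetic --- you argue directly from the definition of the minimum while the paper phrases it as a proof by contradiction --- and your closing remark about invoking Horn's theorem to lift $\mathbf{p}_t$ back to an admissible unitary is a correct and welcome clarification of a step the paper leaves implicit.
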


\begin{proof}
(by contradiction) Assume that $\omega_{\text{opt}}(\alpha)$ is not convex. Then there exist $\alpha_1, \alpha_2 \in [\alpha_{\min}, \alpha_{\max}]$ as well as some $p \in [0,1]$ such that 
\begin{align}\label{eq:woptconcave?}
    \omega_{\text{opt}} \left( p \alpha_1 + (1-p) \alpha_2 \right) &> p \,\omega_{\text{opt}}(\alpha_1) + (1-p) \,\omega_{\text{opt}}(\alpha_2).
\end{align}

Corresponding to $\alpha_1$ and $\alpha_2$ there must exist population vectors ${\bf p}_1,{\bf p}_2 \in \mathcal{P}$ for which
\begin{align}
    \mathcal{A} ({\bf p}_1) &= \alpha_1, \quad \mathcal{E} ({\bf p}_1) = \omega_{\text{opt}}(\alpha_1), \\
    \mathcal{A} ({\bf p}_2) &= \alpha_2, \quad \mathcal{E} ({\bf p}_2) = \omega_{\text{opt}}(\alpha_2). 
\end{align}

Since the population polytope is convex, any convex combination of ${\bf p}_1$  and ${\bf p}_2$ is part of the polytope too. So given $p \in [0,1]$ as above, we have $ {\bf p}^\prime = p \, { \bf p}_1 + (1-p) \, { \bf p}_2 \in \mathcal{P}$. The target and cost values of this point are
\begin{align}
    \mathcal{A}( {\bf p}^{\prime} ) &:= {\bf a} \cdot {\bf p}'=p \, \alpha_1 + (1-p) \,\alpha_2 =: \alpha', \\
    \mathcal{E}({\bf p}^\prime) &:= {\bf E} \cdot {\bf p}'=p \, \omega_{\text{opt}}(\alpha_1) + (1-p) \, \omega_{\text{opt}}(\alpha_2).\label{eq:costlinepoint}
\end{align}
Comparing Eq.~\ref{eq:costlinepoint} with Eq.~\ref{eq:woptconcave?} we readily see that
\begin{equation} \label{equ:optimalityviolation}
    \omega_{\text{opt}}(\alpha') > \mathcal{E}({\bf p}').
\end{equation}
But Eq.~\ref{equ:optimalityviolation} means that ${\bf p}'$ -- which satisfies $\mathcal{A}({\bf p}') = \alpha'$ -- achieves a lower cost than $\omega_{\text{opt}}(\alpha')$, the optimal cost function at $\alpha'$ , which is impossible. Therefore, it must be that $ \omega_{\text{opt}}(\alpha)$ is convex.\footnote{Analogously: if one maximises rather than minimises the cost, the resulting function must be concave.}

\end{proof}

\subsection{Minimal point of the optimal trajectory}\label{proof:extrematrajectory}

To construct the optimal trajectory, we begin with finding its start, i.e. a population vector that minimises the target function. From Eq.~\ref{equ:alphapolytope} given any ${\bf p} \in \mathcal{P}$ we have that
\begin{equation}
    \mathcal{A}({\bf p}) = \sum_i q_i \mathcal{A}(V_i)
\end{equation}
for an appropriately chosen probability distribution $(q_n)_n$. Thus there exists a (non-empty) set $\mathcal{S}_{min}$ of vertices of $\mathcal{P}$ that attain the minimal possible value of the target, we label this value $\alpha_{min}$\footnote{The same holds for the maximal possible value of the target $\alpha_{\max}$}. By linearity of the target function, the value of the constraint on the entire polytope $\mathcal{P}_{min} \subset \mathcal{P}$ generated by $\mathcal{S}_{min}$ must be $\alpha_{min}$. Furthermore, any population $ {\bf p}$ for which $A ( {\bf p}) = \alpha_{\min}$ must lie in $\mathcal{P}_{\min}$. We therefore have 
\begin{align}
    &\min_U \mathcal{E}({\bf p}), \quad \text{s.t. } \mathcal{A}({ \bf p}) = \alpha_{\min}\\
    =& \min_{{\bf p} \in \mathcal{P}_{\min}} \mathcal{E}({\bf p}),
\end{align}
meaning that in order to answer our question of Eq.~\ref{equ:main_question_pop} for $\alpha = \alpha_{\min}$ we need only look for ${\bf p} \in \mathcal{P}_{\min}$. Now within $\mathcal{P}_{\min}$, the same reasoning implies that there exists at least one vertex that attains the minimal value of the cost $\mathcal{E}(\cdot)$ (given $\alpha_{min}$). We can pick any such vertex to be the start of the optimal trajectory.

\subsection{The gradient of cost vs target}\label{proof:gradientedge}

Next, consider an arbitrary vertex $V$ of the population polytope $\mathcal{P}$ that is on the optimal trajectory and has a target value different than $\alpha_{\max}$. We here prove that $\left. \frac{\Delta \epsilon}{\Delta \alpha} \right \rvert_V $ --- the gradient of the cost vs target at $V$ --- is bounded by the values it takes along edges at $V$, see Fig.~\ref{fig:gradients}. 
\begin{figure}[h]
\centering
\includegraphics[width=0.95\columnwidth]{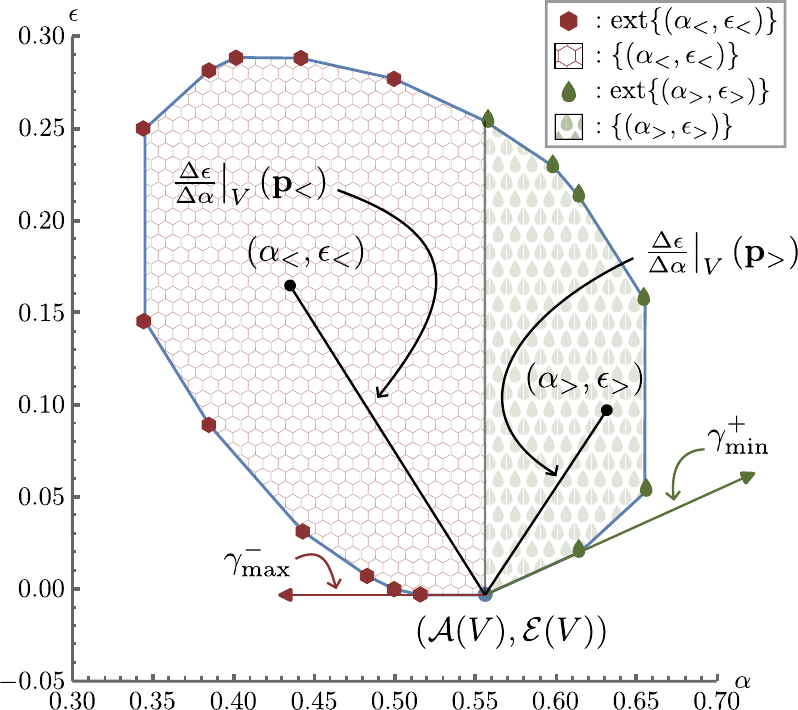}
\caption{We showcase here the behavior of $\left. \frac{\Delta \epsilon}{\Delta \alpha} \right \rvert_V $ at a specific vertex $(\mathcal{A}(V),\mathcal{E}(V))$ of the induced polytope of cost vs target $\mathcal{Q}_2$. The boundary of the induced polytope $\mathcal{Q}_2$ is displayed in blue. The set of points for which the target value is smaller than $\mathcal{A}(V)$ -- $\{(\alpha_<, \epsilon_<)\}$-- is displayed with a brown and hexagonal background and the vertices of the induced polytope belonging to that set -- $\text{ext}\{(\alpha_<, \epsilon_<)\}$-- are displayed as brown hexagonal points. The set of points for which the target value is bigger than $\mathcal{A}(V)$ -- $\{(\alpha_>, \epsilon_>)\}$-- is displayed with a green droplet background with the vertices of the induced polytope belonging to that set -- $\text{ext}\{(\alpha_>, \epsilon_>)\}$-- displayed as green droplet points.  Note that while any vertex of the induced polytope must be an image of a vertex of the population polytope $\mathcal{P}$ it may very well be that for a given vertex $V_i$ of $\mathcal{P}$ -- even for one such that $V_i-V$ is a vertex vector of an edge at $V$ -- $(\mathcal{A}(V_i), \mathcal{E}(V_i))$ is not a vertex of the induced polytope. The convexity of the optimal cost function (see Lemma~\ref{lem:optconvex}) 
 implies that $\gamma^+_{\min} \geq \gamma^-_{\max}$, middle statement of Theorem~\ref{thm:increasingslope}. One also readily sees that $\left.\frac{\Delta \epsilon}{\Delta \alpha} \right \rvert_V ({\bf p}_>) \geq \gamma^+_{\min}$ and $ \gamma^-_{\max} \geq \left.\frac{\Delta \epsilon}{\Delta \alpha} \right \rvert_V ({\bf p}_<)$ as prescribed by Theorem~\ref{thm:increasingslope}. Also note that the statements of Theorem~\ref{thm:increasingslope} hold whatever the signs of $\left.\frac{\Delta \epsilon}{\Delta \alpha} \right \rvert_V ({\bf p}_>),\gamma^+_{\min}, \gamma^-_{\max}$ and $ \left.\frac{\Delta \epsilon}{\Delta \alpha} \right \rvert_V ({\bf p}_<)$ may be. In particular, for our choice of values for the above figure we have $\gamma^-_{\max} >0$ and $\left.\frac{\Delta \epsilon}{\Delta \alpha} \right \rvert_V ({\bf p}_<)< 0$ from which we trivially get $\gamma^-_{\max}\geq \left.\frac{\Delta \epsilon}{\Delta \alpha} \right \rvert_V ({\bf p}_<)$.}
\label{fig:gradients}
\end{figure}
To illustrate this, consider a point ${\bf p}$ from the population polytope $\mathcal{P}$. We need to distinguish between two situations based on whether the target value of ${\bf p}$ is greater than or less than that of our vertex $V$. If $\mathcal{A}({\bf p}) > \mathcal{A}(V)$, we will label our point as ${\bf p}_>$ and refer to the corresponding values in $\mathcal{Q}_2$ as $(\mathcal{A}({\bf p}_>)$. Conversely, if $\mathcal{A}({\bf p}) < \mathcal{A}(V)$ we denote our point as ${\bf p}_<$ and write $(\alpha_<,\epsilon_<)$ for its associated point in the induced polytope $\mathcal{Q}_2$. Additionally, we denote the set of vertices of $\mathcal{P}$ that are different from $V$ and belong to an edge at $V$ as $\{Z_i\}_i$.

Now, if we label the gradient of the cost vs target functions at $V$ for a given point ${\bf p}$ as
\begin{equation}
    \left.\frac{\Delta \epsilon}{\Delta \alpha} \right \rvert_V ({\bf p}) = \frac{{\bf E} \cdot ({\bf p} - V)}{{\bf a} \cdot ({\bf p} - V)},
\end{equation}
and introduce the notation
\begin{align}
    \gamma^+_{\min} &= \min_{\substack{ Z_i \text{: } \mathcal{A}(Z_i) > \mathcal{A}(V) }} \left.\frac{\Delta \epsilon}{\Delta \alpha} \right \rvert_V (Z_i),\\
    \gamma^-_{\max} &= \min_{\substack{Z_i \text{: } \mathcal{A}(Z_i) < \mathcal{A}(V)}} \left.\frac{\Delta \epsilon}{\Delta \alpha} \right \rvert_V (Z_i),
\end{align}
what we see is the following.
\begin{theorem}\label{thm:increasingslope} $\left.\frac{\Delta \epsilon}{\Delta \alpha} \right \rvert_V ({\bf p}_>) \geq \gamma^+_{\min} \geq \gamma^-_{\max} \geq \left.\frac{\Delta \epsilon}{\Delta \alpha} \right \rvert_V ({\bf p}_<)$.
\end{theorem}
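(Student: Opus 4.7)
My proof plan is to split the chained statement into its three inequalities, prove the middle one ($\gamma_{\min}^+ \geq \gamma_{\max}^-$) first via convexity, then derive the outer two by a common edge-decomposition argument that exploits it (the two outer inequalities are symmetric, so I would write out only one explicitly). For the middle inequality, I would first exploit that $V$ lies on the optimal trajectory, so $\mathcal{E}(V) = \omega_{\text{opt}}(\mathcal{A}(V))$. For each vertex $Z_i$ connected to $V$ by an edge with $\Delta \alpha_i > 0$, every point on the segment $[V, Z_i]$ is attainable and projects (by linearity of $\mathcal{A}$ and $\mathcal{E}$) to a line segment of slope $\gamma_i^+$ in $\mathcal{Q}_2$, so $\omega_{\text{opt}}(\alpha) \leq \mathcal{E}(V) + \gamma_i^+ (\alpha - \mathcal{A}(V))$ for $\alpha$ slightly larger than $\mathcal{A}(V)$. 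Minimising over such edges and passing to the one-sided derivative yields $\omega_{\text{opt}}'(\mathcal{A}(V)^+) \leq \gamma_{\min}^+$, and the analogous argument on the $-$ side reads $\omega_{\text{opt}}'(\mathcal{A}(V)^-) \geq \gamma_{\max}^-$. Chaining these with the convexity inequality $\omega_{\text{opt}}'(\mathcal{A}(V)^+) \geq \omega_{\text{opt}}'(\mathcal{A}(V)^-)$ from Lemma~\ref{lem:optconvex} closes the middle inequality.

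For the first outer inequality I would invoke Lemma~\ref{lemma:edgesminimal} to write
\[
{\bf p}_> - V = \sum_i r_i (Z_i - V), \qquad r_i \geq 0,
\]
where the $Z_i$'s range over the endpoints (other than $V$) of edges at $V$. Setting $\Delta \alpha_i := \mathcal{A}(Z_i) - \mathcal{A}(V)$ and $\Delta \epsilon_i := \mathcal{E}(Z_i) - \mathcal{E}(V)$, and observing that the denominator $\sum_i r_i \Delta \alpha_i$ is strictly positive, the claim reduces to
\[
\sum_i r_i \bigl( \Delta \epsilon_i - \gamma_{\min}^+ \Delta \alpha_i \bigr) \geq 0.
\]
I would then verify termwise nonnegativity by splitting on the sign of $\Delta \alpha_i$: for $\Delta \alpha_i > 0$ the bound is immediate from the definition of $\gamma_{\min}^+$; for $\Delta \alpha_i < 0$ one uses $\gamma_i^- \leq \gamma_{\max}^- \leq \gamma_{\min}^+$ (the middle inequality just proved) and multiplies through by the negative factor $\Delta \alpha_i$ to flip the inequality; and for $\Delta \alpha_i = 0$ the term reduces to $r_i \Delta \epsilon_i$. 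Inequality~3 follows by the symmetric argument applied to ${\bf p}_<$, where the denominator is now negative and so the case split instead uses $\gamma_i^+ \geq \gamma_{\min}^+ \geq \gamma_{\max}^-$ on the $+$ edges.

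The only delicate step is the degenerate case $\Delta \alpha_i = 0$: along such an edge the gradient is undefined and cannot be bounded against $\gamma_{\min}^+$ or $\gamma_{\max}^-$ directly. The resolution is precisely the hypothesis that $V$ sits on the optimal trajectory: if some $Z_i$ shared the target value of $V$ but satisfied $\mathcal{E}(Z_i) < \mathcal{E}(V)$, this would contradict $V$ achieving $\omega_{\text{opt}}(\mathcal{A}(V))$, forcing $\Delta \epsilon_i \geq 0$. This is the single point where the ``$V$ on the optimal trajectory'' hypothesis is essential in the outer inequalities, and I expect it to be the main subtlety of the proof.
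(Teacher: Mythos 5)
Your proposal is correct. For the two outer inequalities it follows essentially the paper's own route: decompose ${\bf p}-V$ via Lemma~\ref{lemma:edgesminimal} into a nonnegative combination of vertex vectors of edges at $V$, split according to the sign of $\Delta\alpha_i$, bound the increasing edges by $\gamma^+_{\min}$, handle the decreasing edges through the already-established middle inequality, and use the optimality of $V$ to force $\mathcal{E}(Z_i)\geq\mathcal{E}(V)$ on edges with $\Delta\alpha_i=0$ --- you correctly singled out this last point as the place where the hypothesis that $V$ lies on the optimal trajectory is indispensable; it is precisely the paper's $\delta_+\geq 0$ step. The only cosmetic difference there is that you argue termwise, whereas the paper aggregates the three groups into $b_1$, $b_2$, $\delta_+$ and rearranges the quotient $(-b_1\gamma^-_{\max}+b_2\gamma^+_{\min}+\delta_+)/(b_2-b_1)$; your termwise version has the small advantage of absorbing the corner case $\mathcal{A}(V)=\alpha_{\min}$ (where $S_-=\emptyset$ and $\gamma^-_{\max}$ is undefined), which the paper treats separately. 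Where you genuinely diverge is the middle inequality $\gamma^+_{\min}\geq\gamma^-_{\max}$: the paper proves it by a direct contradiction in the population polytope, interpolating between an endpoint $Z_+$ realizing $\gamma^+_{\min}$ and an endpoint $Z_-$ realizing $\gamma^-_{\max}$ to produce a point with the same target value as $V$ but strictly lower cost, contradicting optimality of $V$; you instead observe that each edge at $V$ yields $\omega_{\text{opt}}(\alpha)\leq\mathcal{E}(V)+\gamma_i\,(\alpha-\mathcal{A}(V))$ on the corresponding side, so with $\mathcal{E}(V)=\omega_{\text{opt}}(\mathcal{A}(V))$ the one-sided slopes of the convex function $\omega_{\text{opt}}$ (Lemma~\ref{lem:optconvex}) are sandwiched as $\gamma^+_{\min}\geq\omega'_{\text{opt}}(\mathcal{A}(V)^+)\geq\omega'_{\text{opt}}(\mathcal{A}(V)^-)\geq\gamma^-_{\max}$. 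This is valid and non-circular, since Lemma~\ref{lem:optconvex} is derived from convexity of $\mathcal{P}$ alone; it makes explicit the link to convexity that the paper only mentions informally, at the price of invoking one-sided derivatives (or, more simply, the three-chord inequality on difference quotients) where the paper's construction is more elementary and self-contained. Both versions consume the same hypotheses, namely Lemma~\ref{lemma:edgesminimal}, Theorem~\ref{thm:edgesadjacentvalued} only implicitly, and the optimality of $V$.
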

The statement of Theorem~\ref{thm:increasingslope} can be understood as three different inequalities. The inequality $\gamma_{\min}^+ \geq \gamma_{\max}^-$ means that the gradient of cost vs target at $V$ of any vertex of an edge at $V$ that increases the target value must be greater (or equal) than the gradient of cost vs target at $V$ of any vertex of an edge at $V$ that decrease the target value. This relationship is closely tied to the convexity of the induced polytope and that the optimal cost function lies on the boundary of that polytope, a proof of it can be found in Appendix~\ref{app:increasingslope}. The inequality $\left.\frac{\Delta \epsilon}{\Delta \alpha} \right \rvert_V ({\bf p}_>) \geq \gamma^+_{\min}$ implies that the gradient of any point increasing the target value has to at least be $\gamma_{\min}^+$. This hints at the fact that the most economic way to increasing the target value is to do so at a slope of $\gamma_{\min}^+$ -- or along an edge inducing a $\gamma_{\min}^+$ gradient to be more precise -- and will be crucial to the construction of the optimal trajectory in Sec.~\ref{proof:conclusion}. The proof of that inequality heavily relies on the fact that $\gamma_{\min}^+ \geq \gamma_{\max}^-$. It starts by expressing ${\bf p}-V$ as the positive combination of vertex vector of edges at $V$ as in Lemma~\ref{lemma:edgesminimal} and then uses $\gamma_{\min}^+ \geq \gamma_{\max}^-$ to lower bound $\Delta \epsilon$. The full proof can be found in Appendix~\ref{app:increasingslope}. Finally the inequality $\gamma^-_{\max} \geq \left.\frac{\Delta \epsilon}{\Delta \alpha} \right \rvert_V ({\bf p}_<)$ 
serves as an analog of the previous one, demonstrating that the most economic way to decrease the target value is to do so at a slope of $\gamma^-_{max}$. Its proof, also in Appendix~\ref{app:increasingslope}, is derived by modifying the proof for $\left.\frac{\Delta \epsilon}{\Delta \alpha} \right \rvert_V ({\bf p}_>) \geq \gamma^+_{\min}$ to account for the fact that now $\Delta \alpha <0$, which reverses the direction of the inequality.
\subsection{The optimal trajectory}\label{proof:conclusion}
We now have the two ingredients to build up the optimal trajectory. We demonstrate --- using induction --- that it can always be built as a sequence of vertices connected by edges formed through av-swaps.

To begin, we select the starting point of the trajectory to be a vertex, as established in Sec. \ref{proof:extrematrajectory}, completing the base case. For the induction step we place ourselves at a vertex $V$ of the optimal trajectory and consider a vertex $Z_i$ of $\mathcal{P}$ such that $Z_i - V$ is a vertex vector of an edge at $V$ with $i \in S_+$ and $\gamma_i = \gamma_{\min}^+$. For any point ${\bf p} \in \mathcal{P}$ satisfying $ \mathcal{A}(V) < \mathcal{A}({\bf p}) \leq \mathcal{A}(Z_i)$ we can select a point ${\bf p^\prime}$ on the edge at $V$ generated by $Z_i - V$ that shares the same target value:
\begin{equation} \label{equ:pprime}
    {\bf p}' = r' \, (Z_i - V) + V,
\end{equation}
with $0 \leq r'=\frac{\mathcal{A}({\bf p})-\mathcal{A}(V)}{\mathcal{A}(Z_i)-\mathcal{A}(V)} \leq 1$. We now demonstrate that $\mathcal{E}({\bf p^\prime}) \leq \mathcal{E}({\bf p})$ for all ${\bf p} \in \mathcal{P}$.

Given that $\mathcal{A}({\bf p}) > \mathcal{A}(V)$, Theorem~\ref{thm:increasingslope} from Sec.~\ref{proof:gradientedge} implies that the cost function of ${\bf p}$ can be bounded from below as follows.
\begin{align}
    \mathcal{E}({\bf p}) &= \mathcal{E}(V) + \frac{\mathcal{E}({\bf p}) - \mathcal{E}(V)}{A({\bf p}) - A(V)} \left( A({\bf p}) - A(V) \right) \\
    &\geq \mathcal{E}(V) + \gamma^+_{min} \, \Delta \alpha, \label{equ:dynamical_lower_bound}
\end{align}
where $\Delta \alpha := \mathcal{A}({\bf p}) - \mathcal{A}(V)$. For ${\bf p^\prime}$, using Eq.~\ref{equ:pprime}, the cost is given by:
\begin{align}
    \mathcal{E}({\bf p}')&=  r' \frac{\mathcal{E}(Z_i) - \mathcal{E}(V)}{\mathcal{A}(Z_i)-\mathcal{A}(V)} [\mathcal{A}(Z_i)-\mathcal{A}(V)]+\mathcal{E}(V).
\end{align}
Simplifying, we get
\begin{align}
    \mathcal{E}({\bf p}')=  \gamma_{\min}^+ \, \Delta \alpha +\mathcal{E}(V)\leq \mathcal{E}({\bf p}).
\end{align}
This shows that ${\bf p}'$ is indeed a solution to our optimisation problem for $\alpha = \mathcal{A}({\bf p}')$. Since the above holds for any $\mathcal{A}(V) < \alpha \leq \mathcal{A}(Z_i)$, we have shown that
\begin{equation}
    {\bf p}(r) = r \, (Z_i - V) + V, \quad r \in (0,1],
\end{equation}
is a solution of the optimisation Eq.~\ref{equ:main_question_pop} for $\alpha \in (\mathcal{A}(V), \mathcal{A}(Z_i)]$. Therefore, given the optimal trajectory up to $V$, we can extend the trajectory from $V$ to another vertex $Z_i$. Moreover, since $Z_i-V$ is a vertex vector of an edge at $V$, Theorem~\ref{thm:edgesadjacentvalued} ensures that $Z_i$ is obtained from $V$ via an av-swap. This completes the induction step and, consequently, proves that the entire optimal trajectory is a sequence of vertices connected by edges generated by av-swaps.

\subsection{The optimal trajectory in the density matrix and unitary spaces}\label{proof:liftedtrajectory}
Our goal now is to extend the optimal trajectory to its corresponding representations in the spaces of density matrices and unitary transformations. Given that the mapping from density matrices to population vectors, as well as the mapping from unitary transformations to doubly-stochastic transformations, are both many-to-one, we can anticipate multiple ways to reproduce the optimal trajectory in these higher-dimensional spaces. Here, we opt for what is arguably the simplest choice.

To start, recall that the population vector represents the diagonal elements of $\rho \in \mathcal{H}_R$ w.r.t. the preferred basis of $\mathcal{H}_R$. This means that a given population vector determines the diagonal elements of $\rho$. Additionally, since the eigenvalues of $\rho$ remain unchanged under unitary transformations, in the specific case where the population vector is a permutation of the eigenvalue vector, $\rho$ must necessarily be diagonal. 

To see why this is true, suppose the population vector of $\rho$ was as above and that $\rho$ had a non-zero off-diagonal element, $\rho_{ij}$. In that case, a unitary transformation could rotate $\rho_{ij}$ into the diagonal elements $\bra{i} \rho \ket{i}$ and $\bra{j} \rho \ket{j}$, leaving all other diagonal elements unchanged. This operation would increase the value of the larger of $\bra{i} \rho \ket{i}$ and $\bra{j} \rho \ket{j}$, violating Schur's theorem. Thus given every vertex $V \in \mathcal{P}$ in the optimal trajectory, the corresponding density matrix $\rho_V$ is fixed: a diagonal state with the elements of $V$  upon its diagonal.

To complete the trajectory we need to determine the density matrix corresponding to an edge between two optimal vertices. Since an edge corresponds to an av-swap, which is a 2-cycle, the points along the edge are represented by partial swaps. At the level of doubly stochastic matrix, this is enacted by the following:
\begin{align}
    D &= t \ket{i}\!\bra{i} + (1-t) \ket{i}\!\bra{j} + \nonumber\\
     &\quad (1-t) \ket{j}\!\bra{i} + t \ket{j}\!\bra{i} + \sum_{n \notin{i,j}} \ket{n}\!\bra{n},
\end{align}
where $i$ and $j$ are the indices of the elements being swapped. This is in fact also a unistochastic matrix, i.e., $D_{mn} = \left| u_{mn} \right|^2$ for some unitary $U=(u_{mn})_{mn}$. By choosing the simplest phase configuration for the unitary elements, one possible form of the corresponding unitary matrix is
\begin{align}\label{equ:2dimunitary}
    U &= \cos\theta \ket{i}\!\bra{i} + \sin\theta \ket{i}\!\bra{j} + \nonumber\\
     &\quad -\sin\theta \ket{j}\!\bra{i} + \cos\theta \ket{j}\!\bra{i} + \sum_{n \notin{i,j}} \ket{n}\!\bra{n},
\end{align}
where $\cos^2\theta = t$,  which represents a rotation within the 2-dimensional subspace of $\mathcal{H}_R$ spanned by $\ket{i}$ and $\ket{j}$. The sought-after density matrix is given by 
\begin{equation}
U \rho_V U^{\dagger}.
\end{equation}
On the other hand, the trajectory in the unitary space is simply given by the appropriate succession of unitaries as in Eq~\ref{equ:2dimunitary}, concatenated via a suitable parametrisation.  
Finally note that all the above expressions are with respect to the preferred basis, i.e. the common eigenbasis of target and cost observables.

In conclusion, constructing the optimal trajectory in both state and unitary spaces is as straightforward as in the population space. We begin by finding the diagonal state $\rho_{\min}$, as well as associated permutation $U_{\min}$, that minimizes the target while satisfying the minimal target cost constraint. Then, we perform a series of 2-level unitary rotations, each swapping the optimal pair of elements, until the target reaches its maximal value.

\subsection{Features of the optimal trajectory for qubit cooling} \label{proof:qubitfeatures}
In this section we apply the results of Sec.~\ref{sec:result} and Sec.~\ref{sec:proofmain} to the specific case of ground state cooling of a qubit system. This allows us to uncover insightful properties of the optimal trajectory in this scenario.

There are two properties of the optimal swap that at first glance are unrelated: 1) that it minimises the gradient of the cost function and 2) that it switches adjacent-valued populations. We can illustrate the connection between them for the case of ground state cooling of a qubit system; in this case we can prove that 1) implies 2). As we will see below, if the machine has non-degenerate eigenvalues this is necessarily true. In the presence of degenerate machine energy, assuming 1) we can always choose an optimal swap satisfying 2).

Take a joint state of the system and machine that is at a vertex of the optimal trajectory. A key property of the state is that it is \textit{passive w.r.t. the system subspaces}, by which we mean: the eigenvalues in the ground/excited subspace of the system are ordered non-increasingly w.r.t. energy,
\begin{align}
    p_{xm} \leq p_{xn} \; \forall \; m > n, \; x \in \{0,1\}.
\end{align}
If this were not the case then one could swap two populations within the same subspace and decrease the energy of the state; this maintains the ground state population of the system while decreasing the energy, contradicting the assumption that the state was optimal.

We can now prove that if the machine has distinct energies, the property of being adjacent-valued is necessary to maintain this ``subspace-passivity'' of the optimal trajectory. Consider the swap of states $\ket{0i}_{SM}$ and $\ket{1j}_{SM}$ with populations $p_{0i} < p_{1j}$, so that cooling is possible via this swap. If these are not adjacently-valued, then either $p_{0,i-1}$ or $p_{1,j+1}$ lies in-between the pair of values. Consider the first case. Then the swap of $p_{0,i-1}$ and $p_{1j}$ would also lead to cooling, but with a gradient
\begin{align} \label{eq:optimalgradientcooling}
    E^{(M)}_{i-1} - E^{(M)}_j-E_S < E^{(M)}_i - E^{(M)}_j-E_S.
\end{align}
This beats the original swap, which is thus not optimal. The second case works analogously. In short, we must have 
\begin{equation}
    p_{1, j+1} \leq p_{0,i} < p_{1,j} \leq p_{0 ,i-1}.
    \end{equation}
    The argument is visualised in Fig. \ref{tab:adjacentvaluedpassive}.

\begin{table}[h]
    \renewcommand{\arraystretch}{1.2}
    \centering
    \begin{tabular}{ |c||c|c|c|c|c|c|c|c| }
        \hline
         & & ... & $\ket{j}_M$ & ... & ... & $\ket{i}_M$ & ... & \\
        \hline\hline
        $\ket{0}_S$ & & & & ... & $p_{0,i-1}$ & $p_{0,i}$ & $p_{0,i+1}$ & ... \\
        \hline
        $\ket{1}_S$ & ... & $p_{1,j-1}$ & $p_{1,j}$ & $p_{1,j+1}$ & ... & & &\\
        \hline 
    \end{tabular}
    \caption{State of system and machine prior to swap of $p_{0,i}$ and $p_{1,j}$. In order for each subspace to remain passive, it must be that the populations are adjacent-valued.}
    \label{tab:adjacentvaluedpassive}
\end{table}

When the machines has degenerate energies, since the inequality of Eq.~\ref{eq:optimalgradientcooling} can become an equality, 1) does not necessarily imply 2) anymore. Nevertheless, assuming 1) one can always choose an optimal swap such that 2) is fulfilled. Note that the above argument only works for binary-valued target functions. For more complicated targets one can prove that a non-adjacent-valued swap can be broken down into a sequence of av-swaps, one of which outperforms the original, this follows from Theorem \ref{thm:edgesadjacentvalued} and Lemma \ref{lemma:edgesminimal}.

Finally, note thta if the system to be cooled is a qubit, then there is only a single system energy gap $E_S$. Every swap that can change the value of the ground state population must involve a pair of states $\ket{0\,i}_{SM},\ket{1\,j}_{SM}$, with populations $p_{0i}<p_{1j}$. The gradient of work cost vs target for this swap is thus
\begin{align}
    \frac{W}{\Delta \alpha} &= E_i^{(M)} - E_j^{(M)} - E_S.
\end{align}
When we compare the gradient for different swaps in order to pick the minimum, the energy gap of the system appears as the same constant everywhere --- it can thus be ignored in constructing the optimal trajectory.

\section{Generalised scenario}
\label{sec:GeneralisedScenario} 

When transforming the state of a system, physical symmetries often impose significant constraints. Examples of such symmetries include temporal, spatial, and rotational symmetries, which correspond to the conservation of the well known physical quantities of energy, momentum, and angular momentum, respectively. In quantum theory, these constraints restrict the allowed transformations of a state. Specifically, if there is a conserved physical observable $\mathcal{O}_C$, the only permissible unitary transformations are those that commute with $\mathcal{O}_C$.

In this section, we extend our main result by introducing a generalized framework that includes an additional conserved observable, $\mathcal{O}_C$, which commutes with both the target and cost observables. This generalization addresses a broader class of problems, as it applies to any such  $\mathcal{O}_C$, with our original result emerging as a special case when $\mathcal{O}_C=\mathds{1}$. We begin by presenting the generalized result and then demonstrate its application to the specific context of incoherent cooling, a central paradigm within quantum thermodynamics.

As in the original version of our problem, let $R$ be a finite-dimensional quantum system initially in a state $\rho \in \mathcal{L} (\mathcal{H}_R)$, and let there be two commuting observables, $\mathcal{O}_{\mathcal{A}}$ resp. $\mathcal{O}_{\mathcal{E}}$ that we refer to as the target and cost observable. We now additionally bring into the picture a third observable, $\mathcal{O}_C$, that commutes with both $\mathcal{O}_{\mathcal{A}}$ and $\mathcal{O}_{\mathcal{E}}$, and we demand that $\mathcal{O}_C$ is conserved upon our unitary state transformation. So in short $\mathcal{O}_{\mathcal{A}}, \mathcal{O}_{\mathcal{E}}$ and $\mathcal{O}_{C}$ form a set of commuting observables and we restrict our unitary transformations to those commuting with $
\mathcal{O}_C$. Our modified question can hence be formulated as follows. For any achievable value of $\alpha$, what is
\begin{equation}
\begin{aligned}
    \min_{\{ U \; : \; [U, \mathcal{O}_C]=0\} } \text{Tr} \left( \mathcal{O}_{\mathcal{E}} \,  \sigma \right), \quad \text{s.t. }& \text{Tr}\left( \mathcal{O}_{\mathcal{A}} \, \sigma \right) = \alpha,
    \end{aligned}
    \end{equation}
with $\sigma = U \rho U^{\dagger}$. As we can readily see, the original problem of Eq.~\ref{equ:main_question_pop} is recovered by  taking $\mathcal{O}_C = \mathds{1}$ and noting that $\mathcal{X}[\sigma] = \text{Tr} \left( \mathcal{O}_X \sigma \right)$, where $X=\mathcal{A}, \mathcal{E}.$

The analysis in the case of a conserved quantity runs analogously to the general case. The key revision is that the new set of attainable populations forms a direct product of population polytopes rather than a single one. In the following sections we spell this out in greater details.

\subsection{Conserved subspaces and incoherent states}
To begin with we decompose the Hilbert space $\mathcal{H}$ into a direct sum of eigenspaces of $\mathcal{O}_C$ (we omit the subscript $R$ for simplicity)
\begin{align}
    \mathcal{H} &= \bigoplus_{i=1}^{N_C} \mathcal{H}^{(i)},
\end{align}
where $N_C$ is the number of distinct eigenvalues of $\mathcal{O}_C$. If we label the projector onto $\mathcal{H}^{(i)}$ as $\Pi^{(i)}$, then by inserting twice the identity $\mathds{1}=\sum_i \Pi^{(i)}$,  we can split the initial state into two parts:
\begin{align}
    \rho &= \sum_{i=1}^{N_C} \rho^{ii} + \sum_{i \neq j} \rho^{ij}\\
    &= \rho^{dephased} + \rho^{phase},
\end{align} 
where $\rho^{ij} := \Pi^{(i)} \rho \; \Pi^{(j)} $. Under energy-preserving unitaries, $\rho^{dephased}$ is independent of $\rho^{phase}$, i.e. its elements are not affected by the elements of the latter~\cite{lipkabartosik-2023}. Indeed, by direct calculation
\begin{equation}
    \sigma^{ij} = U^{ii} \rho^{ij} \left( U^{jj} \right)^\dagger,
\end{equation}
where $X^{ij}=\Pi^{(i)} X \Pi^{(j)}$, $X=\sigma, U, \rho$. Thus we need not keep track of any of the phases, and can instead simply consider the dephased state. Crucially, this also means that for the purpose of the Schur-Horn theorem which we will use shortly, we can consider the eigenvalues of the state to be that of the dephased state. Going forward we will stop keeping track of $\rho^{phase}$, effectively working with the dephased version, that we call an \textit{incoherent} state.

As the state (or rather the effective dephased version) is itself block-diagonal w.r.t. the eigenbasis of $\mathcal{O}_C$, we can express it as
\begin{align}\label{eq:rhodirectsum}
    \rho &= \bigoplus_{i=1}^{N_C} \rho^{(i)},
\end{align}
where $\rho^{(i)} = \rho^{ii}$ from above. This structure has a critical implication: the eigenvalues of the full state can be determined independently from each subspace. Specifically, we have:
\begin{align}
    \boldsymbol{\lambda}\left[ \rho \right] &= \bigoplus_{i=1}^{N_E} \boldsymbol{\lambda} \left[ \rho^{(i)} \right],
\end{align}
where $\boldsymbol{\lambda}[\sigma]$ stands for the vector of eigenvalues of $\sigma$, with $\sigma$ representing either $\rho$ or $\rho^{(i)}$. For ease of notation we label the eigenvalue vector corresponding to the $i^{th}$ block, $\boldsymbol{\lambda}\left[ \rho^{(i)} \right]$, as $\boldsymbol{\lambda}^{(i)}$.

As the unitary operation commutes with $\mathcal{O}_C$, it also inherits a block-diagonal structure with respect to the eingenbasis of $\mathcal{O}_C$:
\begin{align}
    U &= \bigoplus_{i=1}^{N_C} U^{(i)},
\end{align}
where each $U^{(i)}=U^{ii}$ is a unitary operator acting on $\mathbb{C}^{d_i}$. Here $d_i$ represents the dimension of the $i^{th}$ subspace, corresponding to the multiplicity of the $i^{th}$ distinct eigenvalue of $\mathcal{O}_C$.

To finalize our setup, and in direct analogy to the setup of our original problem, we select a basis—referred to as {\it the preferred basis}—in which all three operators $\mathcal{O}_C$, $\mathcal{O}_A$ and $\mathcal{O}_E$ are diagonal. This choice is possible because the operators pairwise commute. Notably, in this basis, both the target and cost functions are linear with respect to the diagonal elements of the state $\rho$ -- referred to as the {\it poppulations} of $\rho$.

\subsection{Generalized population polytope}
\label{subsec:PG}
We now turn our attention to investigating the set of achievable population vectors, $\bf{p}$, within the generalised scenario. We find that this set also forms a polytope, which we refer to as the { \it generalized population polytope} $\mathcal{P}_G$.

First observe that within each block $\mathcal{H}^{(i)}$, the Schur-Horn theorem applies just as in the original problem. Specifically, the diagonal elements with respect to the preferred basis in each block form a vector, $\textbf{p}^{(i)}$, which is majorized by the eigenvalues of that block:
\begin{align}
    \boldsymbol{\lambda}^{(i)} \succ \textbf{p}^{(i)}.
\end{align}
This relationship implies the existence of a doubly stochastic matrix $D^{(i)}$ such that $\textbf{p}^{(i)} = D^{(i)} \boldsymbol{\lambda}^{(i)}$. By varying over all possible unitaries within each block independently, we obtain the entire set of doubly-stochastic transformations of $\boldsymbol{\lambda}^{(i)}$ for each block.

Building on the results of the original problem, we conclude that the set of attainable populations within each block $\mathcal{H}^{(i)}$ is the population polytope whose vertices are permutations of $\boldsymbol{\lambda}^{(i)}$. Extending this to the overall system, the population vector ${\bf p}$ can be expressed, thanks to Eq.~\ref{eq:rhodirectsum}, as:
\begin{align}
    \textbf{p} &= \bigoplus_{i=1}^{N_C} \textbf{p}^{(i)},
\end{align}
where $\textbf{p}^{(i)} \in \mathbb{R}^{d_i}$ is the population of the $i^{\text{th}}$ subspace. Consequently, the set of allowed population vectors may be expressed as the direct product of the population polytopes of each subspace:
\begin{align}
    \mathcal{P}_G &= \mathcal{P}^{(1)} \times ... \times \mathcal{P}^{(N_C)},
\end{align}
where each $\mathcal{P}^{(i)} \in \mathbb{R}^{d_i}$ is the convex hull of permutations of $\boldsymbol{\lambda}^{(i)}$.

As a direct product, the structure of $\mathcal{P}_G$ follows straightforwardly from its constituent polytopes. In Appendix \ref{app:facesdirectproduct} we show that the faces of $\mathcal{P}_G$ are the direct product of faces from each constituent polytopes. In particular, the edges of $\mathcal{P}$ arise from the direct product of an edge from any one $\mathcal{P}^{(i)}$ with vertices from the remaining polytopes. Furthermore, we already know how to generate the edges of $\mathcal{P}^{(i)}$: namely via an adjacently-valued swap within that subspace, where adjacency is defined solely with respect to the eigenvalues $\boldsymbol{\lambda}^{(i)}$.

\subsection{Optimal trajectory of the generalized scenario}

We may now construct the optimal trajectory of the generalized scenario. As for the original scenario, we begin by picking a point corresponding to the minimum of the target function, i.e. $\alpha_{min}$. As we can pick points within each polytope $\mathcal{P}^{(i)}$ independently, picking the minimal point in $\mathcal{P}_G$ corresponds to picking a minimal point w.r.t. each $\mathcal{P}^{(i)}$. More precisely, we can express the cost and target vector of coefficients as direct sums w.r.t. the subspaces $\mathcal{H}^{(i)}$,
\begin{align}
    \textbf{a} &= \textbf{a}^{(1)} \oplus ... \oplus \textbf{a}^{(N_C)}, \\
    \textbf{E} &= \textbf{E}^{(1)} \oplus ... \oplus \textbf{E}^{(N_C)}.
\end{align}
We may then express the cost and target functions as
\begin{align}
    \mathcal{A}(\textbf{p}) &= \sum_{i=1}^{N_C} \textbf{a}^{(i)} \cdot \textbf{p}^{(i)}, \\ 
    \mathcal{E}(\textbf{p}) &= \sum_{i=1}^{N_C} \textbf{E}^{(i)} \cdot \textbf{p}^{(i)}.
\end{align}
From each $\mathcal{P}^{(i)}$, we select a vertex $V^{(i)}$ that minimizes the value of $\textbf{a}^{(i)} \cdot \textbf{p}^{(i)}$. If multiple vertices achieve this minimum, we choose among them the one that has the minimal $\textbf{E}^{(i)} \cdot \textbf{p}^{(i)}$. The direct sum $V^{(1)} \oplus \cdots \oplus V^{(N_C)}$ then defines the minimal point along the optimal trajectory.

Continuing, recall that we proved in Sec. \ref{proof:gradientedge} that one can move along the optimal trajectory from any vertex $V$ by selecting an edge (at $V$) of the population polytope with the minimal cost versus target gradient. As the edges of the generalized polytope correspond to av-swaps within a single subspace, we simply pick the av-swap having the minimal gradient among all subspaces to continue the trajectory until the next vertex. Proceeding in this fashion, we eventually reach the maximal point of the trajectory.

Note that an equivalent description of the optimal trajectory is as follows. One can treat each pair of scalar products $ \{\textbf{a}^{(i)} \cdot \textbf{p}^{(i)}, \textbf{E}^{(i)} \cdot \textbf{p}^{(i)}\}$ as cost and target functions for the $i^{th}$ subspace. Using these, one can independently construct the optimal trajectory of cost versus target for each polytope $\mathcal{P}^{(i)}$. Each trajectory corresponds to a sequence of av-swaps progressing from a minimal to a maximal vertex, as established in our treatment of the original problem. The globally optimal trajectory is then constructed by starting from a direct sum of the minimal vertices and, at each step, choosing the most efficient av-swap among the individual trajectories. This process continues until reaching the direct sum of the maximal vertices.

\subsection{Application: incoherent cooling}

A thermodynamic application to the generalised scenario is incoherent cooling, the motivation for which stems from the need to spell out more explicitly the resources used to perform the transformation from the coherent cooling scenario. Indeed, coherent cooling assumes the ability to perform any joint unitary on the system and machine. This freedom requires two implicit resources:
\begin{enumerate}
    \item a \textit{battery}, i.e. a system that can exchange energy freely with the system and machine,
    \item a \textit{clock}, i.e. a phase reference that allows one to keep track and manipulate evolving phases in the system and machine.
\end{enumerate}

Alternatively, one may work without the above resources by shifting to an \textit{incoherent} scenario. Rather than an implicit battery one takes a (explicit) heat bath $B$ as the source of energy, and the unitaries are constrained to be energy-preserving on the joint $SMB$ system. The question of state transformation is then that of transforming a general state of the system $\rho_S$ via an energy-preserving unitary on $SMB$,
\begin{align}
    \rho_S^\prime &= \Tr_{MB} \left[ \tilde{U}_{SMB} \left( \rho_S \otimes \tau_M \otimes \tau_B \right) \tilde{U}_{SMB}^\dagger \right],
\end{align}
where $\tilde{U}_{SMB}$ commutes with the total Hamiltonian --- which consists of the sum of local Hamiltonians --- and $\tau_X = e^{-\beta_X H_X}/\Tr \left[e^{-\beta_X H_X}\right]$ is a thermal state of system $X$ at inverse temperature $\beta_X$. The heat source is hotter than the temperature of the environment, i.e. $\beta_B < \beta_M$ so that the process is mediated by the heat flow from $B$ to $M$, which increases entropy.

The target function is typically the population of the ground state of the system, corresponding to the observable $\Pi^{(0)}_S \otimes \openone_M \otimes \openone_B$, the cost function is the average energy of the hot bath,\footnote{Here too the cost is the \textit{change} in average energy of $B$, but the initial energy is independent of the unitary transformation, so we may ignore it for the optimisation and subtract it at the end.} corresponding to the observable $\openone_{SM} \otimes H_B$. These two together with the total energy observable form a trio of pairwise commuting observables, and the preferred basis is the tensor product of the energy eigenbases of each system.

\subsection{Incoherent cooling example}

As an example to the above incoherent cooling scenario, consider a qubit system $S$, qutrit machine $M$, and qubit heat source $B$, all with an energy spacing of $\Delta$. The joint Hamiltonian has twelve energy eigenstates, falling into five degenerate spaces as displayed in Fig. \ref{fig:incoherent}.
\begin{figure}[h]
    \centering
    \includegraphics[width=0.9\columnwidth]{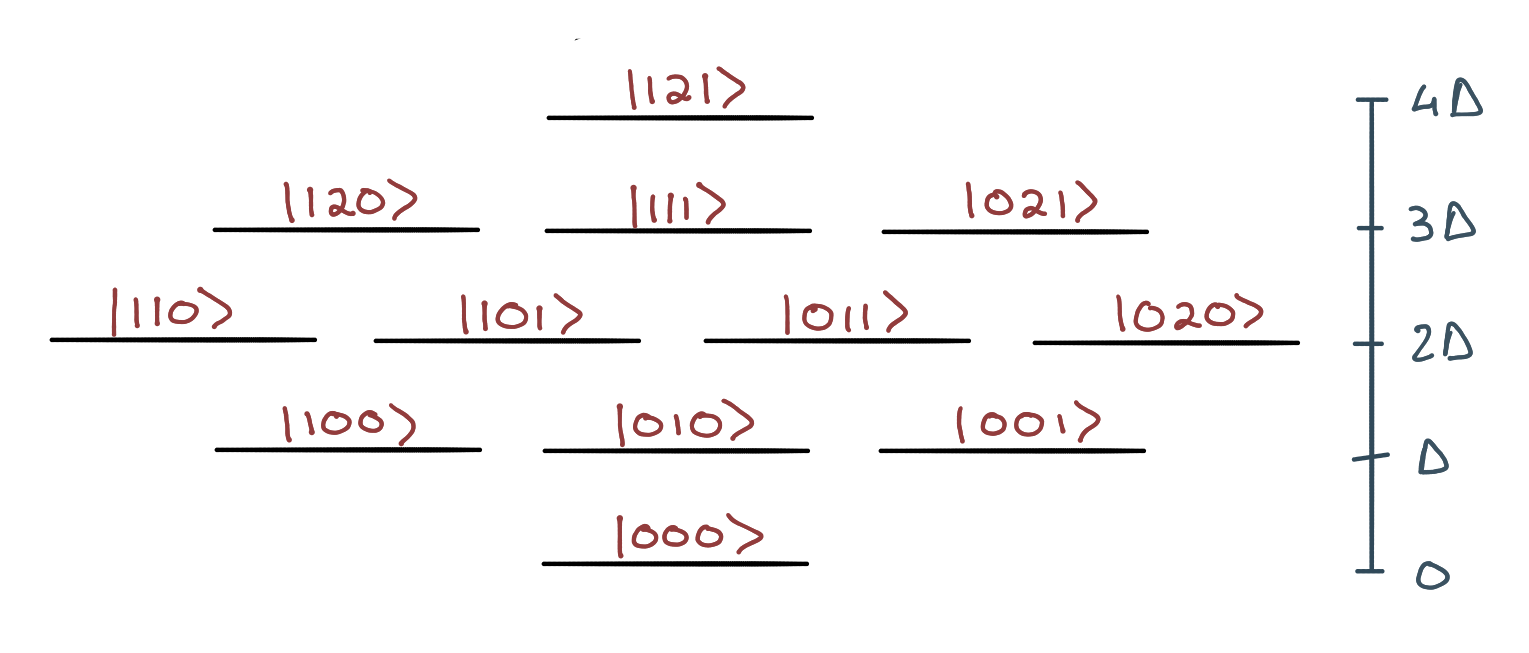}
    \caption{Degenerate energy subspaces for qubit $S$, qutrit $M$ and qubit $B$ each with level spacing $\Delta$. Within each subspace every doubly stochastic transformation of the populations is possible, the resulting polytope is the convex hull of the direct sums of permutations in each block.}
    \label{fig:incoherent}
\end{figure}

The subspaces $\mathcal{H}^{(i)}$ therefore correspond to the eigenspaces $\text{Eig}(0)$, $\text{Eig}(\Delta)$, $\text{Eig}(2 \Delta)$, $\text{Eig}(3 \Delta)$ and $\text{Eig}(4 \Delta)$. And indeed, as should be,
\begin{equation}
    \mathcal{H} = \oplus_{i=0}^4 \text{Eig}( i \Delta).
\end{equation}
We next decompose $\boldsymbol{\lambda}$, the vector of eigenvalues of $\rho_S \otimes \tau_M \otimes \tau_B$, in the above subspaces, delivering
\begin{equation}
    \boldsymbol{\lambda}=\boldsymbol{\lambda}^{(0)} \oplus \boldsymbol{\lambda}^{(1)} \oplus \boldsymbol{\lambda}^{(2)} \oplus \boldsymbol{\lambda}^{(3)} \oplus \boldsymbol{\lambda}^{(4)}.
\end{equation}
If we next denote the population polytope associated to $\text{Eig}( j \Delta)$ by $\mathcal{P}^{(j)}$, applying what we learned from Sec.~\ref{subsec:PG}, the generalized population polytope $\mathcal{P}_G$ is thus 
\begin{equation}
    \mathcal{P}_G= \mathcal{P}^{(0)} \times \mathcal{P}^{(1)} \times \mathcal{P}^{(2)} \times \mathcal{P}^{(3)} \times \mathcal{P}^{(4)}.
\end{equation}
Furthermore, the vertices of $\mathcal{P}_G$ are given by the direct sum of permutations of the $\boldsymbol{\lambda}^{(i)}$. To this end, note that as $\text{Eig}(0)$ and $\text{Eig}(4 \Delta)$ are singular points, $ \boldsymbol{\lambda}^{(0)}$ and $\boldsymbol{\lambda}^{(4)}$ are in fact real numbers. In contrast, $\boldsymbol{\lambda}^{(1)}$ and $\boldsymbol{\lambda}^{(3)}$ are 3-dimensional vectors, and $\boldsymbol{\lambda}^{(2)}$ is a 4-dimensional vector. As the number of vertices of $\mathcal{P}_G$ is the product of the number of vertices of the respective $\mathcal{P}^{(i)}$, it has
\begin{equation}
    6 \times 24 \times 6 = 864
\end{equation}
vertices. Note, however, that for most practical purposes, most of the swaps do not lead to cooling. For example, suppose that $S$ is initially thermal at the environment temperature and that the temperature of $B$ is infinite, then the only energy preserving swap leading to cooling of $S$ is the $\ket{101} \leftrightarrow \ket{020}$ swap.

\section{Conclusion and outlook}

\textbf{Summary.} In this work, we looked at the unitary transformation of a quantum system with two relevant commuting observables. We were interested in minimising the average value of the first observable (the cost observable) while reaching a desired average value on the second observable (the target observable). Solving this problem provides a general framework for addressing a broad class of thermodynamic questions, including determining the minimal energy cost required to cool a quantum system to a specified temperature and identifying the optimal unitaries for such cooling processes.

The crux of the above depicted problem is that the set of unitaries is a large and topologically non-trivial space that makes it challenging to optimize over. To address this issue, we systematically reformulated the problem through a sequence of mathematical reductions. Starting from the space of 
$d$-dimensional unitary operations, we utilized the Schur-Horn and Birkhoff-von Neumann theorems to simplify the problem into the Birkhoff polytope of doubly stochastic matrices, and ultimately to the population polytope $\mathcal{P}$. This allowed us to define the minimal cost function $\omega_{opt}(\alpha)$ and identify the corresponding state trajectories, including the explicit construction of an optimal trajectory. 

Due to the convexity of the population polytope, the optimal cost function is also convex. This property, along with properties of the edges of the population polytope, enabled us to develop a framework based on two-level swaps to derive the trajectory that transitions from an optimal state of minimal target value $\alpha_{min}$ to an optimal state of maximal target value $\alpha_{max}$. For any given target value $\alpha$, the cost function is minimized along this trajectory, which we refer to as the optimal trajectory. Importantly, the entire trajectory can be composed as a sequence of 2-level swaps. 

We then proceeded to show how this result solves the question of determining the minimal energy cost as well as optimal state transformation required for cooling a finite-dimensional quantum system, referred to as the {\it system}, with the help of another finite dimensional quantum system, referred to as the {\it machine}, under joint unitary operations. We furthermore explicitly worked out the minimal energy cost required for ground state cooling a qubit system with a given 4-dimensional machine in this scenario. 

We finally solved a generalized version of the above optimization problem to the case where an additional commuting conserved observable was taken into account. This in turn solves another thermodynamic scenario, namely that of determining the optimal transformation when cooling a finite dimensional quantum system with the help of a finite dimensional machine and a finite dimensional resourceful bath, the allowed transformation being a global energy conserving unitary.

\textbf{Virtual qubits and temperatures.} Our work emphasizes the central role of 'virtual qubits' in the context of quantum thermodynamics and state transformation. The concepts of `virtual qubits' and `virtual temperatures' as a means of understanding thermal machines with discrete energy levels were first introduced in \cite{Brunner2012}, showing how the smallest possible machines, such as a 2-qubit engine or a 3-qubit fridge, could be understood as coupling the system of interest to a suitable 2-level energy gap of the machine.

Since then, a number of results have demonstrated the utility of virtual qubits in finding temperature limits in quantum thermal machines. Put simply, finding the lowest/highest possible temperature amounts to determining the coldest/hottest/most inverted virtual qubit that can be made in the machine, regardless of machine size or structure \cite{Silva2016,Clivaz-2019}, including in the presence of coherence between energy levels \cite{soldati2024}.

This paper extends the above argument to the problem of efficiency: here we not only care about the target variable (e.g. temperature), but also the cost for achieving said temperature (e.g. energy). Once again, the problem is reduced to finding the optimal 2-level operation, i.e., virtual qubit. In particular, for qubit cooling the problem reduces to finding the smallest energy gap in the machine that is `colder' --- i.e. has a higher population ratio in ground vs excited --- than the system.

\textbf{Complexity of the optimal trajectory.} Now that we have reduced the question of optimal unitary transformation to that of qubit gates, a natural follow-up is to investigate the `complexity' of the optimal trajectory. In particular:
\begin{enumerate}
    \item the `length', i.e. the number of qubit gates that form the optimal trajectory,
    \item the `complexity' of each step, i.e. how easy it is to determine the optimal qubit gate from among all possible qubit gates.
\end{enumerate}
The latter question is especially relevant for the feasibility of programming the optimal trajectory for scenarios beyond the scope of analytical results.

\textbf{Sequential unitaries.} In this work we focused exclusively on a single unitary application. A natural extension to both thermodynamic scenarios we looked at is however to allow for repeated interaction between the system and machine, where the machine typically thermalises back to its initial state in between the interactions. The question of energetically efficient transformation is also highly relevant in that realm. We hope that our work helps shed some light into the relevant tools to tackle energetic efficiency in that regime.

\textbf{Optimal pair of machine and unitary.} Finally, while we have here assumed that the machine that was brought into the picture to help cool the system of interest was given, we might assume some flexibility in engineering said machine. The agent preparing the machine might for example be constrained to deliver a machine of a fixed dimension but could be able to adjust its energy levels but for example adjusting external fields accordingly. In that context, the question of which machine, as well as unitary, allows cooling the system most efficiently naturally comes to mind. 

\section{Acknowledgments}

F.C thanks Ludovico Lami for useful discussions on polytope theory. R.S. acknowledges funding from the Swiss National Science Foundation via an Ambizione grant PZ00P2\_185986. P.B. and F.C. acknowledges funding from the European Research Council (Consolidator grant ‘Cocoquest’ 101043705) and financial support from the Austrian Federal Ministry of Education, Science and Research via the Austrian Research Promotion Agency (FFG) through the project FO999914030 (MUSIQ) funded by the European Union – NextGenerationEU.

F.C. and R.S. conceptualized the problem and developed the major part of the analytical results. F.C. and P.B. coordinated the project and writing. All authors contributed to ideas, discussions and the writing of the paper. F.C. designed Fig. \ref{fig:differentlevels} and all the polytope visualisations.

\bibliographystyle{apsrev4-2}
\bibliography{optimalcoolingbibliography}

\begin{thebibliography}{41}%
\makeatletter
\providecommand \@ifxundefined [1]{%
 \@ifx{#1\undefined}
}%
\providecommand \@ifnum [1]{%
 \ifnum #1\expandafter \@firstoftwo
 \else \expandafter \@secondoftwo
 \fi
}%
\providecommand \@ifx [1]{%
 \ifx #1\expandafter \@firstoftwo
 \else \expandafter \@secondoftwo
 \fi
}%
\providecommand \natexlab [1]{#1}%
\providecommand \enquote  [1]{``#1''}%
\providecommand \bibnamefont  [1]{#1}%
\providecommand \bibfnamefont [1]{#1}%
\providecommand \citenamefont [1]{#1}%
\providecommand \href@noop [0]{\@secondoftwo}%
\providecommand \href [0]{\begingroup \@sanitize@url \@href}%
\providecommand \@href[1]{\@@startlink{#1}\@@href}%
\providecommand \@@href[1]{\endgroup#1\@@endlink}%
\providecommand \@sanitize@url [0]{\catcode `\\12\catcode `\$12\catcode
  `\&12\catcode `\#12\catcode `\^12\catcode `\_12\catcode `\%12\relax}%
\providecommand \@@startlink[1]{}%
\providecommand \@@endlink[0]{}%
\providecommand \url  [0]{\begingroup\@sanitize@url \@url }%
\providecommand \@url [1]{\endgroup\@href {#1}{\urlprefix }}%
\providecommand \urlprefix  [0]{URL }%
\providecommand \Eprint [0]{\href }%
\providecommand \doibase [0]{https://doi.org/}%
\providecommand \selectlanguage [0]{\@gobble}%
\providecommand \bibinfo  [0]{\@secondoftwo}%
\providecommand \bibfield  [0]{\@secondoftwo}%
\providecommand \translation [1]{[#1]}%
\providecommand \BibitemOpen [0]{}%
\providecommand \bibitemStop [0]{}%
\providecommand \bibitemNoStop [0]{.\EOS\space}%
\providecommand \EOS [0]{\spacefactor3000\relax}%
\providecommand \BibitemShut  [1]{\csname bibitem#1\endcsname}%
\let\auto@bib@innerbib\@empty
\bibitem [{\citenamefont {Guryanova}\ \emph {et~al.}(2020)\citenamefont
  {Guryanova}, \citenamefont {Friis},\ and\ \citenamefont
  {Huber}}]{Guryanova_2020}%
  \BibitemOpen
  \bibfield  {author} {\bibinfo {author} {\bibfnamefont {Y.}~\bibnamefont
  {Guryanova}}, \bibinfo {author} {\bibfnamefont {N.}~\bibnamefont {Friis}},\
  and\ \bibinfo {author} {\bibfnamefont {M.}~\bibnamefont {Huber}},\ }\href
  {https://doi.org/10.22331/q-2020-01-13-222} {\bibfield  {journal} {\bibinfo
  {journal} {{Quantum}}\ }\textbf {\bibinfo {volume} {4}},\ \bibinfo {pages}
  {222} (\bibinfo {year} {2020})},\ \Eprint {https://arxiv.org/abs/1805.11899}
  {arXiv:1805.11899} \BibitemShut {NoStop}%
\bibitem [{\citenamefont {Nielsen}\ and\ \citenamefont
  {Chuang}(2010)}]{nielsen2010quantum}%
  \BibitemOpen
  \bibfield  {author} {\bibinfo {author} {\bibfnamefont {M.~A.}\ \bibnamefont
  {Nielsen}}\ and\ \bibinfo {author} {\bibfnamefont {I.~L.}\ \bibnamefont
  {Chuang}},\ }\href@noop {} {\emph {\bibinfo {title} {Quantum Computation and
  Quantum Information}}},\ \bibinfo {edition} {10th}\ ed.\ (\bibinfo
  {publisher} {Cambridge University Press},\ \bibinfo {address} {Cambridge},\
  \bibinfo {year} {2010})\BibitemShut {NoStop}%
\bibitem [{\citenamefont {Xuereb}\ \emph {et~al.}(2023)\citenamefont {Xuereb},
  \citenamefont {Erker}, \citenamefont {Meier}, \citenamefont {Mitchison},\
  and\ \citenamefont {Huber}}]{Xuereb_2023}%
  \BibitemOpen
  \bibfield  {author} {\bibinfo {author} {\bibfnamefont {J.}~\bibnamefont
  {Xuereb}}, \bibinfo {author} {\bibfnamefont {P.}~\bibnamefont {Erker}},
  \bibinfo {author} {\bibfnamefont {F.}~\bibnamefont {Meier}}, \bibinfo
  {author} {\bibfnamefont {M.~T.}\ \bibnamefont {Mitchison}},\ and\ \bibinfo
  {author} {\bibfnamefont {M.}~\bibnamefont {Huber}},\ }\href
  {https://doi.org/10.1103/PhysRevLett.131.160204} {\bibfield  {journal}
  {\bibinfo  {journal} {Phys. Rev. Lett.}\ }\textbf {\bibinfo {volume} {131}},\
  \bibinfo {pages} {160204} (\bibinfo {year} {2023})},\ \Eprint
  {https://arxiv.org/abs/2301.10767} {arXiv:2301.10767} \BibitemShut {NoStop}%
\bibitem [{\citenamefont {Schwarzhans}\ \emph {et~al.}(2021)\citenamefont
  {Schwarzhans}, \citenamefont {Lock}, \citenamefont {Erker}, \citenamefont
  {Friis},\ and\ \citenamefont {Huber}}]{SchwarzhansLockErkerFriisHuber2021}%
  \BibitemOpen
  \bibfield  {author} {\bibinfo {author} {\bibfnamefont {E.}~\bibnamefont
  {Schwarzhans}}, \bibinfo {author} {\bibfnamefont {M.~P.~E.}\ \bibnamefont
  {Lock}}, \bibinfo {author} {\bibfnamefont {P.}~\bibnamefont {Erker}},
  \bibinfo {author} {\bibfnamefont {N.}~\bibnamefont {Friis}},\ and\ \bibinfo
  {author} {\bibfnamefont {M.}~\bibnamefont {Huber}},\ }\href
  {https://doi.org/10.1103/PhysRevX.11.011046} {\bibfield  {journal} {\bibinfo
  {journal} {Phys. Rev. X}\ }\textbf {\bibinfo {volume} {11}},\ \bibinfo
  {pages} {011046} (\bibinfo {year} {2021})},\ \Eprint
  {https://arxiv.org/abs/2007.01307} {arXiv:2007.01307} \BibitemShut {NoStop}%
\bibitem [{\citenamefont {Bohr~Brask}\ \emph {et~al.}(2015)\citenamefont
  {Bohr~Brask}, \citenamefont {Haack}, \citenamefont {Brunner},\ and\
  \citenamefont {Huber}}]{Bohr-NJP-2015}%
  \BibitemOpen
  \bibfield  {author} {\bibinfo {author} {\bibfnamefont {J.}~\bibnamefont
  {Bohr~Brask}}, \bibinfo {author} {\bibfnamefont {G.}~\bibnamefont {Haack}},
  \bibinfo {author} {\bibfnamefont {N.}~\bibnamefont {Brunner}},\ and\ \bibinfo
  {author} {\bibfnamefont {M.}~\bibnamefont {Huber}},\ }\href
  {https://doi.org/10.1088/1367-2630/17/11/113029} {\bibfield  {journal}
  {\bibinfo  {journal} {New J. Phys.}\ }\textbf {\bibinfo {volume} {17}},\
  \bibinfo {pages} {113029} (\bibinfo {year} {2015})}\BibitemShut {NoStop}%
\bibitem [{\citenamefont {Diotallevi}\ \emph {et~al.}(2024)\citenamefont
  {Diotallevi}, \citenamefont {Annby-Andersson}, \citenamefont {Samuelsson},
  \citenamefont {Tavakoli},\ and\ \citenamefont
  {Bakhshinezhad}}]{Diotallevi-NJP-2024}%
  \BibitemOpen
  \bibfield  {author} {\bibinfo {author} {\bibfnamefont {G.~F.}\ \bibnamefont
  {Diotallevi}}, \bibinfo {author} {\bibfnamefont {B.}~\bibnamefont
  {Annby-Andersson}}, \bibinfo {author} {\bibfnamefont {P.}~\bibnamefont
  {Samuelsson}}, \bibinfo {author} {\bibfnamefont {A.}~\bibnamefont
  {Tavakoli}},\ and\ \bibinfo {author} {\bibfnamefont {P.}~\bibnamefont
  {Bakhshinezhad}},\ }\href
  {https://iopscience.iop.org/article/10.1088/1367-2630/ad3f3d/meta} {\bibfield
   {journal} {\bibinfo  {journal} {New J. Phys.}\ }\textbf {\bibinfo {volume}
  {26}},\ \bibinfo {pages} {053005} (\bibinfo {year} {2024})}\BibitemShut
  {NoStop}%
\bibitem [{\citenamefont {Masanes}\ and\ \citenamefont
  {Oppenheim}(2017{\natexlab{a}})}]{Masanes_2017_Temperature}%
  \BibitemOpen
  \bibfield  {author} {\bibinfo {author} {\bibfnamefont {L.}~\bibnamefont
  {Masanes}}\ and\ \bibinfo {author} {\bibfnamefont {J.}~\bibnamefont
  {Oppenheim}},\ }\href {https://doi.org/10.1038/ncomms14538} {\bibfield
  {journal} {\bibinfo  {journal} {Nature Communications}\ }\textbf {\bibinfo
  {volume} {8}},\ \bibinfo {pages} {14538} (\bibinfo {year}
  {2017}{\natexlab{a}})}\BibitemShut {NoStop}%
\bibitem [{\citenamefont {Nernst}(1906)}]{Nernst-1906}%
  \BibitemOpen
  \bibfield  {author} {\bibinfo {author} {\bibfnamefont {W.}~\bibnamefont
  {Nernst}},\ }\href {https://archive.org/details/mobot31753002089495}
  {\bibfield  {journal} {\bibinfo  {journal} {Sitzungsberichte der
  K{\"{o}}nliglich Preussischen Akademie der Wissenschaften}\ ,\ \bibinfo
  {pages} {933}} (\bibinfo {year} {1906})}\BibitemShut {NoStop}%
\bibitem [{\citenamefont {Allahverdyan}\ \emph {et~al.}(2011)\citenamefont
  {Allahverdyan}, \citenamefont {Hovhannisyan}, \citenamefont {Janzing},\ and\
  \citenamefont {Mahler}}]{Allahverdyan-2011}%
  \BibitemOpen
  \bibfield  {author} {\bibinfo {author} {\bibfnamefont {A.~E.}\ \bibnamefont
  {Allahverdyan}}, \bibinfo {author} {\bibfnamefont {K.~V.}\ \bibnamefont
  {Hovhannisyan}}, \bibinfo {author} {\bibfnamefont {D.}~\bibnamefont
  {Janzing}},\ and\ \bibinfo {author} {\bibfnamefont {G.}~\bibnamefont
  {Mahler}},\ }\href {https://dx.doi.org/10.1103/PhysRevE.84.041109} {\bibfield
   {journal} {\bibinfo  {journal} {Physical Review E}\ }\textbf {\bibinfo
  {volume} {84}} (\bibinfo {year} {2011})},\ \Eprint
  {https://arxiv.org/abs/1107.1044} {arXiv:1107.1044 [cond-mat.stat-mech]}
  \BibitemShut {NoStop}%
\bibitem [{\citenamefont {Clivaz}\ \emph
  {et~al.}(2019{\natexlab{a}})\citenamefont {Clivaz}, \citenamefont {Silva},
  \citenamefont {Haack}, \citenamefont {Brask}, \citenamefont {Brunner},\ and\
  \citenamefont {Huber}}]{Clivaz-2019}%
  \BibitemOpen
  \bibfield  {author} {\bibinfo {author} {\bibfnamefont {F.}~\bibnamefont
  {Clivaz}}, \bibinfo {author} {\bibfnamefont {R.}~\bibnamefont {Silva}},
  \bibinfo {author} {\bibfnamefont {G.}~\bibnamefont {Haack}}, \bibinfo
  {author} {\bibfnamefont {J.~B.}\ \bibnamefont {Brask}}, \bibinfo {author}
  {\bibfnamefont {N.}~\bibnamefont {Brunner}},\ and\ \bibinfo {author}
  {\bibfnamefont {M.}~\bibnamefont {Huber}},\ }\href
  {https://dx.doi.org/10.1103/PhysRevLett.123.170605} {\bibfield  {journal}
  {\bibinfo  {journal} {Physical Review Letters}\ }\textbf {\bibinfo {volume}
  {123}} (\bibinfo {year} {2019}{\natexlab{a}})},\ \Eprint
  {https://arxiv.org/abs/1903.04970} {arXiv:1903.04970 [quant-ph]} \BibitemShut
  {NoStop}%
\bibitem [{\citenamefont {Alhambra}\ \emph {et~al.}(2019)\citenamefont
  {Alhambra}, \citenamefont {Lostaglio},\ and\ \citenamefont
  {Perry}}]{Alhambra-2019}%
  \BibitemOpen
  \bibfield  {author} {\bibinfo {author} {\bibfnamefont {A.~M.}\ \bibnamefont
  {Alhambra}}, \bibinfo {author} {\bibfnamefont {M.}~\bibnamefont
  {Lostaglio}},\ and\ \bibinfo {author} {\bibfnamefont {C.}~\bibnamefont
  {Perry}},\ }\href {https://dx.doi.org/10.22331/q-2019-09-23-188} {\bibfield
  {journal} {\bibinfo  {journal} {Quantum}\ }\textbf {\bibinfo {volume} {3}},\
  \bibinfo {pages} {188} (\bibinfo {year} {2019})},\ \Eprint
  {https://arxiv.org/abs/1807.07974} {arXiv:1807.07974 [quant-ph]} \BibitemShut
  {NoStop}%
\bibitem [{\citenamefont {Boykin}\ \emph {et~al.}(2002)\citenamefont {Boykin},
  \citenamefont {Mor}, \citenamefont {Roychowdhury}, \citenamefont {Vatan},\
  and\ \citenamefont {Vrijen}}]{Boykin-2002}%
  \BibitemOpen
  \bibfield  {author} {\bibinfo {author} {\bibfnamefont {P.~O.}\ \bibnamefont
  {Boykin}}, \bibinfo {author} {\bibfnamefont {T.}~\bibnamefont {Mor}},
  \bibinfo {author} {\bibfnamefont {V.}~\bibnamefont {Roychowdhury}}, \bibinfo
  {author} {\bibfnamefont {F.}~\bibnamefont {Vatan}},\ and\ \bibinfo {author}
  {\bibfnamefont {R.}~\bibnamefont {Vrijen}},\ }\href
  {https://dx.doi.org/10.1073/pnas.241641898} {\bibfield  {journal} {\bibinfo
  {journal} {Proceedings of the National Academy of Sciences}\ }\textbf
  {\bibinfo {volume} {99}},\ \bibinfo {pages} {3388} (\bibinfo {year}
  {2002})},\ \Eprint {https://arxiv.org/abs/quant-ph/0106093}
  {arXiv:quant-ph/0106093 [quant-ph]} \BibitemShut {NoStop}%
\bibitem [{\citenamefont {Park}\ \emph {et~al.}(2016)\citenamefont {Park},
  \citenamefont {Rodriguez-Briones}, \citenamefont {Feng}, \citenamefont
  {Rahimi}, \citenamefont {Baugh},\ and\ \citenamefont {Laflamme}}]{Park-2016}%
  \BibitemOpen
  \bibfield  {author} {\bibinfo {author} {\bibfnamefont {D.~K.}\ \bibnamefont
  {Park}}, \bibinfo {author} {\bibfnamefont {N.~A.}\ \bibnamefont
  {Rodriguez-Briones}}, \bibinfo {author} {\bibfnamefont {G.}~\bibnamefont
  {Feng}}, \bibinfo {author} {\bibfnamefont {R.}~\bibnamefont {Rahimi}},
  \bibinfo {author} {\bibfnamefont {J.}~\bibnamefont {Baugh}},\ and\ \bibinfo
  {author} {\bibfnamefont {R.}~\bibnamefont {Laflamme}},\ }\bibinfo {title}
  {Heat bath algorithmic cooling with spins: Review and prospects},\ in\ \href
  {https://dx.doi.org/10.1007/978-1-4939-3658-8_8} {\emph {\bibinfo {booktitle}
  {Electron Spin Resonance (ESR) Based Quantum Computing}}},\ \bibinfo {editor}
  {edited by\ \bibinfo {editor} {\bibfnamefont {T.}~\bibnamefont {Takui}},
  \bibinfo {editor} {\bibfnamefont {L.}~\bibnamefont {Berliner}},\ and\
  \bibinfo {editor} {\bibfnamefont {G.}~\bibnamefont {Hanson}}}\ (\bibinfo
  {publisher} {Springer New York},\ \bibinfo {address} {New York, NY},\
  \bibinfo {year} {2016})\ pp.\ \bibinfo {pages} {227--255},\ \Eprint
  {https://arxiv.org/abs/1501.00952} {arXiv:1501.00952 [quant-ph]} \BibitemShut
  {NoStop}%
\bibitem [{\citenamefont {Rodríguez-Briones}\ \emph
  {et~al.}(2017)\citenamefont {Rodríguez-Briones}, \citenamefont
  {Martín-Martínez}, \citenamefont {Kempf},\ and\ \citenamefont
  {Laflamme}}]{Rodriguez-Briones-2017}%
  \BibitemOpen
  \bibfield  {author} {\bibinfo {author} {\bibfnamefont {N.~A.}\ \bibnamefont
  {Rodríguez-Briones}}, \bibinfo {author} {\bibfnamefont {E.}~\bibnamefont
  {Martín-Martínez}}, \bibinfo {author} {\bibfnamefont {A.}~\bibnamefont
  {Kempf}},\ and\ \bibinfo {author} {\bibfnamefont {R.}~\bibnamefont
  {Laflamme}},\ }\href {https://dx.doi.org/10.1103/PhysRevLett.119.050502}
  {\bibfield  {journal} {\bibinfo  {journal} {Phys. Rev. Lett.}\ }\textbf
  {\bibinfo {volume} {119}},\ \bibinfo {pages} {050502} (\bibinfo {year}
  {2017})},\ \Eprint {https://arxiv.org/abs/1703.03816} {arXiv:1703.03816
  [quant-ph]} \BibitemShut {NoStop}%
\bibitem [{\citenamefont {Rodríguez-Briones}\ and\ \citenamefont
  {Laflamme}(2016)}]{Rodriguez-Briones-2016}%
  \BibitemOpen
  \bibfield  {author} {\bibinfo {author} {\bibfnamefont {N.~A.}\ \bibnamefont
  {Rodríguez-Briones}}\ and\ \bibinfo {author} {\bibfnamefont
  {R.}~\bibnamefont {Laflamme}},\ }\href
  {https://dx.doi.org/10.1103/PhysRevLett.116.170501} {\bibfield  {journal}
  {\bibinfo  {journal} {Phys. Rev. Lett.}\ }\textbf {\bibinfo {volume} {116}},\
  \bibinfo {pages} {170501} (\bibinfo {year} {2016})},\ \Eprint
  {https://arxiv.org/abs/1412.6637} {arXiv:1412.6637 [quant-ph]} \BibitemShut
  {NoStop}%
\bibitem [{\citenamefont {Schulman}\ and\ \citenamefont
  {Vazirani}(1999)}]{Schulman-1999}%
  \BibitemOpen
  \bibfield  {author} {\bibinfo {author} {\bibfnamefont {L.~J.}\ \bibnamefont
  {Schulman}}\ and\ \bibinfo {author} {\bibfnamefont {U.~V.}\ \bibnamefont
  {Vazirani}},\ }\href {https://dx.doi.org/10.1145/301250.301332} {\bibfield
  {journal} {\bibinfo  {journal} {Proceedings of the Thirty-First Annual ACM
  Symposium on Theory of Computing}\ }\bibinfo {series} {STOC ’99},\ \bibinfo
  {pages} {322–329} (\bibinfo {year} {1999})},\ \Eprint
  {https://arxiv.org/abs/quant-ph/9804060} {arXiv:quant-ph/9804060 [quant-ph]}
  \BibitemShut {NoStop}%
\bibitem [{\citenamefont {Taranto}\ \emph {et~al.}(2020)\citenamefont
  {Taranto}, \citenamefont {Bakhshinezhad}, \citenamefont {Schüttelkopf},
  \citenamefont {Clivaz},\ and\ \citenamefont
  {Huber}}]{Taranto_2020ExponentialImprovement}%
  \BibitemOpen
  \bibfield  {author} {\bibinfo {author} {\bibfnamefont {P.}~\bibnamefont
  {Taranto}}, \bibinfo {author} {\bibfnamefont {F.}~\bibnamefont
  {Bakhshinezhad}}, \bibinfo {author} {\bibfnamefont {P.}~\bibnamefont
  {Schüttelkopf}}, \bibinfo {author} {\bibfnamefont {F.}~\bibnamefont
  {Clivaz}},\ and\ \bibinfo {author} {\bibfnamefont {M.}~\bibnamefont
  {Huber}},\ }\href {http://dx.doi.org/10.1103/PhysRevApplied.14.054005}
  {\bibfield  {journal} {\bibinfo  {journal} {Physical Review Applied}\
  }\textbf {\bibinfo {volume} {14}} (\bibinfo {year} {2020})}\BibitemShut
  {NoStop}%
\bibitem [{\citenamefont {Masanes}\ and\ \citenamefont
  {Oppenheim}(2017{\natexlab{b}})}]{Masanes-2017}%
  \BibitemOpen
  \bibfield  {author} {\bibinfo {author} {\bibfnamefont {L.}~\bibnamefont
  {Masanes}}\ and\ \bibinfo {author} {\bibfnamefont {J.}~\bibnamefont
  {Oppenheim}},\ }\href {https://dx.doi.org/10.1038/ncomms14538} {\bibfield
  {journal} {\bibinfo  {journal} {Nature Communications}\ }\textbf {\bibinfo
  {volume} {8}} (\bibinfo {year} {2017}{\natexlab{b}})},\ \Eprint
  {https://arxiv.org/abs/1412.3828} {arXiv:1412.3828 [quant-ph]} \BibitemShut
  {NoStop}%
\bibitem [{\citenamefont {Wilming}\ and\ \citenamefont
  {Gallego}(2017)}]{Wilming-2017}%
  \BibitemOpen
  \bibfield  {author} {\bibinfo {author} {\bibfnamefont {H.}~\bibnamefont
  {Wilming}}\ and\ \bibinfo {author} {\bibfnamefont {R.}~\bibnamefont
  {Gallego}},\ }\href {https://dx.doi.org/10.1103/PhysRevX.7.041033} {\bibfield
   {journal} {\bibinfo  {journal} {Physical Review X}\ }\textbf {\bibinfo
  {volume} {7}} (\bibinfo {year} {2017})},\ \Eprint
  {https://arxiv.org/abs/1701.07478} {arXiv:1701.07478 [quant-ph]} \BibitemShut
  {NoStop}%
\bibitem [{\citenamefont {Scharlau}\ and\ \citenamefont
  {Mueller}(2018)}]{Scharlau-2018}%
  \BibitemOpen
  \bibfield  {author} {\bibinfo {author} {\bibfnamefont {J.}~\bibnamefont
  {Scharlau}}\ and\ \bibinfo {author} {\bibfnamefont {M.~P.}\ \bibnamefont
  {Mueller}},\ }\href {https://dx.doi.org/10.22331/q-2018-02-22-54} {\bibfield
  {journal} {\bibinfo  {journal} {Quantum}\ }\textbf {\bibinfo {volume} {2}},\
  \bibinfo {pages} {54} (\bibinfo {year} {2018})},\ \Eprint
  {https://arxiv.org/abs/1605.06092} {arXiv:1605.06092 [quant-ph]} \BibitemShut
  {NoStop}%
\bibitem [{\citenamefont {Raeisi}\ and\ \citenamefont
  {Mosca}(2015)}]{Raeisi-2015}%
  \BibitemOpen
  \bibfield  {author} {\bibinfo {author} {\bibfnamefont {S.}~\bibnamefont
  {Raeisi}}\ and\ \bibinfo {author} {\bibfnamefont {M.}~\bibnamefont {Mosca}},\
  }\href {https://dx.doi.org/10.1103/PhysRevLett.114.100404} {\bibfield
  {journal} {\bibinfo  {journal} {Phys. Rev. Lett.}\ }\textbf {\bibinfo
  {volume} {114}},\ \bibinfo {pages} {100404} (\bibinfo {year} {2015})},\
  \Eprint {https://arxiv.org/abs/1407.3232} {arXiv:1407.3232 [quant-ph]}
  \BibitemShut {NoStop}%
\bibitem [{\citenamefont {Soldati}\ \emph {et~al.}(2024)\citenamefont
  {Soldati}, \citenamefont {Dasari}, \citenamefont {Wrachtrup},\ and\
  \citenamefont {Lutz}}]{soldati2024}%
  \BibitemOpen
  \bibfield  {author} {\bibinfo {author} {\bibfnamefont {R.~R.}\ \bibnamefont
  {Soldati}}, \bibinfo {author} {\bibfnamefont {D.~B.~R.}\ \bibnamefont
  {Dasari}}, \bibinfo {author} {\bibfnamefont {J.}~\bibnamefont {Wrachtrup}},\
  and\ \bibinfo {author} {\bibfnamefont {E.}~\bibnamefont {Lutz}},\ }\href
  {https://arxiv.org/abs/2410.18201} {\bibinfo {title} {Cooling limits of
  coherent refrigerators}} (\bibinfo {year} {2024}),\ \Eprint
  {https://arxiv.org/abs/2410.18201} {arXiv:2410.18201 [quant-ph]} \BibitemShut
  {NoStop}%
\bibitem [{\citenamefont {Brandão}\ \emph {et~al.}(2015)\citenamefont
  {Brandão}, \citenamefont {Horodecki}, \citenamefont {Ng}, \citenamefont
  {Oppenheim},\ and\ \citenamefont {Wehner}}]{Brandao-2015}%
  \BibitemOpen
  \bibfield  {author} {\bibinfo {author} {\bibfnamefont {F.}~\bibnamefont
  {Brandão}}, \bibinfo {author} {\bibfnamefont {M.}~\bibnamefont {Horodecki}},
  \bibinfo {author} {\bibfnamefont {N.}~\bibnamefont {Ng}}, \bibinfo {author}
  {\bibfnamefont {J.}~\bibnamefont {Oppenheim}},\ and\ \bibinfo {author}
  {\bibfnamefont {S.}~\bibnamefont {Wehner}},\ }\href
  {https://dx.doi.org/10.1073/pnas.1411728112} {\bibfield  {journal} {\bibinfo
  {journal} {Proceedings of the National Academy of Sciences}\ }\textbf
  {\bibinfo {volume} {112}},\ \bibinfo {pages} {3275–3279} (\bibinfo {year}
  {2015})},\ \Eprint {https://arxiv.org/abs/1305.5278} {arXiv:1305.5278
  [quant-ph]} \BibitemShut {NoStop}%
\bibitem [{\citenamefont {Reeb}\ and\ \citenamefont {Wolf}(2014)}]{Reeb-2014}%
  \BibitemOpen
  \bibfield  {author} {\bibinfo {author} {\bibfnamefont {D.}~\bibnamefont
  {Reeb}}\ and\ \bibinfo {author} {\bibfnamefont {M.~M.}\ \bibnamefont
  {Wolf}},\ }\href {https://dx.doi.org/10.1088/1367-2630/16/10/103011}
  {\bibfield  {journal} {\bibinfo  {journal} {New Journal of Physics}\ }\textbf
  {\bibinfo {volume} {16}},\ \bibinfo {pages} {103011} (\bibinfo {year}
  {2014})},\ \Eprint {https://arxiv.org/abs/1306.4352} {arXiv:1306.4352
  [quant-ph]} \BibitemShut {NoStop}%
\bibitem [{\citenamefont {M{\"u}ller}(2018)}]{Mueller-2018}%
  \BibitemOpen
  \bibfield  {author} {\bibinfo {author} {\bibfnamefont {M.~P.}\ \bibnamefont
  {M{\"u}ller}},\ }\href {https://dx.doi.org/10.1103/PhysRevX.8.041051}
  {\bibfield  {journal} {\bibinfo  {journal} {Phys. Rev. X}\ }\textbf {\bibinfo
  {volume} {8}},\ \bibinfo {pages} {041051} (\bibinfo {year} {2018})},\ \Eprint
  {https://arxiv.org/abs/1707.03451} {arXiv:1707.03451 [quant-ph]} \BibitemShut
  {NoStop}%
\bibitem [{\citenamefont {Ćwikliński}\ \emph {et~al.}(2015)\citenamefont
  {Ćwikliński}, \citenamefont {Studziński}, \citenamefont {Horodecki},\ and\
  \citenamefont {Oppenheim}}]{Cwiklinski-2015}%
  \BibitemOpen
  \bibfield  {author} {\bibinfo {author} {\bibfnamefont {P.}~\bibnamefont
  {Ćwikliński}}, \bibinfo {author} {\bibfnamefont {M.}~\bibnamefont
  {Studziński}}, \bibinfo {author} {\bibfnamefont {M.}~\bibnamefont
  {Horodecki}},\ and\ \bibinfo {author} {\bibfnamefont {J.}~\bibnamefont
  {Oppenheim}},\ }\href {https://dx.doi.org/10.1103/PhysRevLett.115.210403}
  {\bibfield  {journal} {\bibinfo  {journal} {Physical Review Letters}\
  }\textbf {\bibinfo {volume} {115}} (\bibinfo {year} {2015})},\ \Eprint
  {https://arxiv.org/abs/1405.5029} {arXiv:1405.5029 [quant-ph]} \BibitemShut
  {NoStop}%
\bibitem [{\citenamefont {Landauer}(1961)}]{Landauer_1961}%
  \BibitemOpen
  \bibfield  {author} {\bibinfo {author} {\bibfnamefont {R.}~\bibnamefont
  {Landauer}},\ }\href {https://doi.org/10.1147/rd.53.0183} {\bibfield
  {journal} {\bibinfo  {journal} {IBM J. Res. Dev.}\ }\textbf {\bibinfo
  {volume} {5}},\ \bibinfo {pages} {183} (\bibinfo {year} {1961})}\BibitemShut
  {NoStop}%
\bibitem [{\citenamefont {Skrzypczyk}\ \emph {et~al.}(2014)\citenamefont
  {Skrzypczyk}, \citenamefont {Short},\ and\ \citenamefont
  {Popescu}}]{Skrzypczyk-2014}%
  \BibitemOpen
  \bibfield  {author} {\bibinfo {author} {\bibfnamefont {P.}~\bibnamefont
  {Skrzypczyk}}, \bibinfo {author} {\bibfnamefont {A.~J.}\ \bibnamefont
  {Short}},\ and\ \bibinfo {author} {\bibfnamefont {S.}~\bibnamefont
  {Popescu}},\ }\href {https://dx.doi.org/10.1038/ncomms5185} {\bibfield
  {journal} {\bibinfo  {journal} {Nature Communications}\ }\textbf {\bibinfo
  {volume} {5}} (\bibinfo {year} {2014})},\ \Eprint
  {https://arxiv.org/abs/1307.1558} {arXiv:1307.1558 [quant-ph]} \BibitemShut
  {NoStop}%
\bibitem [{\citenamefont {Taranto}\ \emph {et~al.}(2021)\citenamefont
  {Taranto}, \citenamefont {Bakhshinezhad}, \citenamefont {Bluhm},
  \citenamefont {Silva}, \citenamefont {Friis}, \citenamefont {Lock},
  \citenamefont {Vitagliano}, \citenamefont {Binder}, \citenamefont {Debarba},
  \citenamefont {Schwarzhans}, \citenamefont {Clivaz},\ and\ \citenamefont
  {Huber}}]{Taranto-2021}%
  \BibitemOpen
  \bibfield  {author} {\bibinfo {author} {\bibfnamefont {P.}~\bibnamefont
  {Taranto}}, \bibinfo {author} {\bibfnamefont {F.}~\bibnamefont
  {Bakhshinezhad}}, \bibinfo {author} {\bibfnamefont {A.}~\bibnamefont
  {Bluhm}}, \bibinfo {author} {\bibfnamefont {R.}~\bibnamefont {Silva}},
  \bibinfo {author} {\bibfnamefont {N.}~\bibnamefont {Friis}}, \bibinfo
  {author} {\bibfnamefont {M.~P.~E.}\ \bibnamefont {Lock}}, \bibinfo {author}
  {\bibfnamefont {G.}~\bibnamefont {Vitagliano}}, \bibinfo {author}
  {\bibfnamefont {F.~C.}\ \bibnamefont {Binder}}, \bibinfo {author}
  {\bibfnamefont {T.}~\bibnamefont {Debarba}}, \bibinfo {author} {\bibfnamefont
  {E.}~\bibnamefont {Schwarzhans}}, \bibinfo {author} {\bibfnamefont
  {F.}~\bibnamefont {Clivaz}},\ and\ \bibinfo {author} {\bibfnamefont
  {M.}~\bibnamefont {Huber}},\ }\href@noop {} {\bibinfo {title} {Landauer vs.
  nernst: What is the true cost of cooling a quantum system?}} (\bibinfo {year}
  {2021}),\ \Eprint {https://arxiv.org/abs/2106.05151} {arXiv:2106.05151
  [quant-ph]} \BibitemShut {NoStop}%
\bibitem [{\citenamefont {Clivaz}\ \emph
  {et~al.}(2019{\natexlab{b}})\citenamefont {Clivaz}, \citenamefont {Silva},
  \citenamefont {Haack}, \citenamefont {Brask}, \citenamefont {Brunner},\ and\
  \citenamefont {Huber}}]{Clivaz-2019bis}%
  \BibitemOpen
  \bibfield  {author} {\bibinfo {author} {\bibfnamefont {F.}~\bibnamefont
  {Clivaz}}, \bibinfo {author} {\bibfnamefont {R.}~\bibnamefont {Silva}},
  \bibinfo {author} {\bibfnamefont {G.}~\bibnamefont {Haack}}, \bibinfo
  {author} {\bibfnamefont {J.~B.}\ \bibnamefont {Brask}}, \bibinfo {author}
  {\bibfnamefont {N.}~\bibnamefont {Brunner}},\ and\ \bibinfo {author}
  {\bibfnamefont {M.}~\bibnamefont {Huber}},\ }\href
  {https://dx.doi.org/10.1103/PhysRevE.100.042130} {\bibfield  {journal}
  {\bibinfo  {journal} {Physical Review E}\ }\textbf {\bibinfo {volume} {100}}
  (\bibinfo {year} {2019}{\natexlab{b}})},\ \Eprint
  {https://arxiv.org/abs/1710.11624} {arXiv:1710.11624 [quant-ph]} \BibitemShut
  {NoStop}%
\bibitem [{\citenamefont {Taranto}\ \emph {et~al.}(2024)\citenamefont
  {Taranto}, \citenamefont {Lipka-Bartosik}, \citenamefont
  {Rodríguez-Briones}, \citenamefont {Perarnau-Llobet}, \citenamefont {Friis},
  \citenamefont {Huber},\ and\ \citenamefont
  {Bakhshinezhad}}]{taranto_2024_FiniteResources}%
  \BibitemOpen
  \bibfield  {author} {\bibinfo {author} {\bibfnamefont {P.}~\bibnamefont
  {Taranto}}, \bibinfo {author} {\bibfnamefont {P.}~\bibnamefont
  {Lipka-Bartosik}}, \bibinfo {author} {\bibfnamefont {N.~A.}\ \bibnamefont
  {Rodríguez-Briones}}, \bibinfo {author} {\bibfnamefont {M.}~\bibnamefont
  {Perarnau-Llobet}}, \bibinfo {author} {\bibfnamefont {N.}~\bibnamefont
  {Friis}}, \bibinfo {author} {\bibfnamefont {M.}~\bibnamefont {Huber}},\ and\
  \bibinfo {author} {\bibfnamefont {P.}~\bibnamefont {Bakhshinezhad}},\ }\href
  {https://arxiv.org/abs/2404.06649} {\bibinfo {title} {Efficiently cooling
  quantum systems with finite resources: Insights from thermodynamic geometry}}
  (\bibinfo {year} {2024}),\ \Eprint {https://arxiv.org/abs/2404.06649}
  {arXiv:2404.06649 [quant-ph]} \BibitemShut {NoStop}%
\bibitem [{\citenamefont {Lipka-Bartosik}\ and\ \citenamefont
  {Perarnau-Llobet}(2024)}]{lipkabartosik2024MInDissipationinInteracting}%
  \BibitemOpen
  \bibfield  {author} {\bibinfo {author} {\bibfnamefont {P.}~\bibnamefont
  {Lipka-Bartosik}}\ and\ \bibinfo {author} {\bibfnamefont {M.}~\bibnamefont
  {Perarnau-Llobet}},\ }\href {https://arxiv.org/abs/2411.00944} {\bibinfo
  {title} {Minimizing dissipation via interacting environments: Quadratic
  convergence to landauer bound}} (\bibinfo {year} {2024}),\ \Eprint
  {https://arxiv.org/abs/2411.00944} {arXiv:2411.00944 [quant-ph]} \BibitemShut
  {NoStop}%
\bibitem [{\citenamefont {Rolandi}\ \emph {et~al.}(2023)\citenamefont
  {Rolandi}, \citenamefont {Abiuso},\ and\ \citenamefont
  {Perarnau-Llobet}}]{Rolandi_2023}%
  \BibitemOpen
  \bibfield  {author} {\bibinfo {author} {\bibfnamefont {A.}~\bibnamefont
  {Rolandi}}, \bibinfo {author} {\bibfnamefont {P.}~\bibnamefont {Abiuso}},\
  and\ \bibinfo {author} {\bibfnamefont {M.}~\bibnamefont {Perarnau-Llobet}},\
  }\href {http://dx.doi.org/10.1103/PhysRevLett.131.210401} {\bibfield
  {journal} {\bibinfo  {journal} {Physical Review Letters}\ }\textbf {\bibinfo
  {volume} {131}} (\bibinfo {year} {2023})}\BibitemShut {NoStop}%
\bibitem [{\citenamefont {Marshall}\ \emph {et~al.}(2011)\citenamefont
  {Marshall}, \citenamefont {Olkin},\ and\ \citenamefont
  {Arnold}}]{Marshall-2011}%
  \BibitemOpen
  \bibfield  {author} {\bibinfo {author} {\bibfnamefont {A.~W.}\ \bibnamefont
  {Marshall}}, \bibinfo {author} {\bibfnamefont {I.}~\bibnamefont {Olkin}},\
  and\ \bibinfo {author} {\bibfnamefont {B.~C.}\ \bibnamefont {Arnold}},\
  }\href {https://dx.doi.org/10.1007/978-0-387-68276-1} {\emph {\bibinfo
  {title} {Inequalities: Theory of Majorization and its Applications}}},\
  \bibinfo {edition} {2nd}\ ed.,\ Vol.\ \bibinfo {volume} {143}\ (\bibinfo
  {publisher} {Springer},\ \bibinfo {year} {2011})\BibitemShut {NoStop}%
\bibitem [{\citenamefont {Brøndsted}(1983)}]{Brondsted-1983}%
  \BibitemOpen
  \bibfield  {author} {\bibinfo {author} {\bibfnamefont {A.}~\bibnamefont
  {Brøndsted}},\ }\href
  {https://doi.org/https://doi.org/10.1007/978-1-4612-1148-8} {\emph {\bibinfo
  {title} {An Introduction to Convex Polytopes}}},\ \bibinfo {edition} {1st}\
  ed.\ (\bibinfo  {publisher} {Springer New York, NY},\ \bibinfo {year}
  {1983})\BibitemShut {NoStop}%
\bibitem [{\citenamefont {Ziegler}(1995)}]{Ziegler-1995}%
  \BibitemOpen
  \bibfield  {author} {\bibinfo {author} {\bibfnamefont {G.~M.}\ \bibnamefont
  {Ziegler}},\ }\href
  {https://doi.org/https://doi.org/10.1007/978-1-4613-8431-1} {\emph {\bibinfo
  {title} {Lectures on Polytopes}}},\ \bibinfo {edition} {1st}\ ed.\ (\bibinfo
  {publisher} {Springer New York, NY},\ \bibinfo {year} {1995})\BibitemShut
  {NoStop}%
\bibitem [{\citenamefont {Grünbaum}(2003)}]{Gruenbaum-2003}%
  \BibitemOpen
  \bibfield  {author} {\bibinfo {author} {\bibfnamefont {B.}~\bibnamefont
  {Grünbaum}},\ }\href
  {https://doi.org/https://doi.org/10.1007/978-1-4613-0019-9} {\emph {\bibinfo
  {title} {Convex Polytopes}}},\ \bibinfo {edition} {2nd}\ ed.\ (\bibinfo
  {publisher} {Springer New York, NY},\ \bibinfo {year} {2003})\BibitemShut
  {NoStop}%
\bibitem [{\citenamefont {Barvinok}(2002)}]{Barvinok-2002}%
  \BibitemOpen
  \bibfield  {author} {\bibinfo {author} {\bibfnamefont {A.}~\bibnamefont
  {Barvinok}},\ }\href {https://doi.org/https://doi.org/10.1090/gsm/054} {\emph
  {\bibinfo {title} {A course in convexity}}},\ \bibinfo {edition} {1st}\ ed.\
  (\bibinfo  {publisher} {American Mathematical Society},\ \bibinfo {year}
  {2002})\BibitemShut {NoStop}%
\bibitem [{\citenamefont {Lipka-Bartosik}\ \emph {et~al.}(2023)\citenamefont
  {Lipka-Bartosik}, \citenamefont {Diotallevi},\ and\ \citenamefont
  {Bakhshinezhad}}]{lipkabartosik-2023}%
  \BibitemOpen
  \bibfield  {author} {\bibinfo {author} {\bibfnamefont {P.}~\bibnamefont
  {Lipka-Bartosik}}, \bibinfo {author} {\bibfnamefont {G.~F.}\ \bibnamefont
  {Diotallevi}},\ and\ \bibinfo {author} {\bibfnamefont {P.}~\bibnamefont
  {Bakhshinezhad}},\ }\href@noop {} {\bibinfo {title} {Fundamental limits on
  anomalous energy flows in correlated quantum systems}} (\bibinfo {year}
  {2023}),\ \Eprint {https://arxiv.org/abs/2307.03828} {arXiv:2307.03828
  [quant-ph]} \BibitemShut {NoStop}%
\bibitem [{\citenamefont {Brunner}\ \emph {et~al.}(2012)\citenamefont
  {Brunner}, \citenamefont {Linden}, \citenamefont {Popescu},\ and\
  \citenamefont {Skrzypczyk}}]{Brunner2012}%
  \BibitemOpen
  \bibfield  {author} {\bibinfo {author} {\bibfnamefont {N.}~\bibnamefont
  {Brunner}}, \bibinfo {author} {\bibfnamefont {N.}~\bibnamefont {Linden}},
  \bibinfo {author} {\bibfnamefont {S.}~\bibnamefont {Popescu}},\ and\ \bibinfo
  {author} {\bibfnamefont {P.}~\bibnamefont {Skrzypczyk}},\ }\href
  {https://doi.org/10.1103/PhysRevE.85.051117} {\bibfield  {journal} {\bibinfo
  {journal} {Phys. Rev. E}\ }\textbf {\bibinfo {volume} {85}},\ \bibinfo
  {pages} {051117} (\bibinfo {year} {2012})}\BibitemShut {NoStop}%
\bibitem [{\citenamefont {Silva}\ \emph {et~al.}(2016)\citenamefont {Silva},
  \citenamefont {Manzano}, \citenamefont {Skrzypczyk},\ and\ \citenamefont
  {Brunner}}]{Silva2016}%
  \BibitemOpen
  \bibfield  {author} {\bibinfo {author} {\bibfnamefont {R.}~\bibnamefont
  {Silva}}, \bibinfo {author} {\bibfnamefont {G.}~\bibnamefont {Manzano}},
  \bibinfo {author} {\bibfnamefont {P.}~\bibnamefont {Skrzypczyk}},\ and\
  \bibinfo {author} {\bibfnamefont {N.}~\bibnamefont {Brunner}},\ }\href
  {https://doi.org/10.1103/PhysRevE.94.032120} {\bibfield  {journal} {\bibinfo
  {journal} {Phys. Rev. E}\ }\textbf {\bibinfo {volume} {94}},\ \bibinfo
  {pages} {032120} (\bibinfo {year} {2016})}\BibitemShut {NoStop}%
\end{thebibliography}%

\clearpage
\onecolumngrid
\appendix

\newpage
\onecolumngrid
\appendix

\section*{Appendices}
\setcounter{section}{0}
\tableofcontents

\section{Proofs regarding general polytopes}

In this section we list results applicable to general polytopes that we use in our result.

\subsection{An edge is the convex hull of two vertices} \label{subsection:generatingedge}

\begin{lemma} \label{lemma:generatingedge}
   Given an edge $\mathcal{D}$ of a polytope $\mathcal{P}$, there exists exactly two vertices $V_i$, $V_j$ of $\mathcal{P}$ such that
   \begin{equation}
       \mathcal{D}=\{ r (V_i-V_j) +V_j \mid r \in [0,1]\}.
   \end{equation}
\end{lemma}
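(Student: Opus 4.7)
The plan is to use the standard characterisation of a face of a polytope as the intersection of the polytope with a supporting hyperplane, and then to exploit the fact that an edge has dimension one.

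First, I would recall that by definition of a face (see \cite{Ziegler-1995}), the edge $\mathcal{D}$ is of the form $\mathcal{D} = \mathcal{P} \cap H$ for some supporting hyperplane $H$, and moreover this intersection is one-dimensional. Since $\mathcal{P}$ is a polytope, it is the convex hull of its (finitely many) vertices $\{V_1, \dots, V_N\}$. Writing any point of $\mathcal{D}$ as a convex combination of the $V_k$ and using that $H$ is a supporting hyperplane (so any affine functional defining $H$ attains its maximum over $\mathcal{P}$ precisely on $\mathcal{D}$), a short linear argument shows that only those $V_k$ lying on $H$ can carry nonzero weight. Thus $\mathcal{D}$ is the convex hull of exactly those vertices of $\mathcal{P}$ that lie on $H$; let me call this set $S \subseteq \{V_1,\dots,V_N\}$.

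Next I would show that $|S| = 2$. On the one hand $|S| \geq 2$, because a single point has dimension $0$ while $\mathcal{D}$ is one-dimensional. On the other hand, suppose for contradiction that three distinct vertices $V_i, V_j, V_k \in S$ existed. Since they all lie on the one-dimensional affine set $\mathrm{aff}(\mathcal{D})$, they would be collinear, so one of them, say $V_k$, would lie in the relative interior of the segment joining the other two, giving $V_k = \mu V_i + (1-\mu) V_j$ with $\mu \in (0,1)$. This contradicts $V_k$ being a vertex (i.e.\ extreme point) of $\mathcal{P}$. Hence $S$ contains exactly two vertices $V_i$ and $V_j$.

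Finally, writing $\mathcal{D} = \mathrm{conv}\{V_i, V_j\}$ and reparametrising a convex combination $s V_i + (1-s) V_j$ by $r = s \in [0,1]$ gives exactly $\mathcal{D} = \{r(V_i - V_j) + V_j \mid r \in [0,1]\}$, establishing the claim. The only mildly delicate step is the second one, where one must rule out three collinear vertices; the key is precisely the extremality property built into the definition of a vertex, which prevents intermediate collinear points from being vertices in the first place.
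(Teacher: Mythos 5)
Your proof is correct and follows essentially the same route as the paper's: identify the vertices of $\mathcal{D}$ as a subset of $\text{ext}(\mathcal{P})$, use the one-dimensionality to rule out $0$ or $1$ vertices, and use collinearity plus extremality to rule out $3$ or more. The paper relies on Brøndsted's Theorem 7.3 for the first step and a case analysis on the affine coefficient $\lambda$ for the last, whereas you use the supporting-hyperplane characterisation and the more compact observation that three distinct collinear points always have one strictly between the other two --- a minor streamlining, not a different argument.
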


\begin{proof}
    Let $\mathcal{D}$ be an edge of $\mathcal{P}$. So $\mathcal{D}$ is a non-empty face of $\mathcal{P}$. Hence by Theorem 7.3, p. 45 of \cite{Brondsted-1983}, $\mathcal{D}$ is a polytope with vertices being a subset of those of $\mathcal{P}$. Let us denote these vertices by $V_1, \dots, V_n$. So $V_1, \dots, V_n \in \text{ext}(\mathcal{P})$ and $\mathcal{D}=\text{conv}\{V_1, \dots, V_n\}$. We want to prove that $n=2$ must hold. First, $n \geq 2$ must hold for if $n=0$, $\mathcal{D}= \emptyset$ and so $\text{dim}(\mathcal{D})=-1$, which is a contradiction to $\text{dim}(\mathcal{D})=1$. Similarly for $n=1$, we get $\mathcal{D}= \text{conv}(V_1) = \{V_1\}$ and $\text{dim}(\mathcal{D})=0$, also a contradiction to $\text{dim}(\mathcal{D})=1$. This establishes $n \geq 2$. To prove $n \leq 2$, suppose to the contrary that $n > 2$. So $ \mathcal{D}$ contains at least three distinct vertices of $\mathcal{P}$, say $V_1, V_2, V_3$. Since $V_1 \neq V_2$, $V_1$ and $V_2$ are affinely independent and therefore constitute an affine basis of $\text{aff}(\mathcal{D})$. So in particular since $V_3 \in \text{aff}(\mathcal{D}) $, there is a $\lambda \in \mathds{R}$ such that
    \begin{equation}
    V_3 = \lambda V_1 + (1-\lambda) V_2.
    \end{equation}
    If $\lambda=0$ resp. $\lambda=1$ we get $V_3=V_2$ resp. $V_3=V_1$, a contradiction to $V_1, V_2, V_3$ being distinct. If $\lambda \in\, ]0,1[$, then we get that $]V_1, V_2[\, \cap \{V_3\} \neq \emptyset$, where $]V_1,V_2[$ denotes the open segment between $V_1$ and $V_2$, i.e.,
    \begin{equation}
        ]V_1,V_2[\,=\{\mu V_1 + (1-\mu) V_2 \mid \mu \in ]0,1[\,\}.
    \end{equation}
    As $\{V_3\}$ is a face of $\mathcal{P}$, this implies $V_1,V_2 \in \{V_3\}$, again a contradiction to $V_1,V_2,V_3$ being distinct. If $\lambda < 0$ then with $0<\mu =\frac{1}{1-\lambda} < 1$ we get
    \begin{equation}
        V_2 = \mu V_3 + (1-\mu) V_1,
    \end{equation}
    which leads to $]V_3,V_1[\, \cap\,  \{V_2\} \neq \emptyset$. Now using that $\{V_2\}$ is a face of $\mathcal{P}$, this implies $V_3, V_1 \in \{V_2\}$, once more a contradiction to $V_1, V_2,V_3$ being distinct. So $\lambda > 1$ must hold. But then with $0<\mu=\frac{1}{\lambda}<1$,
    \begin{equation}
        V_1 = \mu V_3 + (1-\mu) V_2,
    \end{equation}
    and we get $]V_3, V_2[\, \cap \{V_1\} \neq \emptyset$. Now using that $\{V_1\}$ is a face of $\mathcal{P}$ we get that $V_3, V_2 \in \{V_1\}$, again a contradiction to $V_1, V_2, V_3$ being distinct. So for all possible values of $\lambda$ we get a contradiction, which shows that $n \leq 2$ must hold. And so, as desired, $n=2$. The unicity of the two vertices follows directly from the fact that if there were vertices $V_k, V_l$ such that $\mathcal{D} = \text{conv}\{V_k, V_l\}$, then as $\mathcal{D} = \text{conv}\{V_1,V_2\}$ we get $V_k, V_l \in \text{conv}\{V_1,V_2\}$, which using that $\{V_1\}, \{V_2\}, \{V_k\} $ and $\{V_l\}$ are faces of $\mathcal{P}$ similarly as above implies $\{V_k,V_l\} = \{V_1, V_2\}$.
\end{proof}
\subsection{The vertices of a translated polytope are the translated vertices}

\begin{lemma}
    Let $\mathcal{P}$ be a polytope in $\mathds{R}^d$, $V \in \text{ext}(\mathcal{P})$. Then $\mathcal{P}-V$ is also a polytope and $\text{ext}(\mathcal{P}-V) = \{V_i-V \mid V_i \in \text{ext}(\mathcal{P})\}$.
\end{lemma}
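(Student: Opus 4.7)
The statement is essentially the assertion that an affine translation is a bijective affine map, and so it preserves both the defining convex-hull structure of a polytope and the notion of extreme point. My plan is therefore to carry out two short steps: first show that $\mathcal{P}-V$ is itself the convex hull of finitely many points (hence a polytope in the sense adopted in the paper), and second transport the extreme-point set through the translation.

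\textbf{Step 1: $\mathcal{P}-V$ is a polytope.} Since $\mathcal{P}$ is a polytope, the Krein--Milman type characterisation used throughout the paper gives $\mathcal{P}=\text{conv}(\text{ext}(\mathcal{P}))=\text{conv}\{V_1,\dots,V_n\}$, where $V_1,\dots,V_n$ are the vertices. The translation map $T\colon \mathds{R}^d\to\mathds{R}^d$, $T(x)=x-V$, is affine, so it commutes with convex combinations. Consequently
\begin{equation}
    \mathcal{P}-V=T(\mathcal{P})=T(\text{conv}\{V_1,\dots,V_n\})=\text{conv}\{V_1-V,\dots,V_n-V\},
\end{equation}
which displays $\mathcal{P}-V$ as the convex hull of finitely many points, hence a polytope.

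\textbf{Step 2: identification of the extreme points.} I would use the characterisation that $w\in\text{ext}(\mathcal{Q})$ for a convex set $\mathcal{Q}$ iff $w\in\mathcal{Q}$ and $w$ cannot be written as $w=\tfrac{1}{2}(x+y)$ with $x,y\in\mathcal{Q}$ and $x\neq y$. Because $T$ is a bijection that preserves midpoints (being affine), this characterisation transfers directly: $V_i-V=T(V_i)$ is extreme in $T(\mathcal{P})=\mathcal{P}-V$ iff $V_i$ is extreme in $\mathcal{P}$. The forward inclusion $\{V_i-V\mid V_i\in\text{ext}(\mathcal{P})\}\subseteq\text{ext}(\mathcal{P}-V)$ and the reverse inclusion both follow, since applying $T^{-1}(y)=y+V$ turns a putative nontrivial midpoint decomposition in one polytope into one in the other.

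\textbf{Potential obstacle.} There is really no serious obstacle; the only thing to be careful about is not to invoke some stronger polytope property (e.g.\ facial structure) when the cleaner midpoint characterisation of extreme points suffices. Given how Lemma \ref{lemma:generatingedge} is already set up with the convention $\mathcal{P}=\text{conv}(\text{ext}(\mathcal{P}))$, the proof reduces to the two short observations above.
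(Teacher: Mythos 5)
Your proposal is correct and follows essentially the same route as the paper: Step 1 is the identical computation showing $\mathcal{P}-V=\text{conv}\{V_1-V,\dots,V_n-V\}$, and your midpoint-transfer argument in Step 2 is the same mechanism the paper uses (by contradiction) to show each $V_i-V$ is extreme. The only difference is cosmetic: you exploit the affine bijection symmetrically to obtain both inclusions, whereas the paper gets the inclusion $\text{ext}(\mathcal{P}-V)\subseteq\{V_i-V \mid V_i \in \text{ext}(\mathcal{P})\}$ by citing Theorem 7.2 of the polytope reference instead.
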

\begin{proof}
   $\mathcal{P}$ is a polytope in $\mathds{R}^d$ so there exists a $N \in \mathds{N}$ and $V_1, \dots, V_N \in \mathds{R}^d$ such that
   \begin{equation}
       \mathcal{P}=\text{conv}\{V_1,\dots,V_N\}.
   \end{equation}
   W.l.o.g. $V=V_1$. Let $\mathcal{P}_1 = \mathcal{P}-V_1$. We now claim that
   \begin{equation}
   \mathcal{P}_1= \text{conv}\{0, V_2-V_1, \dots, V_N-V_1\}.
   \end{equation}
   We start by showing $\mathcal{P}_1 
   \subset \text{conv}\{0, V_2-V_1, \dots, V_N-V_1\}$. To that end let $a \in \mathcal{P}_1$. So there are $\lambda_1, \dots, \lambda_N \in [0,1]$, with $\sum_{i=1}^N \lambda_i=1$ such that
   \begin{align}
       a &= \left(\sum_{i=1}^N \lambda_i V_i \right)-V_1= \sum_{i=1}^N \lambda_i V_i - \sum_{i=1}^N \lambda_i V_1\\
       &= \sum_{i=1}^N \lambda_i (V_i-V_1) \in \text{conv}\{0, V_2-V_1, \dots, V_N-V_1\}.
   \end{align}
   To show $ \text{conv}\{0, V_2-V_1, \dots, V_N-V_1\} \subset \mathcal{P}_1$ pick $a \in \text{conv}\{0, V_2-V_1, \dots, V_N-V_1\}$. So there exists $\lambda_1, \dots, \lambda_N \in [0,1]$, $\sum_{i=1}^N \lambda_i=1$ with
   \begin{align}
       a&= \sum_{i=1}^N \lambda_i (V_i -V_1)= \sum_{i=1}^N \lambda_i V_i - \sum_{i=1}^N \lambda_i V_1\\
       &= \left(\sum_{i=1}^N \lambda_i V_i \right)-  V_1 \in \text{conv}\{V_1, \dots, V_N\} -V = \mathcal{P}-V=\mathcal{P}_1.
   \end{align}
   So $\mathcal{P}_1$ is a polytope and with Theorem 7.2 of \cite{Brondsted-1983}
   \begin{equation}
       \text{ext}(\mathcal{P}_1) \subset \{0,V_2-V_1, \dots, V_N-V_1\}.
   \end{equation}
   We claim that in fact $\text{ext}(\mathcal{P}_1) =\{0,V_2-V_1, \dots, V_N-V_1\}$. To prove our claim let $i \in \{1, \dots, N\}$ and suppose to the contrary that $V_i-V_1$ is not an extreme point of $\mathcal{P}_1$. Then by definition there exist $y$ and $z$, two distinct points of $\mathcal{P}_1$, such that 
   \begin{equation}
       ]y,z[ \,\cap \{V_i-V\} \neq \emptyset.
   \end{equation}
   This means there is a $\lambda \in ]0,1[$ such that
   \begin{equation}
       V_i-V_1= \lambda y + (1-\lambda) z.
   \end{equation}
   Now, as $y, z \in \mathcal{P}_1= \mathcal{P}-V_1$ there are $a,b \in \mathcal{P}$ such that $y= a-V_1$ and $z=b-V_1$. Note that since $y \neq z$, $a \neq b$. So
   \begin{equation}
       V_i-V_1 = \lambda (a-V_1) + (1-\lambda) (b-V_1) = \left( \lambda a + (1-\lambda) b \right)-V_1.
   \end{equation}
   And therefore $V_i= \lambda a + (1-\lambda) b$ meaning there exist distinct $a,b \in \mathcal{P}$ such that
   \begin{equation}
       ]a,b[ \cap \{V_i\} \neq \emptyset,
   \end{equation}
   a contradiction to $V_i \in \text{ext}(\mathcal{P})$. And so, as wished $V_i-V_1 \in \text{ext}(\mathcal{P}_1)$ for any $i \in \{1, \dots, N\}$.
\end{proof}

\subsection{A polytope lies within the cone generated by vertex vectors at any vertex $V$ (Lemma \ref{lemma:edgesminimal})}\label{app:edgesminimalset}

We here prove Lemma~\ref{lemma:edgesminimal} of the main text.

\begin{proof}[Proof of Lemma~\ref{lemma:edgesminimal}]
Let $V_1, \dots, V_N$ be the vertices of $\mathcal{P}$ and let w.l.o.g. $V=V_1$. Let us also denote the vertex vectors at $V_1$ by $V_{i1} = V_i-V_1$. 
Let $ {\bf p} \in \mathcal{P}$ be a point of our polytope. Then there are $\lambda_i, \dots, \lambda_N \in [0,1]$ such that $\sum_{i=1}^N \lambda_i=1$ for which
\begin{equation}
    {\bf p} = \sum_{i=1}^N \lambda_i V_i.
\end{equation}
Also
\begin{align}
    {\bf p}-V_1 &= \left( \sum_{i=1}^N \lambda_i V_i \right) - V_1= \left(\sum_{i=1}^N \lambda_i V_i \right) - \left( \sum_{i=1}^N \lambda_i V_1 \right)= \sum_{i=2}^N \lambda_i V_{i1}. \label{equ:conevertexvectorsatV}
\end{align}
This shows that ${\bf p}- V_1$ lies in the cone of all the vertex vectors at $V$. We have left to show that this is still true when we only sum over the vertex vectors of edges at $V$ on the right hand side of Eq.~\ref{equ:conevertexvectorsatV} (with possibly differing coefficients) . Or in other words that the vertex vectors at $V$ that do not belong to an edge at $V$ are within the cone of vertex vectors of edges at $V$. To this end, let $i_1, \dots, i_k$ be the indices of $\{2, \cdots, N\}$ for which  $V_{i_11}, \dots, V_{i_k1}$ generate an edge at $V_1$. We would like to show that 
\begin{lemma} \label{lemma:edgereducing}
    for all $l \in \{2,\dots, N\} \setminus \{i_1, \dots, i_k\}$ there exist $c_{l1},\dots,c_{lk} \geq 0$ such that
\begin{equation}
    V_{l1}= \sum_{j=1}^k c_{lj} V_{i_j1}.
\end{equation}
\end{lemma}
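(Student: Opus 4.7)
The plan is to identify the cone $\mathcal{K}_V := \mathrm{cone}\{V_{i1} : i \neq 1\}$ generated by all vertex vectors at $V_1$ with the cone generated by only the edge-generating vertex vectors $\{V_{i_j 1}\}_{j=1}^k$; this immediately yields the required non-negative decomposition of any $V_{l1}$ since trivially $V_{l1} \in \mathcal{K}_V$. The tool is Minkowski--Weyl: a finitely generated pointed polyhedral cone is the conical hull of the generators of its extreme rays. Pointedness of $\mathcal{K}_V$ is easy to check directly: if both $v$ and $-v$ lay in $\mathcal{K}_V$, adding the non-negative decompositions and rearranging would exhibit $V_1$ as a convex combination of the remaining vertices, contradicting extremality of $V_1$.

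The main step is to show that every extreme ray of $\mathcal{K}_V$ is generated by a vertex vector of an edge at $V_1$. Suppose the ray through some $V_{m1}$ is extreme. By definition, there exists a linear functional $\psi$ with $\psi \geq 0$ on $\mathcal{K}_V$ and $\psi^{-1}(0) \cap \mathcal{K}_V$ equal exactly to that ray, so $\psi(V_{m1}) = 0$ while $\psi(V_{i1}) > 0$ for all $i \notin \{1,m\}$. Lifting back to $\mathbb{R}^d$ via the affine functional $\alpha(x) := \psi(x - V_1)$, we have $\alpha(V_i) = \psi(V_{i1}) \geq 0$ on every vertex, and hence $\alpha \geq 0$ on all of $\mathcal{P}$. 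Thus $\{\alpha = 0\}$ is a supporting hyperplane of $\mathcal{P}$, and $F := \alpha^{-1}(0) \cap \mathcal{P}$ is a face.

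The geometric crux is to show $F = [V_1, V_m]$, which forces $F$ to be an edge and $V_{m1}$ to be edge-generating. A vertex $V_i$ lies in $F$ iff $\psi(V_{i1}) = 0$, i.e.\ iff $V_{i1}$ lies on the ray through $V_{m1}$, i.e.\ iff $V_i = V_1 + s(V_m - V_1)$ for some $s \geq 0$. If $0 < s < 1$ then $V_i$ lies in the open segment $(V_1, V_m) \subset \mathcal{P}$, contradicting extremality of $V_i$; if $s > 1$ then $V_m$ lies in the open segment $(V_1, V_i) \subset \mathcal{P}$, contradicting extremality of $V_m$. Only $s = 0$ and $s = 1$ survive, so the only vertices in $F$ are $V_1$ and $V_m$, whence $F = \mathrm{conv}\{V_1, V_m\} = [V_1, V_m]$, a one-dimensional face, i.e.\ an edge at $V_1$.

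Combining the ingredients: Minkowski--Weyl applied to the pointed polyhedral cone $\mathcal{K}_V$ gives $\mathcal{K}_V = \mathrm{cone}(\text{generators of extreme rays})$, and the argument above shows every extreme ray is generated by some $V_{i_j 1}$. Hence $\mathcal{K}_V = \mathrm{cone}\{V_{i_j 1}\}_{j=1}^k$, and any $V_{l1}$ has a representation $V_{l1} = \sum_{j=1}^k c_{lj} V_{i_j 1}$ with $c_{lj} \geq 0$, as required. The main obstacle is the extreme-ray-to-edge identification, and within that the geometric exclusion of vertices at intermediate positions $s \notin \{0,1\}$ on the ray; this step hinges essentially on the extremality of the vertices of $\mathcal{P}$ and is what prevents a more elementary convex-combination manipulation from closing the argument.
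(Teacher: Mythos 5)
Your proof is correct, but it follows a genuinely different route from the paper's. The paper argues via a \emph{vertex figure}: it cuts $\mathcal{P}$ with a hyperplane $\{\mathbf{c}\cdot\mathbf{x}=\alpha\}$ strictly separating $V_1$ from the other vertices, invokes Proposition~2.4 of \cite{Ziegler-1995} to identify the vertices of the resulting polytope $\mathcal{P}'$ with the points where the edges at $V_1$ meet the hyperplane, writes the intersection point $w_l$ of the segment $[V_1,V_l]$ with the hyperplane as a convex combination of those vertices, and rescales to obtain the conical decomposition. You instead work directly with the cone $\mathcal{K}_V=\mathrm{cone}\{V_{i1}\}$: you check pointedness (your reduction to the extremality of $V_1$ is fine), apply the Minkowski--Weyl/Krein--Milman-type decomposition of a pointed polyhedral cone into its extreme rays, and then show that any extreme ray through a vertex vector $V_{m1}$ must come from an edge, by lifting the exposing functional $\psi$ to the affine functional $\alpha(x)=\psi(x-V_1)$, obtaining a face $F$ of $\mathcal{P}$ whose only vertices are $V_1$ and $V_m$ (your exclusion of intermediate and exterior positions $s\notin\{0,1\}$ via extremality is exactly right), so that $F=\mathrm{conv}\{V_1,V_m\}$ is an edge. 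Two remarks: first, your argument silently uses that every extreme ray of a finitely generated cone is spanned by one of the generators (needed to justify ``the ray through some $V_{m1}$''), and that extreme rays of polyhedral cones are exposed (your ``by definition'' of $\psi$); both are standard, e.g.\ in \cite{Barvinok-2002}, but should be cited rather than asserted. Second, your route in effect proves independently the characterization that the paper establishes only later (Appendix on extreme rays, which itself relies on Lemma~\ref{lemma:edgesminimal}), namely that edge vectors at $V_1$ exhaust the extreme rays of $\mathcal{K}_V$; since you derive it from scratch there is no circularity, and your approach buys that characterization as a by-product, whereas the paper's vertex-figure argument is somewhat more economical in that it outsources the local facial structure at $V_1$ to a single citation.
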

\begin{proof}
We construct a vertex figure of $\mathcal{P}$ at the vertex $V$ as done in \cite[Chapter~2.1]{Ziegler-1995}. To this end let $ {\bf c} \in \mathds{R}^d, c_1 \in \mathds{R}$ and let ${\bf c \cdot x} \leq c_1$ be a valid inequality for $\mathcal{P}$\footnote{This is an equality that is satisfied for all points of $\mathcal{P}$, i.e., whenever $ {\bf x} \in \mathcal{P}$.} with
\begin{equation}
    \{V_1\}= \mathcal{P} \cap \{ {\bf x} \mid {\bf c \cdot x}=c_1\}.
\end{equation}
Let also $\alpha \in \mathds{R}$ with $\alpha < c_1$ and ${\bf c} \cdot V_n < \alpha$ for all $n = 2,\dots,N$. We are now interested in
\begin{equation}
    \mathcal{P}'= \mathcal{P} \cap \{{\bf x} \mid {\bf c \cdot x} = \alpha\},
\end{equation}
which is called a vertex figure of $\mathcal{P}$ at $V_1$ in~\cite[Chapter~2.1]{Ziegler-1995}. First note that $\mathcal{P}'$ is a polytope. This is because 
\begin{equation}
    \tilde{\mathcal{P}} = \mathcal{P} \cap \{ {\bf x} \mid {\bf c \cdot x} \leq \alpha\}
\end{equation}
being the intersection of finitely many closed halfspaces and bounded (since $\mathcal{P}$ is bounded) is a ($\mathcal{H}$-)polytope. And since $\mathcal{P}'= \mathcal{P}' \cap \{{\bf x} \mid {\bf c \cdot x} \leq \alpha\}=\tilde{\mathcal{P}} \cap \{{\bf x} \mid {\bf c \cdot x} = \alpha\}$ is a face of $\tilde{\mathcal{P}}$ it is, according to Proposition~2.3 of \cite{Ziegler-1995}, a polytope too. Now, according to Propostion~2.4 of \cite{Ziegler-1995} the vertices of $\mathcal{P}'$ are given by
\begin{equation}
    F\cap \{{\bf x} \mid {\bf c \cdot x}=\alpha\},
\end{equation}
where $F$ is the set of edges of $\mathcal{P}$ containing $V_1$. Let $V_{n1}$, $n \in \{1, \dots, N\}$ be a vertex vector at $V_1$ and let
\begin{equation}
    \{w_n\} = \mathcal{P}' \cap \{r V_{n1}+V_1 \mid r\in[0,1]\}.
\end{equation}
The last equality is well-defined since on the one hand the set on the right hand side is non-empty because the function $f(r)= {\bf c} \cdot (r V_{n1}+V_1), r \in [0,1]$ is continuous with $f(0)=c_1 > \alpha > c_n := f(1)$ and so by the intermediate value theorem there exists a $r_n \in [0,1]$ with $f(r_n)=\alpha$. On the other hand $f'(r)=c_n-c_1 <0$, implying by the mean value theorem that $f$ is strictly monotonic and that $r_n$ is the only $ r \in [0,1]$ for which $f(r) = \alpha$. So $\mathcal{P}' \cap \{r V_{n1}+V_1 \mid r\in[0,1]\}$ is indeed a singleton which confirms the geometric intuition that every vertex vector at $V$ crosses with the plane $\{ {\bf x} \mid { \bf c \cdot x} =\alpha\}$ at exactly one point. As a very useful byproduct we also have
\begin{equation}
    w_n=r_n V_{n1} +V_1,
\end{equation}
meaning
\begin{equation}
    w_n -V_1 = r_n V_{n1}.
\end{equation}
Also as $V_1 \notin \mathcal{P}'$, $w_n \neq V_1$ and $r_n > 0$. Notice in particular that we can now explicitly write the set of vertices of $\mathcal{P}'$ as $F \cap \{ {\bf x} \mid {\bf c \cdot x} =c_1\} = \{w_{i_1} ,\dots w_{i_k}\}$. Let us now pick $l$ such that, as in the statement, the vertex vector $V_{l1}$ does not generate an edge at $V_1$. Since $w_l \in \mathcal{P}'$ there are positive $\beta_1, \dots, \beta_k$, $\sum_{j=1}^k \beta_j=1$, with
\begin{equation}
    w_l=\sum_{j=1}^k \beta_j w_{i_j}.
\end{equation}
Putting things together we get
\begin{align}
   V_{l1} &= \frac{1}{r_l} (w_l-V_1)= \frac{1}{r_l} \sum_{j=1}^k \beta_j (w_{i_j}-V_1)=  \sum_{j=1}^k \underbrace{\frac{\beta_j}{r_l} r_{i_j}}_{=:c_{lj}\geq 0}V_{i_j1},
\end{align}
as desired.
\end{proof}
With this we get to the desired result as the following shows.
\begin{align}
    {\bf p}-V_1 &= \sum_{i=2}^N \lambda_i V_{i1}\\
    &=\sum_{i=i_1,\dots,i_k} \lambda_i V_{i1} + \sum_{l \in \{2,\dots,N\} \setminus \{i_1, \dots, i_k\}} \lambda_l V_{l1}\\
    &= \sum_{i=i_1,\dots,i_k} \lambda_i V_{i1} + \sum_{l \in \{2,\dots,N\} \setminus \{i_1, \dots, i_k\}} \lambda_l \left(\sum_{j=1}^k c_{lj} V_{i_j1} \right)\\
    &= \sum_{j=1}^k \lambda_{i_j} V_{i_j1} + \sum_{j=1}^k \left( \sum_{l \in \{2,\dots,N\} \setminus \{i_1, \dots, i_k\}} \lambda_l  c_{lj} \right) V_{i_j1} \\
    &= \sum_{j=1}^k \underbrace{\left( \lambda_{i_j} +  \sum_{l \in \{2,\dots,N\} \setminus \{i_1, \dots, i_k\}} \lambda_l  c_{lj} \right)}_{\geq 0}  V_{i_j1},
\end{align}
as desired.
\end{proof}

\subsection{Vertex vectors of edges are extremal (Extreme rays of $\mathcal{K}_V$)} \label{app:extremerays}

Our goal in this Appendix is to prove Lemma~\ref{lemma:extremerays}. We will actually prove something a bit stronger than that, namely that the vertex vectors of edges at $V$ of $\mathcal{P}$ are the extreme rays of $\mathcal{K}_V$. From this Lemma~\ref{lemma:extremerays} will follow. We first need to set the stage, which we will start doing by showing that shifting $\mathcal{P}$ by $-V$, where $V$ is a vertex of $\mathcal{P}$, again delivers a polytope.
We next want to construct a vertex figure of $\mathcal{P}_1=\mathcal{P}-V_1$ at $0$ as done in the proof of Lemma~\ref{lemma:edgereducing} in Appendix~\ref{app:edgesminimalset}, see also for example \cite[Chapter~2.1]{Ziegler-1995}. To this end let ${\bf c} \in \mathds{R}^d, c_1 \in \mathds{R}$ and let $ {\bf c \cdot x} \leq c_1$ be a valid inequality for $\mathcal{P}_1$ with
\begin{equation}
    \mathcal{P}_1 \cap \{{\bf x} \mid {\bf c \cdot x}=c_1\}= \{0\}.
\end{equation}
Note that since ${\bf 0} \in \{ {\bf x} \mid {\bf c \cdot x}=c_1\}$, $c_1={\bf c \cdot 0}=0$. Let also $\alpha \in \mathds{R}$ with $\alpha < c_1 =0$ and ${\bf c \cdot} (V_i-V_1) < \alpha$ for all $i=2, \dots, N$. We are now interested in
\begin{equation}
    \mathcal{P}'=\mathcal{P}_1 \cap \{{\bf x} \mid {\bf c \cdot x} = \alpha\},
\end{equation}
the vertex figure of $\mathcal{P}_1$ at ${\bf 0}$. According to Proposition 2.4 of \cite{Ziegler-1995}, the vertices of $\mathcal{P}'$ are given by
\begin{equation}
    F \cap \{{\bf x} \mid {\bf c \cdot x} = \alpha\},
\end{equation}
where $F$ is the set of edges of $\mathcal{P}_1$ containing ${\bf 0}$. In fact, again according to Proposition 2.4 of \cite{Ziegler-1995} there is a bijection between the vertices of $\mathcal{P}'$ and the edges of $\mathcal{P}_1$. Denoting the vertices of $\mathcal{P}'$ by $w_1, \dots, w_l$, this means that to each vertex $w_i$ of $\mathcal{P}'$ there is a unique vertex vector of an edge of $\mathcal{P}_1$ at ${\bf 0}$, $V_i-V_1$, and $0<r_i=\frac{\alpha}{{\bf c \cdot} (V_i-V_1)}<1$ such that 
\begin{equation}
    w_i=r_i (V_i-V_1).
\end{equation}
Now let
\begin{equation}
    \mathcal{K}_{V}=\left\{ \sum_{i=2}^N \alpha_i (V_i-V_1) \mid \alpha_i \geq 0, \forall i \in \{2, \dots,N\} \right\}.
\end{equation}
From the proof of Lemma~\ref{lemma:edgesminimal} we know that it suffices to consider the $i \in \{2, \dots, N\}$ for which $V_i-V_1$ is a vertex vector of an edge of  $\mathcal{P}_1$ at ${\bf 0}$. One also readily sees that $\mathcal{K}_V$ is a cone since ${\bf 0} \in \mathcal{K}_V$ and for any $u \in \mathcal{K}_V$, $\lambda u \in \mathcal{K}_V$ for any $\lambda > 0$. We next would like to use Lemma 8.4 on page 66 of~\cite{Barvinok-2002}. To this end we need to show that $\mathcal{P}'$ is a convex base of $\mathcal{K}_V$. This is the content of the following
\begin{lemma}
    $\mathcal{P}'$ is a convex base of $\mathcal{K}_V$.
\end{lemma}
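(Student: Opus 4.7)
The plan is to identify $\mathcal{P}'$ with the transverse slice $\mathcal{K}_V \cap \{{\bf x} : {\bf c}\cdot {\bf x} = \alpha\}$ of the cone, from which the convex base property follows automatically (in the spirit of Lemma~8.4 of~\cite{Barvinok-2002}). The essential preliminary observation is that ${\bf c}\cdot u < 0$ for every nonzero $u \in \mathcal{K}_V$: writing $u = \sum_j \alpha_j (V_{i_j}-V_1)$ with $\alpha_j \geq 0$, the assumption $u \neq 0$ forces at least one $\alpha_j > 0$, and each ${\bf c}\cdot (V_{i_j}-V_1) < \alpha < 0$, so ${\bf c}\cdot u < 0$.

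First I would show the easy inclusion $\mathcal{P}' \subseteq \mathcal{K}_V \cap \{{\bf c}\cdot {\bf x} = \alpha\}$: the hyperplane condition is by definition of $\mathcal{P}'$, and $\mathcal{P}' \subseteq \mathcal{P}_1 \subseteq \mathcal{K}_V$ by Lemma~\ref{lemma:edgesminimal}. The reverse inclusion is the main obstacle. Given $b \in \mathcal{K}_V$ with ${\bf c}\cdot b = \alpha$, I would invoke Lemma~\ref{lemma:edgereducing} to write $b = \sum_j \alpha_j (V_{i_j}-V_1)$ with $\alpha_j \geq 0$, and set $\epsilon_j := {\bf c}\cdot (V_{i_j}-V_1)/\alpha$. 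The choice of $\alpha$ gives $\epsilon_j > 1$ (ratio of two negatives with the numerator more negative), and the equation ${\bf c}\cdot b = \alpha$ becomes $\sum_j \alpha_j \epsilon_j = 1$, from which $\sum_j \alpha_j \leq 1$ follows. Therefore
\[
b = \Bigl(1 - \sum_j \alpha_j\Bigr)\cdot 0 + \sum_j \alpha_j (V_{i_j}-V_1)
\]
is a convex combination of $0$ and vertices of $\mathcal{P}_1$, so $b \in \mathcal{P}_1$ and hence $b \in \mathcal{P}'$.

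Once the identification $\mathcal{P}' = \mathcal{K}_V \cap \{{\bf c}\cdot {\bf x} = \alpha\}$ is in place, the convex base property reduces to a one-line check. Convexity is immediate as an intersection of two convex sets, and $0 \notin \mathcal{P}'$ since ${\bf c}\cdot 0 = 0 \neq \alpha$. For any $u \in \mathcal{K}_V \setminus \{0\}$, set $\lambda := {\bf c}\cdot u / \alpha$; the preliminary observation gives $\lambda > 0$, and $b := u/\lambda$ lies in $\mathcal{K}_V$ (as a positive multiple of a cone element) with ${\bf c}\cdot b = \alpha$, so $b \in \mathcal{P}'$ and $u = \lambda b$. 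Uniqueness follows because any representation $u = \lambda' b'$ with $b' \in \mathcal{P}'$ gives ${\bf c}\cdot u = \lambda' \alpha$, forcing $\lambda' = \lambda$ and hence $b' = b$. I expect the main technical obstacle to be the nontrivial inclusion $\mathcal{K}_V \cap \{{\bf c}\cdot {\bf x} = \alpha\} \subseteq \mathcal{P}_1$: one must verify that the hyperplane slice of the unbounded cone actually stays inside the bounded polytope $\mathcal{P}_1$, and this is exactly what the separation condition ${\bf c}\cdot (V_{i_j}-V_1) < \alpha$ is engineered to guarantee via the $\epsilon_j > 1$ bound.
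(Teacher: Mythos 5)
Your proposal is correct and takes essentially the same route as the paper: the crucial estimate --- that ${\bf c}\cdot u<0$ for nonzero $u\in\mathcal{K}_V$ and that a cone element on the hyperplane $\{{\bf x}\mid {\bf c\cdot x}=\alpha\}$ has coefficient sum at most $1$ (via ${\bf c}\cdot(V_i-V_1)<\alpha<0$) and hence lies in $\mathcal{P}_1$ --- is exactly the paper's computation, merely packaged as the set identity $\mathcal{P}'=\mathcal{K}_V\cap\{{\bf x}\mid{\bf c\cdot x}=\alpha\}$ before the one-line scaling and uniqueness check. The paper instead rescales each nonzero $u\in\mathcal{K}_V$ by $\alpha/({\bf c}\cdot u)$ and verifies membership in $\mathcal{P}'$ directly, which is the same argument in a different order.
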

\begin{proof}
    We need to show that
    \begin{enumerate}
        \item $\mathcal{P}' \subset \mathcal{K}_V$,
        \item ${\bf 0} \notin \mathcal{P}'$,
        \item $\forall {\bf u} \in \mathcal{K}_V, \exists! \lambda > 0, {\bf v} \in \mathcal{P}' \text{ s.t. } {\bf u} = \lambda {\bf v}$,
        \item $\mathcal{P}'$ is convex.
    \end{enumerate}

    1.,2., and 3. together show that $\mathcal{P}'$ is a base of $\mathcal{K}_V$ according to Definition 8.3 of~\cite{Barvinok-2002}. The addition of 4. makes $\mathcal{P}'$ a convex base of $\mathcal{K}_V$. 4. is in fact immediate since from the proof of Lemma~\ref{lemma:edgereducing}, we know that $\mathcal{P}'$ is a polytope. And polytopes are by construction convex. To show 1. we recall that from Lemma~\ref{lemma:edgesminimal}, $\mathcal{P}_1 \subset \mathcal{K}_V$, and so
    \begin{align}
        \mathcal{P}' &= \mathcal{P}_1 \cap \{{\bf x} \mid {\bf c \cdot x} = \alpha\} \subset \mathcal{K}_V \cap \{{\bf x} \mid {\bf c \cdot x} = \alpha\} \subset \mathcal{K}_V.
    \end{align}
    Showing 2. is also straightforward since as $\alpha < c_1=0$ we immediately have ${\bf 0} \notin \{ {\bf x} \mid {\bf c \cdot x} = \alpha\}$ from which follows ${\bf 0} \notin  \mathcal{P}' =\mathcal{P}_1 \cap \{{\bf x} \mid {\bf c \cdot x}=\alpha\}$. We have left to show 3. To do so, let ${\bf u} \in \mathcal{K}_V$, ${\bf u} \neq 0$. We need to find a ${\bf v} \in \mathcal{P}'$ and a $\lambda >0$ such that ${\bf u} = \lambda {\bf v}$. And then show that those are in fact unique. As ${\bf u} \in \mathcal{K}_V$, there are $\alpha_i \geq 0, i= 2, \dots,N $ such that
    \begin{equation}
        {\bf u} = \sum_{i=2}^N \alpha_i (V_i-V_1).
    \end{equation}
    Now ${\bf c \cdot u} < 0$, for  ${\bf c} \cdot (V_i-V_1) < \alpha <0$ for any $i \in \{2,\dots, N\}$ and ${\bf u \neq 0}$ implies $\alpha_i > 0$ for at least one $i \in \{2,\dots, N\}$. So let $\beta = {\bf c \cdot u} <0$ and let ${\bf b} = \frac{\alpha}{\beta} {\bf u}$. ${\bf b}$ is our candidate vor ${\bf v}$ and $\frac{\beta}{\alpha} >0$ our $\lambda$. We already have ${\bf u} = \frac{\beta}{\alpha} {\bf b}$. We have left to show that ${\bf b} \in \mathcal{P}'$. To do so, first note that 
    \begin{equation}
        {\bf c \cdot b} = \frac{\alpha}{\beta} {\bf c \cdot u} = \alpha.
    \end{equation}
    This already shows ${\bf b} \in \{{\bf x} \mid {\bf c \cdot x}=\alpha\}$. To show that ${\bf b} \in \mathcal{P}_1$ notice that
    \begin{equation}
        {\bf b} = \sum_{i=2}^N \alpha_i \frac{\alpha}{\beta} (V_i-V_1).
    \end{equation}
    Also
    \begin{align}
        \beta = {\bf c \cdot u}= {\bf c} \cdot \sum_{i=2}^N \alpha_i (V_i-V_1)= \sum_{i=2}^N \alpha_i \underbrace{{\bf c}\cdot (V_i-V_1)}_{< \alpha}< \alpha \sum_{i=2}^N \alpha_i.
    \end{align}
    So as $\alpha <0$, we get
    \begin{equation}
        \sum_{i=2}^N \alpha_i < \frac{\beta}{\alpha}.
    \end{equation}
    And since $\frac{\beta}{\alpha} > 0$, we have
    \begin{equation}
        0< \sum_{i=2}^N \alpha_i \frac{\alpha}{\beta} < 1.
    \end{equation}
    So with $1>\gamma = 1-\sum_{i=2}^N \alpha_i \frac{\alpha}{\beta}>0$ we get that $\gamma + \sum_{i=2} \alpha_1 \frac{\alpha}{\beta} =1$ and
    \begin{equation}
    {\bf b}= \gamma (V_1-V_1)+ \sum_{i=2}^N \alpha_i \frac{\alpha}{\beta} (V_i-V_1),
    \end{equation}
    which, as desired, shows that 
    \begin{equation}
        {\bf b} \in \text{conv}\{0, V_2-V_1, \dots, V_N-V_1\}=\mathcal{P}_1.
    \end{equation}
    With the above we therefore have ${\bf b} \in \mathcal{P}'= \mathcal{P}_1 \cap \{{\bf x} \mid {\bf c \cdot x }=\alpha\}$. So we found a ${\bf v} \in \mathcal{P}'$ and a $\lambda >0$ such that ${\bf u}= \lambda {\bf v}$. We have left to show that they are unique. For this suppose that there is a ${\bf \tilde{b}} \in \mathcal{P}'$ and some $\tilde{\lambda} >0$ with ${\bf u}= \tilde{\lambda} {\bf \tilde{b}}$. Then as ${\bf \tilde{b}} \in 
    \{{\bf x} \mid {\bf c \cdot x} = \alpha\}$ we have
    \begin{align}
        \alpha = {\bf c \cdot \tilde{b}}= {\bf c} \cdot \frac{{\bf u}}{\tilde{\lambda}}=\frac{1}{\tilde{\lambda}} {\bf c \cdot u}= \frac{\beta}{\tilde{\lambda}}.
    \end{align}
    So $\tilde{\lambda}=\frac{\beta}{\alpha}$. And so we have
    \begin{equation}
        \frac{\beta}{\alpha} {\bf b}= {\bf u}= \tilde{\lambda} {\bf \tilde{b}}=\frac{\beta}{\alpha} {\bf \tilde{b}},
    \end{equation}
    from which follows ${\bf \tilde{b}=b}$, as desired.
\end{proof}
Now, from Lemma 8.4 of~\cite{Barvinok-2002} every vertex of $\mathcal{P}'$ spans an extreme ray of $\mathcal{K}_V$. so given a vertex vector of $\mathcal{P}_1$ at ${\bf 0}$, say $V_i-V_1$, we have that 
\begin{equation}
    w_i = r_i (V_i-V_1)
\end{equation}
with $r_i= \frac{\alpha}{{\mathbf{ c} \cdot ({V}_i-{V}_1)}}$ is a vertex of $\mathcal{P}'$ and so
\begin{align}
    \text{co}(w_i) = \{\lambda w_i \mid \lambda \geq 0\}= \{ \mu (V_i-V_1) \mid \mu \geq 0\}= \text{co}(V_i-V_1)
\end{align}
is an extreme ray of $\mathcal{K}_V$. So all vertex vectors of $\mathcal{P}_1$ at ${\bf 0}$ generate/span an extreme ray of $\mathcal{K}_V$. And in fact, again according to Lemma 8.4 of~\cite{Barvinok-2002}, there are all the extreme rays of $\mathcal{K}_V$ for if a point ${\bf u} \in \mathcal{K}_V$ is on an extreme ray of $\mathcal{K}_V$, then it must hold that ${\bf u} \in \text{co}(w_i)$, where $w_i$ is a vertex of $\mathcal{P}'$. So we have just proven the following
\begin{lemma}
    The vertex vectors of edges at $V_1$ of $\mathcal{P}$ are the extreme rays of $\mathcal{K}_V$.
\end{lemma}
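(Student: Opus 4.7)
The plan is to combine two ingredients that the preceding development has carefully set up: Barvinok's Lemma 8.4, which states that the extreme rays of a pointed cone are precisely the rays spanned by the vertices of any convex base of that cone, and Ziegler's Proposition 2.4, which provides a bijection between the vertices of the vertex figure $\mathcal{P}'$ and the edges of $\mathcal{P}_1$ containing $\mathbf{0}$ (equivalently, edges of $\mathcal{P}$ at $V_1$).

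First I would apply Barvinok's Lemma 8.4 directly: since we have just established that $\mathcal{P}'$ is a convex base of $\mathcal{K}_V$, its vertices $\{w_i\}$ span precisely the extreme rays of $\mathcal{K}_V$. Next I would invoke the explicit form $w_i = r_i (V_i - V_1)$ with $r_i \in (0,1)$ that was derived when constructing $\mathcal{P}'$: this says that the ray $\mathrm{co}(w_i) = \{\lambda w_i \mid \lambda \geq 0\}$ coincides with the ray $\mathrm{co}(V_i - V_1)$. This immediately yields the forward inclusion, namely that every vertex vector of an edge at $V_1$ spans an extreme ray of $\mathcal{K}_V$. The reverse inclusion is symmetric: any extreme ray of $\mathcal{K}_V$ must, again by Barvinok, be spanned by some vertex $w_i$ of $\mathcal{P}'$, which in turn lies on the ray of the associated edge vertex vector $V_i - V_1$.

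The main obstacle is not in the present lemma itself but in the preceding step, namely verifying that $\mathcal{P}'$ is a convex base of $\mathcal{K}_V$ (checking containment in $\mathcal{K}_V$, exclusion of the origin, the existence-and-uniqueness scaling property, and convexity). Once that groundwork is in place, the current statement is essentially a dictionary translation between the `base' picture (vertices of $\mathcal{P}'$) and the `cone' picture (vertex vectors of edges at $V_1$). A minor subtlety to confirm is that the correspondence $w_i \leftrightarrow V_i - V_1$ exhausts all edges at $V_1$ and sends distinct edges to distinct rays; both are guaranteed by the strict monotonicity of $f(r) = \mathbf{c} \cdot (r(V_i - V_1) + V_1)$, which ensures that each edge crosses the hyperplane $\{\mathbf{x} \mid \mathbf{c} \cdot \mathbf{x} = \alpha\}$ at exactly one point.
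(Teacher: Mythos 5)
Your proposal is correct and follows essentially the same route as the paper: after establishing that the vertex figure $\mathcal{P}'$ is a convex base of $\mathcal{K}_V$, the paper likewise applies Barvinok's Lemma 8.4 together with the Ziegler Proposition 2.4 bijection and the identity $w_i = r_i(V_i - V_1)$ to conclude $\mathrm{co}(w_i) = \mathrm{co}(V_i - V_1)$, giving both inclusions exactly as you describe. No gaps to report.
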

From this we now want to show that Lemma~\ref{lemma:extremerays} follows.
\begin{proof}[Proof of Lemma~\ref{lemma:extremerays}]
We want to show that a given vertex vector of an edge at $V_1$ of $\mathcal{P}$, say $V_{i_1}-V_1$, cannot be written as a positive linear combination of the other vertex vectors of edges at $V_1$ of $\mathcal{P}_1$. Assume to the contrary that this is not the case. Then denoting the vertex vector of edges at $V_1$ of $\mathcal{P}_1$ by $V_{i_1}-V_1, \dots, V_{i_k}-V_1$, $i_1, \dots, i_k \in \{2, \dots, N\}$, our assumption implies there are $\lambda_2, \dots, \lambda_k \geq 0$ such that
\begin{align}
V_{i_1}-V_1= \sum_{l=2}^k \lambda_l (V_{i_l}-V_1)= \lambda_2 (V_{i_2}-V_1) + \sum_{l=3}^k \lambda_l (V_{i_l}-V_1).
\end{align}
So with $y= \lambda_2 (V_{i_2}-V_1) \in \text{co}(V_{i_2}-V_1) \subset \mathcal{K}_V$ and $z=\sum_{l=3}^k \lambda_l (V_{i_l}-V_1) \in \text{co}(V_{i_3}-V_1, \dots, V_{i_k}-V_1) \subset \mathcal{K}_V$ we get
\begin{equation}
    V_{i_1}-V_1= y+z.
\end{equation}
So since $V_{i_1}-V_1$ lies on an extreme ray of $\mathcal{K}_V$, by definition of an extreme ray,
\begin{equation}
    y,z \in \text{co}(V_{i_1}-V_1).
\end{equation}
So there exists a $\mu \geq 0$ such that 
\begin{equation}
    V_{i_1}-V_1= \mu y = \mu \lambda_2 (V_{i_2}-V_1).
\end{equation}
We have $\mu \lambda_2 \geq 0$. Now, if $\mu \lambda_2 =0$, $V_{i_1}-V_1=0$, which is a contradiction to $V_{i_1}$ and $V_1$ being distinct. If $\mu \lambda_2=1$ we get $V_{i_1}=V_{i_2}$ again a contradiction to them being distinct. If $0 < \mu \lambda_2 <1$ we get
\begin{equation}
    V_{i_1}-V_1= \mu \lambda_2 (V_{i_2}-V_1)+(1-\mu \lambda_2) \times 0,
\end{equation}
meaning
\begin{equation}
    \{V_{i_1}-V_1\} \cap ]V_{i_2}-V_1, 0[ \,\neq \emptyset.
\end{equation}
And so, as $V_{i_1}-V_1$ is an extreme point of $\mathcal{P}_1$, this implies $V_{i_2}-V_1,0 \in \{V_{i_1}-V_1\}$, a contradiction to $V_{i_1} \neq V_1$. So $\mu \lambda_2 > 1$ must hold. But then $0 < \frac{1}{\mu \lambda_2} < 1$ and
\begin{equation}
    V_{i_2}-V_1= \frac{1}{\mu \lambda_2} (V_{i_1}-V_1)+ (1-\frac{1}{\mu \lambda_2}) \times 0,
\end{equation}
meaning
\begin{equation}
    \{V_{i_2}-V_1\} \,\cap \, ]V_{i_1}-V_1,0[ \, \neq \emptyset.
\end{equation}
And so, as $V_{i_2}-V_1$ is an extreme point of $\mathcal{P}_1$,
\begin{equation}
    V_{i_1}-V_1,0 \in \{V_{i_2}-V_1\}, 
\end{equation}
a contradiction to $V_{i_2} \neq V_1$.
\end{proof}

\section{Proofs regarding population polytopes}

In this section we derive discuss and prove relevant properties of a population polytope, as it is the central mathematical object of our work, on which we construct the optimal trajectory.

Our approach is and leans upon that of the text by Barvinok\cite{Barvinok-2002} on convexity. As such, we employ the notation from Barvinok rather than that of the rest of the paper to aid the reader who can refer to \cite{Barvinok-2002} for foundational concepts related to polytopes in general.

\subsection{Group theory of a population polytope}\label{app:grouptheorypolytope}

In this section we describe for the interested reader the relationship between the permutation group and the vertices of a population polytope. Given a vector $a = (\alpha_1,...,\alpha_n) \in \mathbb{R}^n$, the population polytope $\mathcal{P}$ is the set of all points in $\mathbb{R}^n$ of the form $Da$ where $D$ is a doubly stochastic matrix. Via the Birkhoff von-Neumann theorem, we know that the vertices of $\mathcal{P}$ are permutations $P$ of $a$. The main lemma here is a one-to-one correspondence between the vertices and a group formed out of the permutation group.

To start off we note a powerful symmetry of $\mathcal{P}$.

\begin{lemma}\label{lem:polytopeinvariance}
$\mathcal{P}$ is invariant under permutations.
\end{lemma}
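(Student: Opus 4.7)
The plan is to give a short direct proof based on the definition of $\mathcal{P}$ as the image of the doubly stochastic matrices acting on $a$, together with the fact that permutation matrices themselves are doubly stochastic and that doubly stochastic matrices form a semigroup under multiplication.

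Concretely, I would pick an arbitrary permutation matrix $P$ and an arbitrary point $p \in \mathcal{P}$, and write $p = Da$ for some doubly stochastic matrix $D$. Then $Pp = (PD)a$, so the statement reduces to showing that $PD$ is doubly stochastic. This is immediate once one notes that (i) $P$ itself has nonnegative entries whose row- and column-sums are $1$, hence is doubly stochastic, and (ii) the set of doubly stochastic matrices is closed under multiplication (the product of two nonnegative matrices with unit row- and column-sums has the same properties, by a one-line computation with the all-ones vector $\mathbf{1}$: $PD\mathbf{1}=P\mathbf{1}=\mathbf{1}$ and $\mathbf{1}^{T}PD=\mathbf{1}^{T}D=\mathbf{1}^{T}$). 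Therefore $PD$ is doubly stochastic and $Pp \in \mathcal{P}$, which gives $P\mathcal{P} \subseteq \mathcal{P}$. Equality then follows because $P$ is invertible with $P^{-1}$ also a permutation matrix, so the same argument applied to $P^{-1}$ yields $\mathcal{P} = P(P^{-1}\mathcal{P}) \subseteq P\mathcal{P}$.

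An equivalent, more geometric way to frame the same argument, which I would mention as a remark, is via the Birkhoff--von Neumann description: $\mathcal{P}$ is the convex hull of the permutations $\{Qa\}_{Q}$ of $a$. A permutation matrix $P$ sends each vertex $Qa$ to $PQa$, which is again a permutation of $a$ and hence a vertex of $\mathcal{P}$; since $P$ is linear, it maps the convex hull of the vertices onto itself. This perspective makes the invariance essentially tautological and also matches the way the result will be used later, namely to identify the permutation action on vertices.

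I do not anticipate any real obstacle here. The only thing to be careful about is being explicit that ``invariant under permutations'' means $P\mathcal{P}=\mathcal{P}$ (not merely $P\mathcal{P} \subseteq \mathcal{P}$), which is why the $P^{-1}$ step is worth including; otherwise the proof is a one-liner built on the semigroup property of doubly stochastic matrices.
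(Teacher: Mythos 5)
Your proof is correct and follows essentially the same algebraic route as the paper: both show that $PD$ is doubly stochastic (closure under left-multiplication by permutations) to get $P\mathcal{P}\subseteq\mathcal{P}$, and both use invertibility of the permutation to get the reverse inclusion. Your remark via the Birkhoff--von Neumann vertex description likewise parallels the paper's ``geometric version'' sketch, and your explicit all-ones-vector verification is a slightly cleaner justification than the paper's (mildly inaccurate) appeal to the doubly stochastic matrices forming a ``group.''
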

\begin{proof}
The geometric version of this proof is based on symmetry: since we run over all doubly-stochastic matrices, the polytope $\mathcal{P} \in \mathbb{R}^n$ is symmetric w.r.t. the axes of $\mathbb{R}^n$. A single permutation $P$ can be expressed as a relabelling of the axes, thus $P \mathcal{P} = \mathcal{P}$.

In algebraic terms: the set of doubly stochastic matrices in its entirety forms a group, thus $PD$ is also a doubly-stochastic matrix for $P,\,D$ being a permutation and a doubly stochastic matrix respectively. Consider the original polytope $\mathcal{P} = \{ x: \; x = Da\}$ and the permuted set $\mathcal{P}^\prime = \{ x: \; x = P_0Da \}$ for some fixed permutation $P_0$ and arbitrary doubly-stochastic $D$. Since $P_0 D = D^\prime$ is also doubly-stochastic, it follows that if $x = P_0 D a \in \mathcal{P}^\prime$, then $x = D^\prime a \in \mathcal{P}$ as well. Conversely, expressing an arbitrary $D = P_0 (P_0^{-1} D)$, we see that if $x = Da \in \mathcal{P}$, then $x = P_0 (P_0^{-1} D) a = P_0 D^{\prime\prime} a \in \mathcal{P}^\prime$ as well.
\end{proof}

Lemma \ref{lem:polytopeinvariance} is rather powerful, it implies in a geometric sense that the polytope looks the same under permutation, which gives it a very high degree of symmetry. The following lemma is a direct consequence.

\begin{lemma}\label{lem:permutationisvertex}
Every permutation $Pa$ is a vertex of $\mathcal{P}$.
\end{lemma}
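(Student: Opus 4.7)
My plan is to reduce the lemma to the single statement that $a$ itself is an extreme point of $\mathcal{P}$, and then establish this via Schur's majorization theorem together with a short coordinatewise argument. For the reduction: Lemma~\ref{lem:polytopeinvariance} gives $P\mathcal{P} = \mathcal{P}$ for every permutation $P$, and permutations act as linear bijections of $\mathbb{R}^n$ that preserve convex combinations. They therefore map extreme points of $\mathcal{P}$ to extreme points of $P\mathcal{P} = \mathcal{P}$. So once $a$ is shown to be extreme, the image $Pa$ is automatically extreme for every $P$, which is exactly the claim.

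To prove that $a$ is extreme, I would first relabel coordinates so that $\alpha_1 \geq \alpha_2 \geq \cdots \geq \alpha_n$; this relabelling is itself a permutation and leaves $\mathcal{P}$ invariant, so it does not change the question. Suppose that $a = \lambda y + (1-\lambda) z$ with $y, z \in \mathcal{P}$ and $\lambda \in (0,1)$. The first step is to argue that $y$ and $z$ must already be permutations of $a$. For this I would invoke Schur's theorem: every $x \in \mathcal{P}$ satisfies $a \succ x$, which is equivalent to $f_k(x) \leq f_k(a)$ for all $k$, where $f_k(x)$ denotes the sum of the $k$ largest entries of $x$, with equality at $k=n$. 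Since each $f_k$ is convex, applying it to the decomposition of $a$ yields $f_k(a) \leq \lambda f_k(y) + (1-\lambda) f_k(z) \leq f_k(a)$, forcing $f_k(y) = f_k(z) = f_k(a)$ for every $k$. This is precisely the statement that $y$ and $z$ have the same sorted entries as $a$, i.e., each is a permutation of $a$.

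The final step is to upgrade the ``same multiset'' conclusion to the stronger $y = z = a$ by induction on position. At position 1, since $y$ and $z$ are permutations of the decreasingly sorted $a$, their first components satisfy $y_1, z_1 \leq \alpha_1$; combined with $\alpha_1 = \lambda y_1 + (1-\lambda) z_1$ and $\lambda \in (0,1)$, this forces $y_1 = z_1 = \alpha_1$. After this is peeled off, the remaining components of $y$ and $z$ form permutations of $(\alpha_2, \ldots, \alpha_n)$, so the same argument applies at position 2, and iterating delivers $y = z = a$. The main point requiring care is the degenerate case where some $\alpha_i$ coincide, but the bound $y_k \leq \alpha_k$ only relies on $(y_k, \ldots, y_n)$ being a permutation of the decreasingly sorted tail $(\alpha_k, \ldots, \alpha_n)$, so the inductive step goes through uniformly whether or not there are ties. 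This is the step I expect to be the subtlest; everything else is a standard consequence of majorization and the permutation symmetry of $\mathcal{P}$.
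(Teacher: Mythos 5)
Your proof is correct, and it takes a genuinely different (and more self-contained) route than the paper's for the crucial step. Both approaches use Lemma~\ref{lem:polytopeinvariance} to reduce the claim to a single permutation: once one $P_0 a$ is known to be a vertex, invariance of $\mathcal{P}$ under the full permutation group upgrades this to all $Pa$. The difference lies in how that base case is established. The paper's proof implicitly leans on two standard facts that it does not spell out: a nonempty bounded polytope has at least one vertex, and the vertices of a convex hull of a finite set lie within that generating set (so some $P_0 a$ must be a vertex). You instead prove directly that the sorted $a$ is an extreme point, by observing that the partial-sum functions $f_k$ (sum of the $k$ largest entries) are convex, that majorization by $a$ caps them at $f_k(a)$ on all of $\mathcal{P}$, and that strict convex combination therefore forces the endpoints to match $a$ in all partial sums, hence to be permutations of $a$; the coordinate-peeling argument then closes the gap from ``same multiset'' to ``same vector,'' and you are right that it survives ties because only the tail multiset structure is needed at each step. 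This is a clean, elementary alternative that replaces the appeal to Minkowski's theorem with an explicit computation. One small terminological note: the fact that $x \in \mathcal{P}$ implies $a \succ x$ is not really Schur's theorem (which concerns eigenvalues majorizing diagonals); in the paper's setup it is the Hardy--Littlewood--P\'olya characterization underlying the Birkhoff--von Neumann reduction (or simply the definition of $\mathcal{P}$), but this does not affect correctness.
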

\begin{proof}
Since $\mathcal{P}$ is invariant under permutations, then if $P_0 a$ is a vertex of $\mathcal{P}$, then for any permutation $P$, the point $P P_0 a$ must also be a vertex. But the set of $\{P P_0\}$ is the entire group of permutations, thus every permutation of $a$ is a vertex of $\mathcal{P}$.

\end{proof}

There are further implications that we need not go into, for instance that 1) the \textit{vertex figure} of any vertex of $\mathcal{P}$, i.e. the polyhedron formed by cutting the polytope close enough to the vertex so that it isolates it, is the same for every vertex, in turn implying that the number, size and structure of edges, faces, and higher dimensional facets containing each vertex are exactly the same as those at another.

The simplest case for the polytope vertices is when $a$ has distinct elements, i.e. $\alpha_i \neq \alpha_j$ for distinct $i$ and $j$. In this case every $Pa$ is distinct, and thus the polytope has $n!$ vertices, each corresponding to a single permutation $P$.

This brings us to the major part of this section: the structure of $\mathcal{P}$ when $a$ has degeneracies:
\begin{lemma}\label{lem:verticesfactorgroup}
Let $a = (\alpha_1,...,\alpha_n)$ be a list of $n$ real elements and $\mathcal{P}$ be the permutation polytope of $a$. Let $\mathcal{G}$ denote the group of permutations of a $n$-element list, and $\mathcal{S}$ denote the set of permutations that leave $a$ unchanged. Then $\mathcal{S}$ is a subgroup of $\mathcal{G}$ and the vertices of $\mathcal{P}$ are in one-to-one correspondence to the elements of the factor group $\mathcal{G} / \mathcal{S}$, formed by the left cosets of $\mathcal{S}$ w.r.t. elements of $\mathcal{G}$.
\end{lemma}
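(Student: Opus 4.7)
The plan is to prove the two assertions in sequence: first that $\mathcal{S}$ is a subgroup of $\mathcal{G}$, and then to establish the bijection between the left coset space $\mathcal{G}/\mathcal{S}$ and the vertex set $\text{ext}(\mathcal{P})$. The argument is essentially a concrete instance of the orbit-stabilizer theorem applied to the natural action of the symmetric group $\mathcal{G}$ on $\mathbb{R}^n$ by coordinate permutation, with $\mathcal{S}$ playing the role of the stabilizer of the point $a$.

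First I would verify that $\mathcal{S}$ is a subgroup. The identity permutation trivially fixes $a$, so $\mathcal{S}$ is non-empty. If $P,Q \in \mathcal{S}$ then $(PQ)a = P(Qa) = Pa = a$, so $PQ \in \mathcal{S}$; and if $Pa = a$, applying $P^{-1}$ yields $a = P^{-1}a$, so $P^{-1}\in \mathcal{S}$. This is routine.

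The core of the proof is the bijection. I would define the map
\begin{equation*}
\phi : \mathcal{G}/\mathcal{S} \longrightarrow \text{ext}(\mathcal{P}), \qquad \phi(P\mathcal{S}) = Pa.
\end{equation*}
Well-definedness and injectivity are really two sides of the same equivalence: for two permutations $P,P' \in \mathcal{G}$ one has $P\mathcal{S} = P'\mathcal{S}$ if and only if $P^{-1}P' \in \mathcal{S}$, which by definition of $\mathcal{S}$ is equivalent to $(P^{-1}P')a = a$, i.e.\ $Pa = P'a$. This single chain of equivalences simultaneously shows $\phi$ is well-defined and injective. That $\phi$ lands in $\text{ext}(\mathcal{P})$ is exactly Lemma~\ref{lem:permutationisvertex}. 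Surjectivity follows from the Birkhoff--von~Neumann theorem combined with the characterisation of the vertices of the Birkhoff polytope as permutation matrices: any vertex $V$ of $\mathcal{P}$ must arise as $V = Da$ for some doubly-stochastic $D$, and by the extremality analysis already used in Sec.~\ref{proof:populationpolytope} (together with Lemma~\ref{lem:permutationisvertex}), $V$ must in fact be of the form $Pa$ for some $P \in \mathcal{G}$, so that $V = \phi(P\mathcal{S})$.

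The main obstacle, if any, is a matter of care rather than difficulty: one must argue that \emph{every} vertex of $\mathcal{P}$ is a permutation of $a$ (not merely that permutations of $a$ are vertices). Lemma~\ref{lem:permutationisvertex} gives the easy direction; the converse requires invoking that the extreme points of the set $\{Da : D \text{ doubly stochastic}\}$ are images of the extreme points of the Birkhoff polytope, i.e.\ of permutation matrices. A minor terminological caveat is that $\mathcal{S}$ need not be normal in $\mathcal{G}$ when $a$ has repeated degeneracy patterns, so $\mathcal{G}/\mathcal{S}$ should be read as the set of left cosets (a homogeneous $\mathcal{G}$-space); the bijection $\phi$ remains valid regardless, and the cardinality count $|\mathcal{G}/\mathcal{S}| = n!/\prod_k m_k!$ (where $m_k$ are the multiplicities of the distinct values in $a$) recovers the expected vertex count.
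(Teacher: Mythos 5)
Your proof is correct, and it reaches the conclusion by a somewhat different mechanism than the paper. The paper proceeds by counting: it first describes $\mathcal{S}$ explicitly as the products of permutations acting within the degenerate blocks of $a$, computes $|\mathcal{S}| = \prod_i s_i!$, observes that each left coset yields a single distinct permutation of $a$, and then matches the multinomial count $n!/\prod_i s_i!$ of distinct permutations of $a$ against $|\mathcal{G}|/|\mathcal{S}|$ to conclude the correspondence is one-to-one. You instead run the orbit--stabilizer argument directly: the map $\phi(P\mathcal{S}) = Pa$ is well defined and injective because $P\mathcal{S} = P'\mathcal{S} \iff P^{-1}P' \in \mathcal{S} \iff Pa = P'a$, it lands in $\mathrm{ext}(\mathcal{P})$ by Lemma~\ref{lem:permutationisvertex}, and it is surjective because $\mathcal{P}$ is the convex hull of the finite set $\{Pa\}_{P\in\mathcal{G}}$, so every vertex is already a permutation of $a$ (the same fact the paper invokes via Birkhoff--von Neumann before stating the lemma, and which indeed needs to be said explicitly, as you do). Your route avoids the explicit cardinality computations entirely and is arguably cleaner; the paper's counting route yields the block structure of $\mathcal{S}$ and the vertex-count formula $n!/\prod_i s_i!$ as useful by-products (you recover the latter as a remark anyway). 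One further point in your favour: your caveat that $\mathcal{S}$ need not be normal, so that $\mathcal{G}/\mathcal{S}$ must be read as a set of left cosets, is more careful than the paper, which asserts that the cosets form a group under $P_1\mathcal{S}\cdot P_2\mathcal{S} = (P_1P_2)\mathcal{S}$ --- a rule that is not well defined for non-normal $\mathcal{S}$; fortunately neither the lemma statement nor its later use requires any group structure on the coset space, only the bijection.
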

\begin{proof}
Let us label the distinct values that the elements of $a$ can take as $x_1 > x_2 ... > x_k$, where $k \leq n$. This allows us to partition the set of positions $\{1,...,n\}$ into $k$ pairwise disjoint non-empty subsets $S_1,...,S_k$, where each $S_i$ contains those positions of $a$ with the value $x_i$:
\begin{align}
    S_i = \{j: \; \alpha_j = x_i\}.
\end{align}
Let the cardinality of $S_i$ be denoted as $s_i$. As an example, if $a = (x_2,x_3,x_1,x_1,x_3)$ then $S_1 = \{3,4\}$, $S_2 = \{1\}$ and $S_3 = \{2,5\}$.

The set $\mathcal{S}$ of permutations that leave $a$ unchanged are exactly those composed of pairwise disjoint permutations, each acting on a single $S_i$:
\begin{align}
    \mathcal{S} = \{ P: \; P = P_1P_2...P_k \},
\end{align}
where $P_i$ acts only upon the positions within $S_i$. $\mathcal{S}$ is a subgroup because:
\begin{enumerate}
    \item If $P = P_1...P_k \in \mathcal{S}$ and $P^\prime = P_1^\prime...P_k^\prime \in \mathcal{S}$, then $P P^\prime = P_1 P_1^\prime...P_k P_k^\prime = P_1^{\prime\prime}...P_k^{\prime\prime}$ is also in $\mathcal{S}$,
    \item The identity operation is in $\mathcal{S}$ as it leaves a unchanged,
    \item If $P = P_1...P_k \in \mathcal{S}$ then $P^{-1} = P_1^{-1}...P_k^{-1}$ is also in $\mathcal{S}$.
\end{enumerate}
The cardinality $|\mathcal{S}|$ of $\mathcal{S}$ is the product $\prod_{i=1}^k s_i$ of the sizes of each of the degenerate sub-parts of $a$.

A left coset of $\mathcal{S}$ in $\mathcal{G}$  is the set of permutations $P Q$ where $Q$ varies over the elements of $\mathcal{S}$, labelled as $P\mathcal{S}$; the size of the coset is $|\mathcal{S}|$. As $Pa = P P_0 a$ for any $P_0 \in \mathcal{S}$, the entire left coset $P\mathcal{S}$ corresponds to a single operation on $a$. The set of distinct left cosets of $\mathcal{S}$ thus corresponds to an \textit{effective} set of distinct permutations of $a$, indeed it forms a group: Defining the product rule $P_1 \mathcal{S} \cdot P_2 \mathcal{S} = (P_1P_2) \mathcal{S}$, this is a group of cardinality $|\mathcal{G}|/|\mathcal{S}|$.

We can now complete the proof regarding the vertices. As every permutation of $a$ is a vertex of $\mathcal{P}$, the number of vertices is the number of distinct permutations of $a$ given its degeneracies, this is a standard quantity in combinatorial theory,
\begin{align}
    N &= \frac{n!}{s_1!...s_k!} = \frac{|\mathcal{G}|}{|\mathcal{S}|},
\end{align}
which is precisely the number of elements in the factor group. As each vertex must correspond to at least one of the cosets, and no two vertices can correspond to the same coset, it must be that the vertices are in one-to-one correspondence with the cosets.

An alternate way of stating the lemma is thus: given a list $a$ with degeneracies, such that $\mathcal{S}$ is the subgroup of permutations under which it is invariant, one can study its permutation polytope by using the \textit{effective permutation group} $\mathcal{G} / \mathcal{S}$ rather than the whole permutation group $\mathcal{G}$.
\end{proof}

\subsection{The facial structure of the population polytope}\label{app:barvinokextension}

In Barvinok\cite{Barvinok-2002}, Chapter VI Proposition 2.2, the complete description of the faces of the population polytope is stated and proven, albeit for a population vector with distinct values. In this section we extend the statement and proof to the case when some populations may be equal.

To aid the reader that may wish to refer to the proof in \cite{Barvinok-2002} prior to the following one, and since our proof uses the same technique, we follow the notation in \cite{Barvinok-2002} for the rest of this section rather than that of the rest of the paper.

Before the main theorem, we prove a lemma on cyclic permutations that we will find useful.

\begin{lemma}\label{lem:disjointcycle}

Every cyclic permutation (or cycle for short) of a list of elements can be expressed as a product of pairwise disjoint cycles on the same list such that each cycle has distinct elements.

\end{lemma}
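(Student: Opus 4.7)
The plan is to proceed by strong induction on the length $k$ of the cycle. I would first fix the interpretation: a cycle $\pi = (i_1, i_2, \ldots, i_k)$ acts on a list $(v_1, \ldots, v_n)$ by cyclically permuting the entries at positions $i_1, \ldots, i_k$, and a cycle is said to ``have distinct elements'' if the values $v_{i_1}, \ldots, v_{i_k}$ are pairwise distinct. Two permutations are said to agree on the list if they produce the same rearranged list when applied to it. The base cases $k=1$ (identity) and $k=2$ (either the two values are distinct, in which case the cycle itself is already the decomposition, or they are equal, in which case $\pi$ acts as the identity on the list and we take the empty product) are immediate.

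For the inductive step, consider a cycle $\pi = (i_1, \ldots, i_k)$ with $k \geq 3$. If the values $v_{i_1}, \ldots, v_{i_k}$ are pairwise distinct, $\pi$ itself is the desired decomposition. Otherwise, pick indices $1 \leq a < b \leq k$ with $v_{i_a} = v_{i_b}$, and set
\[ \sigma_1 = (i_1, \ldots, i_a, i_{b+1}, \ldots, i_k), \qquad \sigma_2 = (i_{a+1}, \ldots, i_b). \]
These two cycles are disjoint by construction, and each has length strictly less than $k$ (with the convention that a length-$1$ cycle is the identity, which covers the degenerate cases $b=a+1$ and $b=k$). I would then verify that $\sigma_1\sigma_2$ agrees with $\pi$ on the list by computing, for each position $i_j$, the value deposited there under both rules: the two prescriptions coincide everywhere except at $i_{a+1}$ and $i_{b+1}$, where $\pi$ places $v_{i_a}$ and $v_{i_b}$ respectively while $\sigma_1\sigma_2$ places $v_{i_b}$ and $v_{i_a}$, and these agree precisely because $v_{i_a} = v_{i_b}$.

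Applying the induction hypothesis to each of $\sigma_1$ and $\sigma_2$ yields a decomposition of each into pairwise disjoint cycles with distinct elements that agrees with it on the list. Since the index sets of $\sigma_1$ and $\sigma_2$ are themselves disjoint, concatenating the two sub-decompositions gives a decomposition of $\pi$ into pairwise disjoint cycles with distinct elements that agrees with $\pi$ on the list, completing the induction.

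The only real difficulty is notational bookkeeping at the boundary cases ($a = 1$, $b = k$, or $b = a+1$), and a careful distinction between the action of a permutation on \emph{positions} and its induced effect on \emph{values}. Once the indexing is handled cleanly, the verification that $\pi$ and $\sigma_1\sigma_2$ agree on the list is a short per-position check.
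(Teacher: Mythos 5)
Your proof is correct and follows essentially the same route as the paper's: you cut the cycle at two positions carrying equal values into two disjoint shorter cycles, observe that the induced action on the list is unchanged precisely because those two values coincide, and then recurse. The only cosmetic differences are that the paper normalises the cut point to the start of the cycle by a without-loss-of-generality relabelling and terminates by noting the number of repeated values within cycles strictly decreases, whereas you cut at general positions $i_a, i_b$ and organise the recursion as a strong induction on cycle length.
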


\begin{proof}

We prove the statement by construction. Let the list have $n$ elements and be denoted by $a = \left(\alpha_1,...,\alpha_n\right)$. Express the cycle as the list of positions $(i_1,...,i_n)$ where $i_1 \to i_n$, $i_2 \to i_1$ and so on. If the list has distinct elements, then we already satisfy the statement of the lemma. Consider that there is at least one repeated element. We are free to pick the starting position of the cycle without loss of generality, so we pick it to be one of the degenerate elements. Thus $\alpha_{i_1} = \alpha_{i_r}$ for some $r$. Then the action of the original cycle is equivalent to the following product of two cycles $(i_1,...,i_{r-1}) * (i_r,...,i_n)$.

The above split decreases the total number of repeated values within cycles by at least $1$, as $\alpha_{i_1}$ and $\alpha_{i_r}$ are now in different cycles. If there are still repetitions one can repeat the above until every cycle has distinct elements.  

\end{proof}

\begin{theorem}\label{thm:Barvinokextension}
Let $a = \left( \alpha_1, ... , \alpha_n \right)$ be a point such that $\alpha_1 \geq \alpha_2 \geq ... \geq \alpha_n$ and let $P = P(a)$ be the corresponding permutation polytope. For a number $k \leq n$, let $\mathcal{S}$ be a partition of the set $\{1,...,n\}$ into k pairwise disjoint non-empty subsets $S_1,...,S_k$. Let $s_i = |S_i|$ be the cardinality of the $i$-th subset for $i=1,...,k$ and let $t_i = \sum_{j=1}^i s_j$ for $i=1,...,k$. 

Define $A_1,...,A_k$ as follows: the $A_i$ form a partition $\mathcal{A}$ of the set $\{\alpha_1,...,\alpha_n\}$ into $k$ pairwise disjoint subsets such that
\begin{enumerate}
    \item $|A_i| = s_i$,
    \item if $\alpha_r \in A_i$, $\alpha_s \in A_j$ and $i < j$, then $\alpha_r \geq \alpha_s$.
\end{enumerate}
Equivalently: $A_1$ is comprised of $s_1$ number of $\alpha$'s with the largest values, $A_2$ with $s_2$ number of remaining $\alpha$'s with the largest remaining values, and so on.

Let $F_\mathcal{S,A}$ be the convex hull of all of the points $b = \sigma(a)$, $b = \left( \beta_1,...,\beta_n \right)$ such that $\{\beta_j:j \in S_i\} = A_i$ for all $i = 1,...,k$. In words: we permute the elements of $A_1$ in the coordinate positions prescribed by $S_1 \subset \{1,...,n\}$, the elements of $A_2$ in the coordinate positions prescribed by $S_2$ and so forth, and take the convex full $F_\mathcal{S,A}$ of all resulting points.

Then $F_\mathcal{S,A}$ is a face of $P$, and for every face $F$ of $P$ we have $F = F_\mathcal{S,A}$ for some choice of partitions $\mathcal{S}$ and $\mathcal{A}$.

\bigskip

Furthermore, we label an $A_i$ as ``non-degenerate'' if there are at least two distinct values within it, i.e. a pair $\alpha_r, \alpha_s$ s.t. $\alpha_r \neq \alpha_s$. Let $k^\prime$ be the number of such non-degenerate subsets and $n^\prime$ be the total number of elements within non-degenerate subsets. Then $ \text{dim} \; F_{\mathcal{S},\mathcal{A}} = n^\prime - k^\prime$.

\end{theorem}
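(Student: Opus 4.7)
My plan is to adapt Barvinok's argument (\cite{Barvinok-2002}, Chapter VI, Proposition 2.2) by carefully treating degenerate entries in $a$. The essential tool is the standard characterization of faces as sets of maximizers of a linear functional, combined with the rearrangement inequality, which governs how $\sum_j c_j x_j$ is maximized over permutations of $a$. The novelty compared to Barvinok is that both the $c_j$ and the $\alpha_r$ may have ties, and these two kinds of degeneracy must be handled separately — the ties in the $c_j$ produce the blocks $S_i$, and the ties in the $\alpha_r$ control the dimension inside each block.

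For the first claim (that each $F_{\mathcal{S},\mathcal{A}}$ is a face): I would exhibit an explicit supporting functional $\ell(x) = \sum_{i=1}^k c_i \sum_{j \in S_i} x_j$ with strictly decreasing coefficients $c_1 > c_2 > \cdots > c_k$. The rearrangement inequality then shows that $\ell$ attains its maximum over $P$ precisely at those permutations $b = \sigma(a)$ that place the $s_1$ largest values of $a$ in positions $S_1$, the next $s_2$ largest in $S_2$, and so on, i.e. those satisfying $\{\beta_j : j \in S_i\} = A_i$ for each $i$. Taking the convex hull of these maximizers yields exactly $F_{\mathcal{S},\mathcal{A}}$, which is therefore a face.

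For the converse (every face has this form): given a face $F$, I would pick a supporting functional $\ell(x) = \sum c_j x_j$ achieving its maximum on $F$. Re-indexing so that $c_1 \geq c_2 \geq \cdots \geq c_n$ and grouping positions sharing a common value of $c_j$ yields a partition $\mathcal{S}$ whose blocks $S_i$ are the level sets of $c$. Applying the rearrangement inequality once more identifies the maximizers of $\ell$ over $P$ as exactly the vertices of $F_{\mathcal{S},\mathcal{A}}$ for the $\mathcal{A}$ induced by $\mathcal{S}$, from which $F = F_{\mathcal{S},\mathcal{A}}$ by taking convex hulls.

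For the dimension formula, I would observe that $F_{\mathcal{S},\mathcal{A}}$ is, by construction, the direct product $\prod_{i=1}^k P(A_i)$, where each factor $P(A_i)$ is the population polytope of the multiset $A_i$ living in the coordinate subspace indexed by $S_i$. It is then a standard affine-geometry computation that $\dim P(A_i) = s_i - 1$ when $A_i$ contains at least two distinct values, and $\dim P(A_i) = 0$ when $A_i$ is degenerate (a single point). Summing over $i$ gives $\dim F_{\mathcal{S},\mathcal{A}} = \sum_{A_i \text{ non-deg.}} (s_i - 1) = n' - k'$. The main obstacle I expect is the careful bookkeeping in the converse direction: when $a$ has repeated values, distinct elements of the permutation group $\mathcal{G}$ may produce the same vertex of $P$, so the equality between the convex hull of maximizers of $\ell$ and $F_{\mathcal{S},\mathcal{A}}$ should be argued using the factor-group description of vertices from Lemma~\ref{lem:verticesfactorgroup}, rather than a naive vertex count as in the non-degenerate case.
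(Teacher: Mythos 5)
Your proposal is correct, but it takes a genuinely different route from the paper's proof for the central classification step. The paper follows Barvinok's original argument closely: it first treats faces containing the ordered point $a$, derives the ordering property of $\mathcal{A}$ by a two-element swap, and then determines which permutations lie in the face via the disjoint-cycle decomposition (this is where the extra Lemma~\ref{lem:disjointcycle}, splitting cycles so that each has distinct entries, is needed to handle degeneracies in $a$), closing the argument with the sandwich $B \subset C \subset D \subset B$ and transporting to arbitrary faces by the permutation symmetry of $P$. You instead characterize, for an arbitrary supporting functional $\ell(x)=\sum_j c_j x_j$, the maximizing vertices directly through the equality case of the rearrangement inequality: a permutation $b$ of $a$ maximizes $\ell$ iff whenever $c_j > c_{j'}$ one has $\beta_j \geq \beta_{j'}$, i.e.\ iff $\{\beta_j : j\in S_i\}=A_i$ as multisets, where the $S_i$ are the level sets of $c$ ordered by decreasing value. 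This yields both directions at once (your explicit functional with $c_1>\cdots>c_k$ for the forward direction, and the level-set partition of an arbitrary exposing functional for the converse), bypassing the cycle machinery and the reduction to faces containing $a$; what the paper's route buys is a minimal, transparent modification of the cited reference, while yours is shorter and treats ties in $a$ and in $c$ uniformly. Two points you should make explicit when writing it up: (i) the equality case of the rearrangement inequality in the degenerate setting --- when values of $a$ tie across the boundary between $A_i$ and $A_{i+1}$, a maximizer may place either copy in either block, but since the tied values are equal the multiset condition $\{\beta_j : j\in S_i\}=A_i$ is still satisfied, so the face is exactly $F_{\mathcal{S},\mathcal{A}}$; and (ii) the standard facts you rely on, namely that every face of a polytope is exposed and is the convex hull of the vertices it contains, and that every permutation of $a$ is a vertex of $P$ (Lemma~\ref{lem:permutationisvertex}). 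Your dimension argument via the direct product $\prod_i P(A_i)$ with $\dim P(A_i)=s_i-1$ for non-degenerate blocks and $0$ otherwise is identical to the paper's.
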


\begin{remark}

The statement of the theorem follows \cite{Barvinok-2002} closely, with the salient difference being the partition $\mathcal{A}$. Note that if $a$ is composed of distinct values then $\mathcal{A}$ simplifies to a sequential partition as in \cite{Barvinok-2002}, i.e. $A_1 = \{\alpha_j \; : \; 1\leq j \leq s_1\}$, and $A_i = \{\alpha_j \; : \; t_{i-1} < j \leq t_i\}$ for $i = 2,...,k$. Thus $\mathcal{A}$ is completely determined by the partition $\mathcal{S}$.

For us, the degeneracies within $a$ provide a freedom in choosing $\mathcal{A}$, which is why the resulting face is fixed by the pair $\{\mathcal{S},\mathcal{A}\}$ and is denoted $F_{\mathcal{S},\mathcal{A}}$ rather than simply $F_\mathcal{S}$ as in \cite{Barvinok-2002}.

The other difference is in the dimension of the face. In the original, every subset $A_i$ of cardinality $s_i \geq 2$ typically contributes $s_i - 1$ to the dimension, the sum of which is $n-k$. However if $A_i$ is entirely degenerate, i.e. has only one repeated value, then it contributes $0$ to the dimension of the face, thus the sum is taken over only non-degenerate $A_i$, leading to $n^\prime - k^\prime$.

\end{remark}

\begin{proof}

We begin by describing the faces $F$ of $P$ that contain $a$. Take any such face $F$. Then there exists $c = (\gamma_1,...,\gamma_n)$ and $\lambda$ such that $c \cdot x \leq \lambda$ for all $x \in P$ and $c \cdot x = \lambda$ if and only if $x \in F$. Since $a \in F$, $c \cdot a = \lambda$. Let the number of distinct values of the $\gamma_i$ be $k$, we label them in decreasing order as $\{x_1,...,x_k\}$. Then we form partitions $\mathcal{S}$ of $\{1,...,n\}$ and $\mathcal{A}$ of $\{\alpha_1,...,\alpha_n\}$ into $k$ pairwise disjoint non-empty sets corresponding to each $x_i$:
\begin{align}\label{eq:indexsets}
    S_i &= \left\{ j \; : \; \gamma_j = x_i \right\}, \\
    A_i &= \left\{ \alpha_j \; : \; \gamma_j = x_i \right\}.
\end{align}

First we prove that $\mathcal{A}$ has the property stated in the theorem, i.e if $\alpha_r \in A_i,\, \alpha_s \in A_j$ and $i<j$ then $\alpha_r \geq \alpha_s$. Consider that is not the case, i.e. there exists $i,\,j,\,r, \text{and}\, s$ such that $\alpha_r \in A_i,\, \alpha_s \in A_j$,\, \text{and}\, $i<j$ but $\alpha_r < \alpha_s$. Since $i<j$ one has $\gamma_i > \gamma_j$, and now the permutation $\sigma(a)$ of $a$ which swaps $\alpha_r$ and $\alpha_s$ violates the inquality of the face,
\begin{align}
    c. \sigma(a) &= c \cdot a + \gamma_j \alpha_r + \gamma_i \alpha_s - \gamma_i \alpha_r - \gamma_j \alpha_s = \lambda + \left( \gamma_i - \gamma_j \right) \left( \alpha_s - \alpha_r \right) > \lambda,
\end{align}
which is impossible.

\bigskip

Next, we determine which permutations $\sigma(a)$ belong to $F$, the reason being that any face of a polytope is the convex hull of the vertices of the polytope that are in the face, and the vertices of $P(a)$ are permutations of $a$.

Any permutation $\sigma$ of $n$ elements can always be expressed in a unique manner as a product of disjoint cycles (up to the ordering between cycles, which is irrelevant for this proof). Furthermore, from Lemma \ref{lem:disjointcycle} this can be further reduced to a product of disjoint cycles in such a way that each cycle contains distinct elements. We label each of the non-degenerate disjoint cycles in $\sigma(a)$ as $\phi_i(a)$, and the list of elements of $a$ that are within the $i^{th}$ cycle as $a_i$.
Note that each cycle is a valid permutation on its own. We  express the difference $c \cdot \sigma(a) - c\cdot a$ w.r.t. the cycles, the sum must be zero as $\sigma(a) \in F$:
\begin{align}
    0 = c \cdot \sigma(a) - c \cdot a &= \sum_{i=1}^N c \cdot \phi_i(a) - c \cdot a.
\end{align}
However the difference due to any one cycle must be non-positive, if not, then $c \cdot \phi_i(a) > \lambda$, which is impossible as $\phi_i(a) \in P$. Therefore they are all $0$. But since each cycle involves distinct $\alpha$'s and the $\gamma$'s are ordered non-increasing w.r.t. decreasing $\alpha$'s, the value of $c \cdot \phi_i(a)$ can only be smaller than or equal to $c \cdot a$, and only equal in the case that all $\gamma$'s in the $i^{th}$ cycle are equal.

We conclude that $\sigma(a) \in F$ only if it can be expressed as a product of disjoint cycles, each cycle involving distinct $\alpha$'s, and for which the values of $\gamma_i$ corresponding to the elements of a single cycle are all equal. Let us label the set of $\sigma(a)$ that are in $F$ as $B$ and the set of $\sigma(a)$ that can be expressed as the product of cycles just described above as $C$. Then we have proven that $B \subset C$.

Consider next a set $D$ that is larger than $C$: the set of all $\sigma(a)$ that are products of pairwise disjoint permutations, each acting on a subset of $a$ within which all of the corresponding $\gamma$-values are equal, i.e. each one of the $A_i$. This set is larger than $C$ as it allows arbitrary permutations rather than only cycles, and on larger sets (the values of $\gamma_i$ have to be equal, but the $\alpha$'s do not have to be distinct).

It is straightforward to show that any such permutation is also in $F$: as the elements of $a$ are only moved around within subsets where $\gamma$ does not change, the scalar product is preserved: $c \cdot \sigma(a) = c \cdot a$. Thus $D \subset B$.

But now we have $B \subset C \subset D \subset B$, implying that $B = C = D$. Thus the face $F$ is $F_{\mathcal{S},\mathcal{A}}$, where the partitions $\mathcal{S},\mathcal{A}$ are derived from the $\gamma$ as in \eqref{eq:indexsets}.

Vice versa, every vector $c = \left( \gamma_1,...,\gamma_n \right)$ that is ordered in a way to lead to a partition $\mathcal{A}$ with the stated property gives rise to a face $F_{\mathcal{S},\mathcal{A}}$ where $\mathcal{S} = S_1 \cup S_2 \cup ... \cup S_k$ is the partition of $\{1,...,n\}$ into subintervals of equal $\gamma$'s.

The remainder of the proof goes as in \cite{Barvinok-2002}. We have described the faces $F$ of $P$ containing $a$. Let $\sigma$ be a permutation of $n$ elements, then the action $x \mapsto \sigma(x)$ is an orthogonal transformation of $\mathbb{R}^n$. Hence we conclude that $F$ is a face of $P$ if and only if for some permutation $\sigma$, the set $\sigma(F)$ is a face of $P$ containing $a$. If $\sigma(F) = F_{\mathcal{S},\mathcal{A}}$ for $\mathcal{S} = S_1 \cup S_2 \cup ... \cup S_k$, then $F = F_{\mathcal{S}^\prime,\mathcal{A}}$, where $\mathcal{S}^\prime = \sigma^{-1}(S_1) \cup ... \cup \sigma^{-1} (S_k)$.

\bigskip

We now calculate the dimension of $F_{\mathcal{S},\mathcal{A}}$. Labelling the subset of elements in each $A_i$ as $a_i \in \mathbb{R}^{s_i}$, where $s_i$ is the cardinality of $A_i$, the face $F$ is the direct product (upto ordering of the $A_i$)
\begin{align}
    F = P(a_1) \times ... \times P(a_k)
\end{align}
of the permutation polytopes $P(a_i) \subset \mathbb{R}^{s_i}$, and the dimension of the face is thus the sum $\sum_{i=1}^k \text{dim}\; P(a_i)$. If $A_i$ has only one repeated value, then $\text{dim} \; P(a_i) = 0$, else $\text{dim} \; P(a_i) = s_i - 1$. We therefore sum only over non-degenerate $A_i$ to get
\begin{align}
    \text{dim} \; F_{\mathcal{S},\mathcal{A}} &= \sum_{i=1, A_i \; non-deg}^{k} s_i - 1 = n^\prime - k^\prime,
\end{align}
where $n^\prime$ is the total number of elements in non-degenerate $A_i$ and $k^\prime$ the total number of non-degenerate $A_i$.

\bigskip

\end{proof}


\begin{thmcorollary}[Edges of a population polytope]
Let $a = \left( \alpha_1,...,\alpha_n \right)$ be a point such that $\alpha_1 \geq \alpha_2 ... \geq \alpha_n$ and let $P = P(a)$ be the corresponding population polytope. Then the edges --- faces of dimension $1$ --- of $P(a)$ are formed by exactly permutations of $a$, related to each other by an adjacently-valued swap, i.e. a 2-cycle of two elements $\alpha_i > \alpha_j$ such that there does not exist an element in between --- $ \nexists r \; s.t. \; \alpha_i > \alpha_r > \alpha_j$.

\end{thmcorollary}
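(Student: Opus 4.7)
The plan is to read off the edges directly from the facial classification of Theorem~\ref{thm:Barvinokextension}. The main step is identifying which pairs $(\mathcal{S},\mathcal{A})$ produce a face $F_{\mathcal{S},\mathcal{A}}$ of dimension exactly $1$, and then reading off the corresponding vertices.

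First I would use the dimension formula $\dim F_{\mathcal{S},\mathcal{A}} = n' - k'$, where $n'$ is the total number of elements in non-degenerate subsets and $k'$ is the number of non-degenerate subsets. By definition a non-degenerate $A_i$ contains at least two distinct values and so has cardinality $s_i \geq 2$, contributing at least $s_i - 1 \geq 1$ to the dimension. Hence $\dim F_{\mathcal{S},\mathcal{A}} = 1$ forces $k' = 1$, with the unique non-degenerate subset $A_i$ having cardinality $s_i = 2$ and containing exactly two distinct values $\alpha_r > \alpha_s$.

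Next I would show that $\alpha_r$ and $\alpha_s$ must be adjacent-valued. This is the step that actually uses the ordering property of $\mathcal{A}$ from Theorem~\ref{thm:Barvinokextension}, and will be the most delicate part. Suppose some element $\alpha_t$ of $a$ satisfies $\alpha_r > \alpha_t > \alpha_s$. It must lie in some $A_j$. If $j < i$, then by the ordering property every element of $A_j$ is $\geq \alpha_r$, contradicting $\alpha_t < \alpha_r$. If $j > i$, symmetrically every element of $A_j$ is $\leq \alpha_s$, contradicting $\alpha_t > \alpha_s$. If $j = i$, then $\alpha_t$ would have to equal one of the two distinct values of $A_i$, contradicting the strict inequalities. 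Hence no such $\alpha_t$ exists, so $\alpha_r, \alpha_s$ are adjacent-valued. Moreover, the face $F_{\mathcal{S},\mathcal{A}}$ is the convex hull of all permutations respecting the partition: within the degenerate $A_j$'s all orderings coincide, while within $A_i$ there are exactly two placements of $(\alpha_r,\alpha_s)$ over the two positions of $S_i$. So $F_{\mathcal{S},\mathcal{A}}$ is the segment joining two permutations of $a$ differing by an av-swap.

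Finally I would prove the converse: given any av-swap between positions $p, q$ with $a_p = \alpha_r$, $a_q = \alpha_s$ adjacent-valued (say $\alpha_r > \alpha_s$), construct a pair $(\mathcal{S},\mathcal{A})$ producing a $1$-dimensional face whose two vertices are exactly these two permutations. Set $S_i = \{p, q\}$ and $A_i = \{\alpha_r, \alpha_s\}$. For the remaining positions, group them into one subset per distinct value of $a$ (these are all singletons of repeated value in the multiset sense), and order them: subsets with value $> \alpha_r$ come before $S_i$ in decreasing order of value, subsets with value $< \alpha_s$ come after $S_i$ in decreasing order of value. The adjacent-valued hypothesis guarantees no remaining value lies between $\alpha_r$ and $\alpha_s$, so the required ordering property of $\mathcal{A}$ holds. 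All subsets but $A_i$ are degenerate, so $n'-k' = 2-1 = 1$, and Theorem~\ref{thm:Barvinokextension} delivers this as an edge. Combining both directions completes the proof.
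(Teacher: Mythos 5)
Your proof of the stated direction --- using the dimension formula $n'-k'$ to force $k'=1$, $n'=2$, hence a single non-degenerate two-element block, and then deriving adjacency of its two values from the ordering property of the partition $\mathcal{A}$ by contradiction --- is correct and essentially identical to the paper's argument. The converse you add (every av-swap yields an edge) goes beyond what the paper proves and is not needed for the statement as used; note only that your ordering prescription there omits remaining positions whose value equals $\alpha_r$ or $\alpha_s$, which must be placed before, respectively after, $S_i$ for the ordering property to hold.
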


\begin{proof}

From Theorem \ref{thm:Barvinokextension}, a face of $P(a)$ is formed from permutations on a partition of disjoint subsets of $a$, and the dimension of said face is $n^\prime - k^\prime$, where $k^\prime$ is the total number of ``non-degenerate'' subsets, i.e. with at least two distinct values, and $n^\prime$ the total number of elements within such subsets.

However, this implies each non-degenerate subset must have at least two elements, which in turn implies that $n^\prime \geq 2 k^\prime$, or $n^\prime - k^\prime \geq k^\prime$. But an edge is a face of dimension $1$, therefore the only possibility is that $k^\prime = 1$ and $n^\prime = 2$; in words: there is only a single non-degenerate subset in the partition of $a$ that defines the face, and it contains only two elements.

There are thus only two permutations of $a$ that appear on such a face, related by the swap of the two elements in the non-degenerate subset.

Label this subset as $A_i = \{\alpha_r,\alpha_s\}$ where w.l.o.g. $\alpha_r > \alpha_s$. We prove that there does not exist $t$ such that $\alpha_r > \alpha_t > \alpha_s$ by contradiction of the defining property of the partition that defines the face, see Theorem \ref{thm:Barvinokextension}. Let $\alpha_t \in A_j$. Then either $i < j$ in which case $\alpha_t > \alpha_s$ violates the property, or $i > j$ in which case $\alpha_r > \alpha_t$ violates it. Thus $\alpha_r$ and $\alpha_s$ must be adjacently-valued.

\end{proof}

\subsection{The facial structure of a direct product of population polytopes}\label{app:facesdirectproduct}

Finally, we include some lemmas that extend the above result to the case of a direct product of permutation polytopes.

\begin{lemma}

The faces of a direct product of polytopes are direct products of the faces of the individual polytopes. More precisely, let $\{P_1,...,P_k\}$ be a set of $k$ polytopes, where each $P_i \in \mathbb{R}^{s_i}$, and let $P = P_1 \times ... \times P_k$ be the direct product, itself a polytope. Then $F$ is a face of $P$ if and only if $F = F_1 \times ... \times F_k$, where each $F_i$ is a face of $P_i$. Furthermore, $\text{dim} \; F = \sum_{i=1}^k \text{dim} \; F_i$.

\end{lemma}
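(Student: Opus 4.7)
The plan is to use the standard characterization of faces via supporting linear functionals: a face of a polytope $Q$ is either empty, or equal to $Q$ itself, or equal to a set of the form $\{x \in Q : c \cdot x = \max_{y \in Q} c \cdot y\}$ for some non-zero linear functional $c$; conversely every such set is a face (this is the content of the foundational results in \cite{Brondsted-1983, Ziegler-1995} already cited in the paper). I would also use the easy fact that $P = P_1 \times \dots \times P_k$ is itself a polytope --- it is convex and bounded, and one can verify directly that it is the convex hull of the finitely many tuples $(v^{(1)}_{j_1},\dots,v^{(k)}_{j_k})$ obtained by picking one vertex in each factor --- so the notion of face on $P$ is well defined.

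For the ``if'' direction, I would take faces $F_i \subset P_i$ supported by functionals $c_i$ with maximum values $\lambda_i = \max_{y \in P_i} c_i \cdot y$, and set $c = (c_1,\dots,c_k)$ on the ambient space of $P$. The key observation is that linear maximization factorizes across a direct product:
\[ \max_{x \in P} c \cdot x \;=\; \max_{x_i \in P_i,\,\forall i}\; \sum_{i=1}^k c_i \cdot x_i \;=\; \sum_{i=1}^k \lambda_i, \]
and the set of maximizers is exactly the tuples $(x_1,\dots,x_k)$ with $x_i \in F_i$ for every $i$, i.e.\ $F_1 \times \dots \times F_k$; this is therefore a face of $P$.

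For the ``only if'' direction, I would start from an arbitrary face $F$ of $P$, supported by some functional $c$, and decompose $c = (c_1,\dots,c_k)$ along the direct product structure. The same factorization argument shows that a point $(x_1,\dots,x_k) \in P$ attains $\max c$ if and only if each $x_i$ attains $\max c_i$ on $P_i$. Defining $F_i \subset P_i$ as the face supported by $c_i$ (which is a proper face if $c_i \neq 0$, or all of $P_i$ if $c_i = 0$), one obtains $F = F_1 \times \dots \times F_k$.

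For the dimension claim, I would use that the affine hull of a direct product is the direct product of affine hulls --- any affine combination of tuples is the tuple of component-wise affine combinations, and conversely --- so that $\dim F = \dim \operatorname{aff}(F_1 \times \dots \times F_k) = \sum_i \dim F_i$. The main technical nuisance, and thus what to be careful with, is bookkeeping at the degenerate cases: the empty face arises exactly when some $F_i$ is empty, and the improper face $F = P$ corresponds to taking every $c_i = 0$ so that $F_i = P_i$. Beyond this, the argument is simply the factorization of linear optimization over a direct product and presents no real obstacle.
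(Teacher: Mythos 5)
Your proof is correct and takes essentially the same approach as the paper's: both decompose the supporting linear functional $c$ componentwise along the direct product structure, observe that linear optimization factorizes over a product so that the set of maximizers is exactly $F_1\times\dots\times F_k$, and handle the converse symmetrically. The paper phrases the ``maximization factorizes'' step as a small contradiction argument (if some $c_i\cdot x_i < \lambda_i$, swap in a better $x_i'$ to exceed $\lambda$), while you simply invoke $\max_{x\in P}c\cdot x = \sum_i\max_{x_i\in P_i}c_i\cdot x_i$ directly; these are the same observation. Your explicit handling of the degenerate cases ($c_i=0$, empty face) and your justification of the dimension formula via affine hulls are slightly more careful than the paper's terse treatment, but there is no substantive difference in method.
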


\begin{lemcorollary}

The edges of a direct product of permutation polytopes is given by the direct product of an edge of one of the polytopes with vertices from the rest. More precisely, if $F$ is an edge of $P$ then $F = F_1 \times ... \times F_k$ where all but one of the $F_i$ --- label it $F_j$are vertices of their respective polytopes, and $F_j$ is an edge of $P_j$.

\end{lemcorollary}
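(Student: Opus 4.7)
The plan is to apply the preceding lemma essentially verbatim. That lemma already asserts that every face $F$ of the direct product $P = P_1 \times \cdots \times P_k$ has the form $F = F_1 \times \cdots \times F_k$ with each $F_i$ a face of $P_i$, and — crucially — that dimensions add: $\dim F = \sum_{i=1}^{k} \dim F_i$. Since an edge is by definition a face of dimension $1$, the entire corollary reduces to an arithmetic observation about that sum.

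Concretely, I would begin by letting $F$ be an arbitrary edge of $P$ and invoking the lemma to write $F = F_1 \times \cdots \times F_k$ with each $F_i$ a face of $P_i$ of some non-negative integer dimension $d_i = \dim F_i$. The edge condition gives $\sum_i d_i = 1$. Because each $d_i \in \mathbb{Z}_{\geq 0}$, the only solutions are those with exactly one index $j$ satisfying $d_j = 1$ and all other $d_i = 0$. Interpreting dimensions, $d_i = 0$ means $F_i$ is a single point, i.e.\ a vertex of $P_i$, while $d_j = 1$ means $F_j$ is an edge of $P_j$. This is precisely the claimed structure.

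For the converse direction — namely that every such product is in fact an edge of $P$ — I would again invoke the lemma: a product of faces is a face, and its dimension is the sum of the factor dimensions, which equals $1$ here, so the product is indeed a $1$-dimensional face of $P$, i.e.\ an edge. This shows the characterisation is tight.

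I do not anticipate a genuine obstacle. The only subtlety worth flagging is the convention that a vertex counts as a $0$-dimensional face, which is the standard convention and is implicitly used by the preceding lemma. If one wished to be extra careful, one could verify this convention is consistent with the definition of ``face'' used earlier in Section~\ref{proof:edgesgeneric}, but this verification is immediate from the definition of the affine dimension of a point set.
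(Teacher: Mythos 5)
Your proposal is correct and matches the paper's own argument: the paper likewise proves the corollary by combining the lemma's product decomposition with the additivity of dimensions, concluding that a dimension sum of $1$ over non-negative integers forces exactly one factor to be an edge and the rest to be vertices. The converse direction you include is a harmless (and valid) addition covered by the same lemma.
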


\begin{proof}

We prove the lemma and corollary together, beginning with the lemma. Let $n = \sum_{i=1}^k s_i$ denote the dimension of the affine space of $P$, i.e. $P \in \mathbb{R}^n$. If $F$ is a face of $P$, then there exists a vector $c \in \mathbb{R}^n$ and $\lambda \in \mathbb{R}$ such that for all $x \in P$, $c \cdot x \leq \lambda$, with equality if and only if $x \in F$. Express $c = c_1 \oplus ... \oplus c_k$ w.r.t. the same partitions as the polytope, $|c_i| = |s_i|$. Also express a general point $x = x_1 \oplus ... \oplus x_k$ in the same manner. We define
\begin{align}
    \lambda_i = \max_{x_i \in P_i} c_i \cdot x_i.
\end{align}
We proceed to prove that $c\cdot x = \lambda$ only if $c_i \cdot x_i = \lambda_i$ for $i \in \{1,...,k\}$. Consider that this is not the case, i.e. there exists $i$ such that $c_i \cdot x_i = \eta < \lambda_i$. But we know that there exists an $x_i^\prime \in P_i$ such that $c \cdot x_i = \lambda_i$, and if $x = x_1 \oplus ... \oplus x_i \oplus ... \oplus x_k \in P$, then so is the point $x^\prime = x_1 \oplus ... \oplus x_i^\prime ... \oplus x_k$, which differs only in the $i$-th partition. But $c \cdot x^\prime = \lambda + (\lambda_i - \eta) > \lambda$ which is impossible. Thus it must be that $c_i \cdot x_i = \lambda_i$. But this implies that $\lambda = \sum_{i=1}^k \lambda_i$, which in turn trivially implies the converse --- if $c_i \cdot x_i = \lambda_i$ for $i \in \{1,...,k\}$, then $c \cdot x = \lambda$.

Thus given $F$, we can define faces for the individual polytopes $F_i$ defined by the vector inequality $c_i \cdot x_i \leq \lambda$ with equality if and only if $x_i \in F_i$. We have thus proven that $x \in F$ in and only if $x_i \in F_i$, thus $F = F_1 \times ... \times F_k$.

The proof of the converse is analogous: given faces $F_i$ for the individual polytopes, each defined by the pair $c_i, \lambda_i$, we obtain a face $F$ of the direct product from the pair $c_1 \oplus ... \oplus c_k,\, \lambda_1 + ... + \lambda_k$.

The dimension follows from the face that faces are themselves polytopes, so that $F$ is a direct product of polytopes, so that
\begin{align}
    \text{dim} \; F &= \sum_{i=1}^k \text{dim} \; F_i.
\end{align}

Thus follows the corollary: if $F$ is an edge, then $\text{dim} \; F = 1$, implying that in the sum above there must be a single face of dimension $1$ (an edge) and the rest of dimension $0$ (vertices).

\end{proof}

\section{Proof of main result}\label{app:proof}

\subsection{proof of Theorem~\ref{thm:increasingslope}} \label{app:increasingslope}
If the point ${\bf p}$ we picked in $\mathcal{P}$ happens to have a target value that is greater than that of $V$, i.e., if $\mathcal{A}({\bf p}) > \mathcal{A}(V)$, then we first prove our assertion for when $\mathcal{A}(V) \neq \alpha_{\min}$. To do so, take the set of all vertices $Z_i$ such that $Z_i-V$ is a vertex vector of an edge at $V$. By Theorem~\ref{thm:edgesadjacentvalued} we know that those are the set of vertices connected to $V$ by an av-swap. Next split the index set of $\{Z_i\}_i$ into three sets, depending on how the target value changes along the vertex vector under consideration:
\begin{enumerate}
    \item $S_- =\{ i  \text{ for which } \mathcal{A}(Z_i) < \mathcal{A}({V}) \}$,
    \item $S_+= \{ i \text{ for which } \mathcal{A}(Z_i) > \mathcal{A}({V}) \}$,
    \item $S_= = \{ i \text{ for which } \mathcal{A}(Z_i) = \mathcal{A}({V}) \}$.
\end{enumerate}

Note that since $\mathcal{A}(V) \in (\alpha_{\min}, \alpha_{\max})$, $S_-$ and $S_+$ are both non-empty. For any $i \in S_- \cup S_+$ we label by $\gamma_i$ the gradient of the cost vs target along the vertex vector from $V$ to $Z_i$, i.e.,
\begin{align}
    \gamma_i &= \frac{ {\bf E} \cdot (Z_i - V)}{ {\bf a} \cdot (Z_i - V)}.
\end{align}

We label by $\gamma^+_{\min}$ the minimum value of $\gamma_i$ for $i \in \mathcal{S}_+$, i.e.,
$    \gamma_{\min}^+ = \min_{i \in \mathcal{S}_+} \gamma_i$
is the minimum gradient along vertex vectors that increase the value of the target; analogously we label by $\gamma^-_{\max}$ the maximum value of $\gamma_i$ for $ i \in \mathcal{S}_-$, i.e.,
  $  \gamma_{\max}^-= \max_{i \in \mathcal{S}_-} \gamma_i.$
We now prove
\begin{lemma}
    $\gamma^+_{\min} \geq \gamma^-_{\max}$\footnote{Note the above statement is closely related to the convexity of the polytope and of $\omega_{\text{opt}}(\alpha)$, see Fig. \ref{fig:gradients}.}.
\end{lemma}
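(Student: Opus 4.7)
The plan is to argue by contradiction, exploiting the fact that $V$ lies on the optimal trajectory, so that $\mathcal{E}(V)=\omega_{\text{opt}}(\mathcal{A}(V))$.

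First I would assume $\gamma_{\min}^+ < \gamma_{\max}^-$ and pick vertices $Z_{i_+}$ and $Z_{i_-}$ (with $i_+\in\mathcal{S}_+,\ i_-\in\mathcal{S}_-$) achieving the two extrema, so that $\gamma_{i_+} < \gamma_{i_-}$. Setting $\Delta\alpha_\pm := \mathcal{A}(Z_{i_\pm}) - \mathcal{A}(V)$, we have $\Delta\alpha_+>0>\Delta\alpha_-$. The key construction is the convex combination
\begin{equation*}
    \mathbf{p}' \;=\; \lambda\, Z_{i_+} + (1-\lambda)\, Z_{i_-}, \qquad \lambda \;=\; \frac{-\Delta\alpha_-}{\Delta\alpha_+ - \Delta\alpha_-} \;\in\; (0,1),
\end{equation*}
which lies in $\mathcal{P}$ by convexity, and whose $\lambda$ is tuned precisely so that $\mathcal{A}(\mathbf{p}')=\mathcal{A}(V)$.

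Then I would compute $\mathcal{E}(\mathbf{p}')$ by linearity, obtaining
\begin{equation*}
    \mathcal{E}(\mathbf{p}') - \mathcal{E}(V) \;=\; \lambda\,\gamma_{i_+}\,\Delta\alpha_+ \,+\, (1-\lambda)\,\gamma_{i_-}\,\Delta\alpha_-.
\end{equation*}
The balance relation $\lambda\Delta\alpha_+ = -(1-\lambda)\Delta\alpha_-$ built into the choice of $\lambda$ collapses this to $\lambda\,\Delta\alpha_+\,(\gamma_{i_+}-\gamma_{i_-})$, which is strictly negative under the contradiction hypothesis. Thus $\mathbf{p}'\in\mathcal{P}$ satisfies $\mathcal{A}(\mathbf{p}')=\mathcal{A}(V)$ and $\mathcal{E}(\mathbf{p}')<\mathcal{E}(V)=\omega_{\text{opt}}(\mathcal{A}(V))$, contradicting the definition of $\omega_{\text{opt}}$ at $\mathcal{A}(V)$.

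No step should be a genuine obstacle; the only care required is keeping signs straight given $\Delta\alpha_-<0$. A more abstract alternative would be to invoke Lemma~\ref{lem:optconvex}: by convexity of $\omega_{\text{opt}}$ the chord slopes $\gamma_{i_\pm}$ bound the one-sided derivatives of $\omega_{\text{opt}}$ at $\mathcal{A}(V)$ from the appropriate sides (using $\mathcal{E}(Z_{i_\pm})\geq\omega_{\text{opt}}(\mathcal{A}(Z_{i_\pm}))$ and $\mathcal{E}(V)=\omega_{\text{opt}}(\mathcal{A}(V))$, with the inequality flipping when dividing by $\Delta\alpha_-<0$), and those one-sided derivatives are themselves ordered by convexity. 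I would nonetheless favor the explicit construction above because it is self-contained and directly matches the geometric picture of Fig.~\ref{fig:gradients}, in which $\mathbf{p}'$ is simply the point on the chord joining $Z_{i_-}$ and $Z_{i_+}$ that sits directly below $V$ in the induced polytope.
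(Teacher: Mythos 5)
Your proposal is correct and is essentially the paper's own argument: the paper also assumes $\gamma^+_{\min}<\gamma^-_{\max}$, forms the point ${\bf p}_1 = t\,Z_+ + (1-t)\,Z_-$ with $t = \frac{{\bf a}\cdot(V-Z_-)}{{\bf a}\cdot(Z_+-Z_-)}$ (algebraically identical to your $\lambda$), and shows ${\bf E}\cdot({\bf p}_1-V)=\eta\,(\gamma^+_{\min}-\gamma^-_{\max})<0$ with $\eta>0$, contradicting the optimality of $V$. Your simplification $\lambda\,\Delta\alpha_+(\gamma_{i_+}-\gamma_{i_-})$ coincides exactly with the paper's $\eta\,(\gamma^+_{\min}-\gamma^-_{\max})$, so there is nothing to add.
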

\begin{proof}
This follows from the optimality of $V$. The proof is by contradiction. To this end, assume $\gamma^+_{min}  <  \gamma^-_{max}$. Then pick vertices $Z_+ \in \{ Z_i \mid i \in \mathcal{S}_+ \}$ and $Z_- \in \{ Z_i \mid i \in \mathcal{S}_-\}$ whose vertex vectors from $V$ have the gradient $\gamma^+_{min}$ and $\gamma^-_{max}$ respectively. Next pick the point ${\bf p}_1 \in \mathcal{P}$ on the line segment between $Z_+$ and $Z_-$ for which the target is the same as that of $V$. This is done by choosing
\begin{equation}
{\bf p}_1 = t Z_{+} + (1-t) Z_{-}
\hspace{1cm}\text{with}\hspace{1cm}
    t=\frac{{\bf a} \cdot (V-Z_{-})}{ {\bf a} \cdot (Z_{+}-Z_{-})} \in (0,1).
\end{equation}
One readily checks that indeed $\mathcal{A}({\bf p}_1) = \mathcal{A}(V)$. The work cost of the transformation from $V$ to ${\bf p}_1$ is the following,
\begin{equation}
\label{equ:work_cost_diff}
\begin{aligned}
    { \bf E} \cdot ({\bf p}_1 - V) &= t \,{\bf E} \cdot (Z_+ - V) + (1-t)\, {\bf E} \cdot (Z_- - V)  \\
        &= t  \; {\bf a} \cdot (Z_+ - V)  \; \gamma^+_{min} + (1-t)  \; {\bf a} \cdot (Z_- - V) \; \gamma^-_{max} \\
        &= \eta \; \left( \gamma^+_{min} - \gamma^-_{max} \right),
\end{aligned}
\end{equation}
where
\begin{equation}
    \eta =\frac{\left[ {\bf a} \cdot (V-Z_-) \right] \left[ {\bf a} \cdot (Z_+ - V) \right] }{ {\bf a} \cdot \left(Z_+ - Z_-\right) } >0.
\end{equation}

Eq.~\ref{equ:work_cost_diff} is therefore negative for $\gamma^+_{min} < \gamma^-_{max}$. But this implies that ${\bf p}_1$ has the same value of the target as $V$ but a lower work cost, contradicting the assumption that $V$ is optimal, i.e., that it is a solution of Eq.~\ref{equ:main_question_pop}.
\end{proof}

Next, we minimise the gradient of the work cost in the transformation from $V$ to some (arbitrary) point ${\bf p
} \in \mathcal{P}$ satisfying $\mathcal{A}({\bf p}) > \mathcal{A}({V})$\footnote{Such a point ${\bf p}$ is guaranteed to exist since $\mathcal{A}(V) \neq \alpha_{\max}$.}, i.e. for any transformation increasing the value of the target function. We express the vector ${\bf p}-V$ w.r.t. the vertex vector of edges at $V$ as in Lemma~\ref{lemma:edgesminimal},
\begin{align}
    {\bf p} - V &= \sum_i r_i \; \left(Z_i - V \right),
\end{align}
with $r_i \geq 0$. 
The change in the target and cost are respectively:

\begin{align} \label{equ:Deltaalpha}
    \Delta \alpha 
        &= \sum_{i \in S_-} r_i \; {\bf a} \cdot (Z_i - V) + \sum_{i \in S_+} r_i \; {\bf a} \cdot (Z_i - V),
\end{align}
\begin{align}\label{equ:Deltaepsilon}
    \Delta \epsilon 
        &= \sum_{i \in S_-} r_i \; {\bf a} \cdot (Z_i - V) \; \gamma_i + \sum_{i \in S_+} r_i \; {\bf a} \cdot (Z_i - V) \; \gamma_i + \sum_{i \in S_=} r_i \; {\bf E} \cdot (Z_i - V).
\end{align}
The first term in the cost expression can be bounded from below using $\gamma^-_{max}$ as
\begin{equation}
    \sum_{i \in \mathcal{S}_-} r_i \; {\bf a} \cdot (Z_i - V) \; \gamma_i  \geq -b_1 \, \gamma^-_{max}
\hspace{1cm}\text{with}\hspace{1cm}
    b_1 =  \sum_{i \in \mathcal{S}_-} r_i \; {\bf a} \cdot (V-Z_i) \geq 0.
\end{equation}
The direction of the inequality follows from $r_i \geq 0$ together with ${\bf a} \cdot (Z_i - V) < 0$ and $\gamma_i \leq \gamma^-_{max}$ for $i \in \mathcal{S}_-$. Similarly, the second term in the cost is bounded from below by
\begin{equation}
    \sum_{i \in \mathcal{S}_+} r_i \; {\bf a} \cdot (Z_i - V) \; \gamma_i \geq b_2 \, \gamma^+_{min} 
    \hspace{1cm}\text{with}\hspace{1cm}
    b_2= \sum_{i \in \mathcal{S}_+} r_i \; {\bf a} \cdot (Z_i - V) \geq 0 .
\end{equation}
Finally, the third term of the cost must be non-negative since for each vertex $Z_i \in \{ Z_i \mid i \in \mathcal{S}_=\}$ by definition $\mathcal{A}(Z_i)=\mathcal{A}({\bf p})$, and so using the optimality of $V$, i.e., that it is a solution of Eq.~\ref{equ:main_question_pop} for $\alpha= \mathcal{A}({\bf p})$,
\begin{equation}
  {\bf E} \cdot Z_i \geq { \bf E} \cdot V
\end{equation}
must hold. We will refer to the third term as $\delta_+$. We therefore have
 $\delta_+ \geq 0$
and
\begin{align}\label{equ:gradientEA}
    \left.\frac{\Delta \epsilon}{\Delta \alpha} \right \rvert_V ({\bf p}_>) &\geq \frac{-b_1 \gamma^-_{max} + b_2 \gamma^+_{min} + \delta_+}{-b_1 + b_2},
\end{align}
where $b_1,b_2,\delta_+ \geq 0$ and $\Delta \alpha = b_2 - b_1 >0$. By rearranging the right-hand side of Eq.~\ref{equ:gradientEA} we get
\begin{align}
    \left.\frac{\Delta \epsilon}{\Delta \alpha} \right \rvert_V ({\bf p}_>)&\geq \gamma_{\min}^+ + \frac{b_1}{b_2-b_1} \left( \gamma_{\min}^+-\gamma_{\max}^- \right) + \frac{\delta_+}{b_2-b_1}\geq \gamma_{\min}^+,
\end{align}
where the last inequality holds since
\begin{equation}
    \frac{b_1}{b_2-b_1} \left( \gamma_{\min}^+ -\gamma_{\max}^- \right) \geq 0 \hspace{1cm }\text{and} \hspace{1cm}
    \frac{\delta_+}{b_2-b_1} \geq 0.
\end{equation}
Importantly by definition of $\gamma_{\min}^+$ there exists an $i \in S_+$ for which $\gamma_i = \gamma_{\min}^+$. In partiular, choosing
\begin{equation} \label{equ:minimalp}
    {\bf p} = r \, (Z_i - V) + V, \quad r \in (0,1],
\end{equation}
we get $\frac{\Delta \epsilon}{\Delta \alpha} = \gamma_{\min}^+$.

We now come back to the case $\mathcal{A}({\bf p})= \alpha_{\min}$. In that case $S_- = \emptyset$. This implies that $\gamma_{\max}^-$ is undefined. One can nevertheless easily lower bound $\frac{\Delta \epsilon}{\Delta \alpha}$ by noticing that the first term of $\Delta \alpha$, resp. $\Delta \epsilon$, in Eq.~\ref{equ:Deltaalpha}, resp. Eq.~\ref{equ:Deltaepsilon}, vanishes. And so Eq.~\ref{equ:gradientEA} simplifies to
\begin{equation}
   \left.\frac{\Delta \epsilon}{\Delta \alpha} \right \rvert_V ({\bf p}_>) \geq \gamma_{\min}^+ + \frac{\delta_+}{b_2} \geq \gamma_{\min}^+,
\end{equation}
where the minimum $\gamma_{\min}^+$ is again achieved for any ${\bf p}$ as in Eq.~\ref{equ:minimalp}. This concludes the proof for the lower bound.

For the upper bound part of the statement, one repeats the above process for the case $\Delta \alpha < 0$. In this case the gradient of $\epsilon$ vs $\alpha$ from the vertex $V$ is \textit{upper bounded by} $\gamma^-_{max}$ and the upper bound is attained for points
\begin{equation}
    {\bf p} = r \, (Z_i - V) + V, \quad r \in (0,1],
\end{equation}
with $i \in S_-$ such that $\gamma_i = \gamma_{\max}^-$. To do so, the analysis is identical until Eq~\ref{equ:gradientEA}. However, now the sign of Eq.~\ref{equ:gradientEA} is reverted due to $\Delta \alpha <0$ so that we have
\begin{equation}
     \left.\frac{\Delta \epsilon}{\Delta \alpha} \right \rvert_V ({\bf p}_<) \leq\frac{-b_1 \gamma^-_{max} + b_2 \gamma^+_{min} + \delta_+}{-b_1 + b_2} = \gamma_{\max}^- - \frac{b_2}{b_1-b_2} \left( \gamma_{\min}^+ - \gamma_{\max}^- \right) - \delta_+ \leq \gamma_{\max}^-.
\end{equation}

\end{document}